\documentclass[10pt,aps,prl,twocolumn,letterpaper,nobibnotes,
superscriptaddress,longbibliography,nofootinbib]{revtex4-2}

\usepackage{amsmath,amssymb,mathrsfs,mathtools,bm}
\usepackage{graphicx}
\usepackage{natbib}
\usepackage[dvipsnames]{xcolor}
\usepackage[unicode=true,bookmarks=true,breaklinks=true,
pdfborder={0 0 0},colorlinks=false,hyperfootnotes=false]{hyperref}
\hypersetup{linkcolor=magenta,urlcolor=blue,citecolor=blue,
pdfstartview={FitH}}

\newtheorem{theorem}{Theorem}
\newtheorem{lemma}[theorem]{Lemma}

\newtheorem{smcorollary}[theorem]{Corollary}
\newtheorem{proposition}[theorem]{Proposition}
\newcommand{\Tr}{\operatorname{Tr}}
\newcommand{\norm}[1]{\left\lVert#1\right\rVert}
\newcommand{\ii}{\mathrm i}
\newcommand{\im}{\mathrm i}
\newcommand{\e}{\mathrm e}

\begin{document}

\title{Theory of the Edge-of-Chaos Advantage in Quantum Reservoir Computing}

\author{Shuaifan Cao}
\affiliation{State Key Laboratory of Surface Physics, Institute of Nanoelectronics and Quantum Computing, and Department of Physics, Fudan University, Shanghai 200433, China}
\affiliation{Shanghai Qi Zhi Institute, AI Tower, Xuhui District, Shanghai 200232, China}

\author{Wei Xia}
\affiliation{Department of Physics and The Hong Kong Institute of Quantum Information Science and Technology, The Chinese University of Hong Kong, Shatin, New Territories, Hong Kong, China}

\author{Xiaopeng Li}
\email{xiaopeng\underline{ }li@fudan.edu.cn}
\affiliation{State Key Laboratory of Surface Physics, Institute of Nanoelectronics and Quantum Computing, and Department of Physics, Fudan University, Shanghai 200433, China}
\affiliation{Shanghai Qi Zhi Institute, AI Tower, Xuhui District, Shanghai 200232, China}
\affiliation{Shanghai Research Center for Quantum Sciences, Shanghai 201315, China}
\affiliation{Hefei National Laboratory, Hefei 230088, China}

\begin{abstract}
Numerical studies suggest that quantum reservoir computing performs optimally near the
edge of chaos, a behavior commonly attributed to a balance between memory retention and
nonlinear information processing. Here we develop an explicit quantum reservoir model
and rigorously analyze its chaotic regime and integrable point. In the chaotic regime,
we show that Gaussian-unitary-ensemble statistics lead to rapid memory loss: learning
performance decays exponentially with the storage time required by a temporal task. This
decay limits the ability of chaotic reservoirs to learn long-range temporal
dependencies. At the integrable point, described by a free-fermion model, we prove that
amplitude encoding and a linear readout of fermionic two-point correlations cannot learn
a representative class of nonlinear tasks coupling present and past inputs. Introducing
interactions alleviates this obstruction. By identifying distinct limitations on memory
retention in the chaotic regime and nonlinear processing at the integrable point, our
analysis provides a theoretical basis for the observed edge-of-chaos advantage and
guides the design of quantum reservoirs for temporal learning.
\end{abstract}

\maketitle

\paragraph*{Introduction.---}
Quantum machine learning (QML) has emerged as a promising frontier at the
intersection of quantum information science and artificial
intelligence~\cite{biamonte2017qml,schuld2015qml,dunjko2017qml,cerezo2022qml}.
By exploiting quantum phenomena such as superposition, entanglement, and
high-dimensional Hilbert spaces~\cite{schuld2019qml,havlicek2019qml,cong2019qcnn,jerbi2023qml}, QML provides powerful approaches
to complex data processing and pattern recognition.

Quantum reservoir computing (QRC), a quantum extension of classical reservoir
computing~\cite{jaeger2001esn,maass2002lsm,jaeger2004esn,lukosevicius2009rc,dambre2012ipc,tanaka2019rc}, is a QML framework for temporal information
processing.  It encodes input histories in a fixed quantum many-body system
and trains only a classical linear readout~\cite{fujii2017qrc,nakajima2019qrc,kutvonen2020qrc,nokkala2021gaussian,fujii2020qrc,mujal2021qrc,pena2023quantum},
thereby avoiding the barren-plateau problem associated with variational
quantum-circuit training~\cite{mcClean2018bp,cerezo2021bp,larocca2025bp}.
Its potential for nonlinear processing, memory, and temporal prediction has
been demonstrated theoretically and experimentally across diverse quantum
platforms~\cite{ghosh2019qrp,chen2020temporal,govia2021oscillator,bravo2022quantum,suzuki2022natural,martinezpena2023ipc,llodra2023particle,mujal2023measurements,domingo2023taking,kubota2023temporal,sannia2024dissipation,palacios2024coherence,cindrak2024memory,kobayashi2024feedback,senanian2024microwave,hou2026high,paparelle2026optical}.

In classical reservoir computing, reservoir performance on demanding temporal
tasks is often poor in both strongly ordered and strongly chaotic regimes, but
is enhanced near the transition between them---the ``edge of chaos''
(EOC)~\cite{langton1990eoc,bertschinger2004eoc,legenstein2007eoc,boedecker2012eoc,natschlaeger2005eoc,schrauwen2008eoc,carroll2020eoc}.  Numerical studies of QRC have found the same qualitative
pattern near localization--thermalization transitions and boundaries marking
the onset of many-body quantum
chaos~\cite{Pena2021q_eoc,xia2022q_eoc,kobayashi2026q_eoc}.  Performance
degrades toward either dynamical extreme and peaks in the crossover for tasks
requiring both memory and nonlinear processing.

This behavior is commonly attributed to a memory--expressivity tradeoff.
Regular dynamics can preserve past inputs but may fail to make the required
nonlinear temporal features accessible to the readout, whereas strongly chaotic
dynamics can generate rich transformations but rapidly erase the relevant
memory.  However, this understanding remains empirical and lacks a rigorous
theoretical foundation.

In this Letter, we analyze distinct limitations on memory retention and
nonlinear processing in an explicit model of QRC.  We first study the chaotic
regime using Hamiltonians drawn from the Gaussian unitary ensemble
(GUE)~\cite{cotler2017chaos_designs}.  For large reservoirs and sufficiently
long evolution intervals between inputs, an effective decorrelation condition
yields an exponentially decaying upper bound on ensemble-averaged memory
and, for a class of temporal targets, on learning performance as the input
delay increases.  This physical condition is verified numerically.
We expect this memory limitation to be lifted near the edge of chaos,
where the reservoir dynamics depart from the GUE description.
At the opposite limit, we study an integrable free-fermion reservoir.
Under amplitude encoding, a linear readout of fermionic two-point correlations
cannot learn a class of nonlinear tasks coupling present and past inputs.
Perturbation theory shows that weak quartic interactions can lift this
obstruction.
Guided by these results, we construct a family of tasks that probes both
limitations.  Numerical tests across an integrable-to-chaotic crossover
show an approximately exponential decrease of performance with delay in the
chaotic regime, vanishing performance at the free endpoint, and an intermediate
performance peak.  These results provide a theoretical basis for the observed edge-of-chaos
advantage and guide quantum reservoir design.

\paragraph*{Quantum reservoir computing.---}
A quantum reservoir computer processes a temporal input sequence by
alternating between state resets and unitary dynamics.
Let $\rho_{k-1}$ be the reservoir state after step $k-1$.
At step $k$, a single-qubit subsystem $A$ is reset to encode the scalar
input $u_k \in [0,1]$ via amplitude encoding:
\begin{equation}
 \rho_{\rm in}(u_k) = |\psi_k\rangle\langle\psi_k|, \quad
 |\psi_k\rangle = \sqrt{1-u_k}|0\rangle + \sqrt{u_k}|1\rangle.
 \label{eq:v2-input-encoding}
\end{equation}
The remaining reservoir $\bar{A}$ retains memory from previous steps. 
Evolving the reservoir under a fixed Hamiltonian $H$ for duration $t$
yields the updated state
\begin{equation}
 \rho_k = U\!\left[\rho_{\rm in}(u_k)\otimes \Tr_A(\rho_{k-1})\right]U^\dagger,
 \quad U = e^{-\ii Ht}.
 \label{eq:v2-reset}
\end{equation}
Here $\Tr_A$ denotes the partial trace over $A$, and
$\Tr_A(\rho_{k-1})$ is the memory state retained from the preceding steps.
A linear readout is trained to estimate the target $y_k$ as
$\widehat y_k = b + \bm w^{\mathsf T}\bm x_k$, where features
$\bm x_k := (x_{k,1},\ldots,x_{k,m})^{\mathsf T}$ are obtained from
orthonormal observables $\{O_i\}_{i=1}^m$ via $x_{k,i} = \Tr(O_i\rho_k)$
with $\Tr(O_iO_j)=\delta_{ij}$.

Under the echo-state property (ESP)~\cite{pena2023quantum}, we consider 
independent and identically distributed (i.i.d.) inputs, which 
define a unique stationary process for the reservoir state
sequence $\{\rho_k\}$.
For time-translation-invariant targets, the joint process $(\bm x_k, y_k)$
is also stationary.  With $U$ fixed, $\mathbb E_{\rm in}$ denotes expectation
over the stationary input process.
We define the feature covariance matrix $\Sigma := \operatorname{Cov}(\bm x_k,\bm x_k)$,
feature--target covariance $\bm c := \operatorname{Cov}(\bm x_k,y_k)$, and target
variance $\sigma_y^2 := \operatorname{Var}(y_k) > 0$.
Using ridge regression~\cite{sup_hoerl1970ridge} with regularization parameter $\lambda>0$, we define
$K_\lambda := (\Sigma+\lambda I)^{-1}$, where $I$ is the identity matrix in
feature space.  The population-level performance is then given by~\cite{supplemental}
\begin{equation}
 P_\lambda := 1-\mathrm{NMSE}
 = \frac{\bm c^{\mathsf T}(K_\lambda+\lambda K_\lambda^2)\bm c}{\sigma_y^2},
 \label{eq:v2-gain}
\end{equation}
where $\mathrm{NMSE} := \mathbb E_{\rm in}[(y_k-\widehat y_k)^2]/\sigma_y^2$
is the normalized mean-squared error.

\begin{figure*}[t]
\centering
\includegraphics[width=0.98\textwidth]{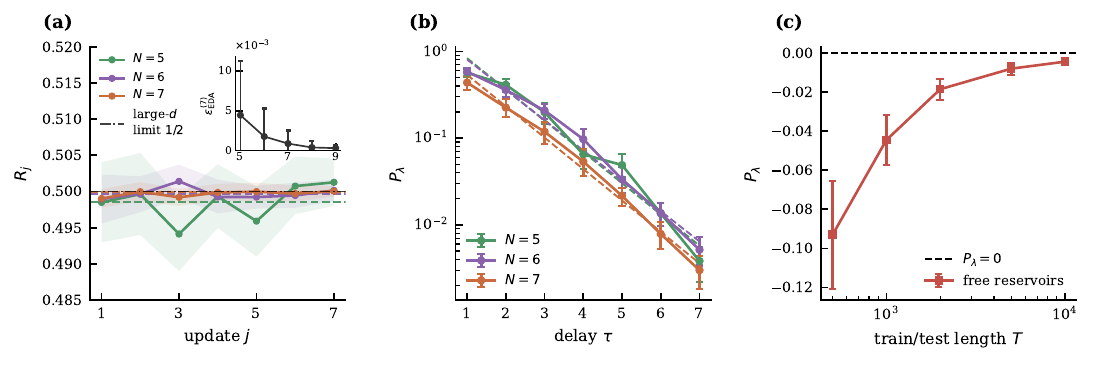}
\caption{\textbf{Numerical tests of the theoretical predictions.}
(a) \textit{Finite-window test of the EDA.}  For two runs that differ in a
single input, solid curves show direct finite-sample estimates of
$R_j=\mathbb E_{\rm joint}\|\Delta_j\|_2^2/
\mathbb E_{\rm joint}\|\Delta_{j-1}\|_2^2$ for GUE reservoirs
of $N=5,6,7$ qubits.  Colored dashed lines show
$\bar q_{t,N}$, the estimate of $\bar q_t$ for a reservoir of
$N$ qubits, obtained from the same Hamiltonian samples.  The black
dash-dotted line marks the large-$d$ limit $1/2$.  The inset shows
$\varepsilon_{{\rm EDA},N}^{(7)}
=\max_{1\leq j\leq7}|R_{j,N}-\bar q_{t,N}|$
for $N=5,\ldots,9$.
(b) \textit{Memory decay.}  Estimated performance
$P_\lambda$ for the delayed-memory task
$y_k=u_{k-\tau}$.  Dashed curves are exponential fits over
$1\leq\tau\leq7$.
(c) \textit{Free-reservoir obstruction.}  For five-qubit free reservoirs and
$y_k=(u_k-1/2)(u_{k-\tau}-1/2)$ at $\tau=1$, the estimated performance
$P_\lambda$ approaches the exact no-go value zero as the equal training
and test lengths increase.
Finite-sample estimates can be negative.
Error bars show one standard deviation across Hamiltonian realizations in
(c).  Bands and error bars in (a,b) show $95\%$ bootstrap confidence
intervals.  }
\label{fig:v2-endpoints}
\end{figure*}

\paragraph*{Chaotic regime: rapid exponential forgetting.---}
To measure how much information about a past input remains in the reservoir,
we compare two runs with the same fixed $U$ whose input sequences differ only
at step $k-\tau$.  At that step, the first run receives $u_{k-\tau}$, while
the second receives a new random number $u'_{k-\tau}$.  We draw $u'_{k-\tau}$ from the input
distribution, independently of $U$ and all inputs in the first run.
Denote their states at step $k$ by $\rho_{k,U}$ and
$\rho_{k,U}^{(k-\tau)}$, respectively.  With $U$ fixed,
$\mathbb E_{\mathrm{in}'}$ averages over the stationary input history and
$u'_{k-\tau}$.  We define the surviving influence of the information $\tau$ steps ago as
\begin{equation}
 \mathcal I_\tau(U)=\frac12\mathbb E_{\mathrm{in}'}
 \norm{\rho_{k,U}-\rho_{k,U}^{(k-\tau)}}_2^2 .
 \label{eq:v2-influence}
\end{equation}
Here $\|B\|_2^2:=\Tr(B^\dagger B)$.
For Hamiltonians $H$ drawn from an ensemble, let $\nu_t$ denote the
distribution of $U=\exp(-\ii Ht)$ at fixed $t$.
We write the ensemble-averaged influence as $\overline{\mathcal I}_\tau
:=\mathbb E_{U\sim\nu_t}\mathcal I_\tau(U)$.

To follow this influence from step $k-\tau$ through subsequent cycles of reset and evolution,
let $\Delta_j:=\Tr_A[
\rho_{k-\tau+j,U}-\rho_{k-\tau+j,U}^{(k-\tau)}]$ be the memory-state
difference of two runs at step $k-\tau+j$, for $j=0,\ldots,\tau-1$.
Thus $\Delta_0$ is the difference just after
processing $u_{k-\tau}$ and $u'_{k-\tau}$, and each increase of $j$
corresponds to one cycle with the same subsequent input in both runs.
These differences are traceless Hermitian operators on $\bar A$.
Before the final cycle, the memory difference is $\Delta_{\tau-1}$ and the influence is therefore
$\mathcal I_\tau(U)=\tfrac12\mathbb E_{\mathrm{in}'}\|\Delta_{\tau-1}\|_2^2$.

{
The evolution of the memory difference of the two runs is described by a linear map $\Phi_{u,U}(\Delta):=\Tr_A[U(\rho_{\rm in}(u)\otimes\Delta)U^\dagger]$. }
For a fixed $U$ and a nonzero memory difference $\Delta$, define the 
retention factor per step as
\begin{equation}
 q_U(\Delta):=
 \frac{\mathbb E_u\|\Phi_{u,U}(\Delta)\|_2^2}{\|\Delta\|_2^2}.
 \label{eq:v2-one-step-q}
\end{equation}
Here $\mathbb E_u$ averages over $u$, with $U$ and $\Delta$ fixed.
{
Thus $q_U$ measures the mean
ratio of the squared differences following one step. }
Writing
$\mathbb E_{\rm joint}:=\mathbb E_{U\sim\nu_t}\mathbb E_{\mathrm{in}'}$
for expectation over unitary draws, input histories, and $u'_{k-\tau}$,
we have the exact relation
$\mathbb E_{\rm joint}\|\Delta_{j+1}\|_2^2
=\mathbb E_{\rm joint}[\|\Delta_j\|_2^2q_U(\Delta_j)]$.
This relation shows how the mean squared difference evolves through one cycle.
We further define $\bar q_t:=\mathbb E_{U\sim\nu_t}[q_U]$, holding its argument
fixed while averaging the function over $U$.  

{
Because the same $U$ generates $\Delta_j$, they are generally correlated. }
However, for the chaotic reservoirs of interest here, unitary evolution spreads the
influence of past inputs over many degrees of freedom, while each reset
discards the old state of $A$ and removes its correlations with $\bar A$.
This repeated spreading and loss suggests that correlations between $U$ and $\Delta_j$ 
may have little effect in a large reservoir.  
{
The value $\bar q_t(\Delta_j)$ gives the mean retention assuming the correlation between $U$ and $\Delta_j$ is negligible. 
Quantitatively, this assumption relies on 
\begin{equation}
 \frac{\left|
 \mathbb E_{\rm joint}[\|\Delta_j\|_2^2q_U(\Delta_j)]
 -\mathbb E_{\rm joint}[\|\Delta_j\|_2^2\bar q_t(\Delta_j)]
 \right|}
 {\mathbb E_{\rm joint}\|\Delta_j\|_2^2}
 \leq\varepsilon_{\rm EDA},
 \label{eq:v2-eda}
\end{equation}
for every step $j=0,\ldots,\tau-2$.  We refer to this condition as the
effective decorrelation approximation (EDA). }
Here $\varepsilon_{\rm EDA}\geq0$ tends to zero as the memory dimension
$d_{\bar A}:=\dim\mathcal H_{\bar A}$ tends to infinity, i.e., $\varepsilon_{\rm EDA}=o(1)$.

To estimate memory retention under the EDA, we evaluate $\bar q_t(\Delta)$ for
Hamiltonians drawn from the normalized GUE~\cite{cotler2017chaos_designs}, a standard random-matrix model
of fully chaotic dynamics.  GUE conjugation invariance implies that $\bar q_t(\Delta)$
is independent of the nonzero memory difference $\Delta$.  We denote this common
value by $\bar q_t$, which depends on the evolution time $t$ and the reservoir size.
We consider evolution times $t\geq\sqrt d$, where $d=2^N$
is the reservoir Hilbert-space dimension.
Calculations using the GUE spectral form factors give
$\bar q_t=1/2+o(1)$.
Under the EDA, the factor
$\alpha_t:=\bar q_t+\varepsilon_{\rm EDA}$ bounds the ratio of the mean
squared memory differences following one step.  Thus
$\alpha_t=1/2+o(1)<1$.
The derivation is given in Supplemental Material~\cite{supplemental}.

In the numerical results, we use the same notation for quantities
and their finite-sample estimates.
We numerically validate the EDA for normalized GUE reservoirs at
evolution times $t\geq\sqrt d$ in Fig.~\ref{fig:v2-endpoints}(a).
We compare the retention ratio
$R_j:=\mathbb E_{\rm joint}\|\Delta_j\|_2^2/
\mathbb E_{\rm joint}\|\Delta_{j-1}\|_2^2$ with the mean retention factor
$\bar q_t$.  For GUE reservoirs, the EDA reduces to
$|R_j-\bar q_t|\leq\varepsilon_{\rm EDA}$.
Their estimated discrepancy decreases as the
reservoir size increases from $N=5$ to $9$.  The estimated values of $\alpha_t$ remain below $0.504$.
In extended tests with $N=6,7,8$, $R_j$ remains close to $1/2$ through 100 updates
(see Supplemental Material~\cite{supplemental}).

To connect this memory loss to task performance, we consider targets for
which information about the past input $u_{k-\tau}$ is essential for
prediction.  We call $y_k$ a lag-$\tau$ innovation target if
$\mathbb E_{\rm in}[y_k\mid u_j:j\leq k,\ j\neq k-\tau]
=\mathbb E_{\rm in}y_k$, which means that the best prediction based on 
the other inputs alone is merely the target mean.

\begin{theorem}[GUE forgetting under the EDA]
\label{thm:v2-gue-forgetting}
Let $\tau\geq1$ be an integer and $y_k$ a lag-$\tau$ innovation target.
For GUE reservoirs with evolution time $t\geq\sqrt d$, if the EDA holds for
$j=0,\ldots,\tau-2$ (an empty condition for $\tau=1$), then
\begin{equation}
 \mathbb E_{U\sim\nu_t}P_\lambda(U)
 \leq\frac{2\overline{\mathcal I}_1}{\lambda}
 \left[\frac{1}{2}+o(1)\right]^{\tau-1},
 \label{eq:v2-gue-forgetting}
\end{equation}
where $o(1)$ tends to zero as $d_{\bar A}\to\infty$.
The performance bound decays exponentially with lag over the range covered
by the EDA.
\end{theorem}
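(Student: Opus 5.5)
The argument has two independent parts, which are then multiplied together. First, for each fixed $U$, we bound the performance $P_\lambda(U)$ by a multiple of the influence $\mathcal I_\tau(U)$; this uses the innovation property of the target. Second, we average over $U\sim\nu_t$ and iterate the EDA to bound $\overline{\mathcal I}_\tau$ by $\overline{\mathcal I}_1$ times a geometric factor.

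\textbf{Step 1: performance versus influence.} From Eq.~\eqref{eq:v2-gain}, using $K_\lambda\preceq \lambda^{-1}I$ and $K_\lambda+\lambda K_\lambda^2\preceq 2K_\lambda$, we get
\[
P_\lambda\le 2\|\bm c\|^2/(\lambda\sigma_y^2).
\]
(A tighter constant is possible via $\bm c^{\mathsf T}(K_\lambda+\lambda K_\lambda^2)\bm c\le \bm c^{\mathsf T}K_\lambda\bm c+\lambda\|K_\lambda\bm c\|^2$, but the factor $2/\lambda$ is what the statement requires.)

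Next we control $\bm c$. Let $\mathcal F$ be the $\sigma$-algebra of the inputs other than $u_{k-\tau}$. Then $\rho^{(k-\tau)}_{k,U}$ is $\mathcal F$-measurable after averaging over the independent copy $u'_{k-\tau}$. More precisely, $\mathbb E[x_{k,i}\mid\mathcal F]=\mathbb E_{u'}\Tr(O_i\rho^{(k-\tau)}_{k,U})$, because $u'$ has the same law as $u_{k-\tau}$ and is independent of everything else. By the innovation property, $\mathbb E[y_k\mid\mathcal F]=\mathbb E y_k$. Hence
\[
c_i=\operatorname{Cov}\bigl(x_{k,i}-\mathbb E[x_{k,i}\mid\mathcal F],\,y_k\bigr),
\]
and Cauchy--Schwarz together with Jensen gives $|c_i|^2\le \sigma_y^2\,\mathbb E\bigl(\Tr O_i(\rho_{k,U}-\rho^{(k-\tau)}_{k,U})\bigr)^2$. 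We should not bound each $c_i$ separately and then sum over $i$. Instead we write $\|\bm c\|^2=\operatorname{Cov}(\langle\bm c,\bm x_k\rangle-\cdot,\,y_k)$ along the unit direction $\bm c/\|\bm c\|$. Orthonormality of the $O_i$ gives $\sum_i(\Tr O_iB)^2\le\|B\|_2^2$, so
\[
\|\bm c\|^2\le\sigma_y^2\,\mathbb E_{\rm in'}\|\rho_{k,U}-\rho^{(k-\tau)}_{k,U}\|_2^2=2\sigma_y^2\mathcal I_\tau(U).
\]
Combining,
\[
P_\lambda(U)\le 2\|\bm c\|^2/(\lambda\sigma_y^2)\le 4\mathcal I_\tau(U)/\lambda.
\]
This is off by a factor of $2$ from the target, so a sharper first inequality is needed: we will use $K_\lambda+\lambda K_\lambda^2\preceq K_\lambda(I+\lambda K_\lambda)$ and $\|K_\lambda\|\le\lambda^{-1}$. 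If this constant cannot be recovered, the fallback is to check whether the intended comparison involves $\Delta_{\tau-1}$ before the last partial trace and the final unitary. Unitarity and $\|\rho_{\rm in}\otimes\Delta\|_2=\|\Delta\|_2$ give $\|\rho_k-\rho_k^{(k-\tau)}\|_2^2=\|\Delta_{\tau-1}\|_2^2$, which matches the identity $\mathcal I_\tau=\tfrac12\mathbb E\|\Delta_{\tau-1}\|_2^2$ stated earlier. Fixing the constant is bookkeeping.

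\textbf{Step 2: iterating the EDA.} Averaging over $U$ gives
\[
\mathbb E_U P_\lambda\le (\text{const}/\lambda)\,\overline{\mathcal I}_\tau=(\text{const}/2\lambda)\,\mathbb E_{\rm joint}\|\Delta_{\tau-1}\|_2^2.
\]
We use the exact one-step relation $\mathbb E_{\rm joint}\|\Delta_{j+1}\|_2^2=\mathbb E_{\rm joint}[\|\Delta_j\|_2^2q_U(\Delta_j)]$. By Eq.~\eqref{eq:v2-eda}, and since the GUE value $\bar q_t(\Delta)=\bar q_t$ does not depend on $\Delta$, this is at most $(\bar q_t+\varepsilon_{\rm EDA})\,\mathbb E_{\rm joint}\|\Delta_j\|_2^2=\alpha_t\,\mathbb E_{\rm joint}\|\Delta_j\|_2^2$. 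Inducting over $j=0,\dots,\tau-2$ yields
\[
\overline{\mathcal I}_\tau\le\alpha_t^{\tau-1}\,\overline{\mathcal I}_1.
\]
Finally we substitute $\alpha_t=1/2+o(1)$, which follows from the spectral-form-factor computation of $\bar q_t$ for $t\ge\sqrt d$ stated above, together with $\varepsilon_{\rm EDA}=o(1)$.

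\textbf{Main obstacle.} The delicate step is Step 1. The conditional-expectation identity only holds because $u'_{k-\tau}$ is an independent copy, so that $\mathbb E_{u'}\rho^{(k-\tau)}_k=\mathbb E[\rho_k\mid\mathcal F]$. Beyond that, the Cauchy--Schwarz bound on $\|\bm c\|$ must be applied along a single direction rather than componentwise, and then summed through the orthonormal-basis Bessel inequality. Getting the exact prefactor $2/\lambda$ depends on these choices.
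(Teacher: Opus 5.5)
Your architecture is the paper's: a per-$U$ bound $P_\lambda(U)\le 2\mathcal I_\tau(U)/\lambda$, averaging over $U$, EDA iteration giving $\overline{\mathcal I}_\tau\le\alpha_t^{\tau-1}\overline{\mathcal I}_1$, then substituting $\alpha_t$. However, Step~1 as written yields only $P_\lambda(U)\le 4\mathcal I_\tau(U)/\lambda$, and neither of your proposed repairs recovers the missing factor of $2$.

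The factor cannot come from the ridge side. $K_\lambda+\lambda K_\lambda^2=K_\lambda(I+\lambda K_\lambda)$ holds exactly. Even with the Schur-complement constraint $\Sigma\succeq\bm c\bm c^{\mathsf T}/\sigma_y^2$, the sharp envelope is $P_\lambda\le1-[\lambda/(\lambda+\mathcal R)]^2$ with $\mathcal R=\|\bm c\|^2/\sigma_y^2$. This is attained at $\Sigma=\bm c\bm c^{\mathsf T}/\sigma_y^2$ and is asymptotic to $2\mathcal R/\lambda$ as $\mathcal R\to0$. The $\Delta_{\tau-1}$ identity does not touch the constant either. Nor can a factor $2$ be absorbed into $[1/2+o(1)]^{\tau-1}$ (take $\tau=1$).

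The loss is in your Jensen step, which replaces $\mathbb E_{\rm in}[\rho_{k,U}\mid\mathcal F]$ by the independent copy $\rho^{(k-\tau)}_{k,U}$. The two states are conditionally i.i.d.\ given $\mathcal F$, so the cross term vanishes and
$\mathbb E_{\mathrm{in}'}\|\rho_{k,U}-\rho^{(k-\tau)}_{k,U}\|_2^2=2\,\mathbb E_{\rm in}\|\rho_{k,U}-\overline\rho_{k,U}\|_2^2$ exactly, with $\overline\rho_{k,U}:=\mathbb E_{\rm in}[\rho_{k,U}\mid\mathcal F]$. In other words, $\mathcal I_\tau(U)$ is the mean conditional variance itself. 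If you stop your Cauchy--Schwarz/Bessel chain at $\rho_{k,U}-\overline\rho_{k,U}$, you get $\|\bm c\|^2\le\sigma_y^2\mathcal I_\tau(U)$ and hence the stated constant. This is the paper's route, via $C_U=\mathbb E_{\rm in}[y_k^\circ(\rho_{k,U}-\overline\rho_{k,U})]$. The directional-versus-componentwise concern you raise is not an obstacle: summing the componentwise Cauchy--Schwarz bounds and applying Bessel inside the expectation gives the same estimate.

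Second, you import $\bar q_t=1/2+o(1)$ and the $\Delta$-independence of $\bar q_t$ as given. This is the only place the GUE and $t\ge\sqrt d$ hypotheses enter, and in the paper it is the bulk of the proof:
\begin{itemize}
\item Conjugation invariance under $I_A\otimes W_{\bar A}$ and $W_A\otimes I_{\bar A}$ reduces the averaged quadratic form to $\eta_1(t)\Tr(\Delta^2)+\eta_0(t)(\Tr\Delta)^2$, independent of the pure reset state.
\item A fourth-order Haar average at fixed spectrum expresses $\bar q_t$ through $\mathcal R_2,\mathcal R_4,\mathcal R_{4,1},\mathcal R_{4,2}$.
\item The bounds $|s_1|,|s_2|\le d$ reduce $|\bar q_t-1/2|$ to $O(\mathcal R_2(t)/d^2+d^{-2})$.
\item $\sup_{t\ge\sqrt d}\mathcal R_2(t)/d^2\to0$ follows from the determinantal bound $\mathbb E_H|\Tr U(t)-\mathbb E_H\Tr U(t)|^2\le d$, $L^1$ convergence of the mean eigenvalue density to the semicircle, and Riemann--Lebesgue.
\end{itemize}
A complete proof must supply or cite this computation. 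The remaining steps, including $\mathbb E_{\rm joint}\|\Delta_0\|_2^2=2\overline{\mathcal I}_1$ by stationarity, go through as you describe.
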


For an innovation target, positive $P_\lambda$ requires the reservoir to
retain information about $u_{k-\tau}$.  The surviving influence
$\mathcal I_\tau(U)$ measures how much replacing $u_{k-\tau}$ changes the
reservoir state and provides a bound on $P_\lambda$.  Applying the EDA to
each of the $\tau-1$ further cycles bounds the mean remaining influence by
$\overline{\mathcal I}_1\alpha_t^{\tau-1}$.  Substituting the GUE value of
$\alpha_t$ gives the stated performance bound.  Details 
are given in Supplemental Material~\cite{supplemental}.

Figure~\ref{fig:v2-endpoints}(b) shows that performance on the
delayed-memory target $y_k=u_{k-\tau}$ decreases approximately exponentially
over $1\leq\tau\leq7$, consistent with the exponential forgetting bound in
Theorem~\ref{thm:v2-gue-forgetting}.  Numerical details are given in
Supplemental Material~\cite{supplemental}.

The GUE result establishes rapid exponential forgetting under the EDA,
with the bound on ensemble-averaged memory decreasing by a factor
approaching $1/2$ per cycle.  This severe memory loss imposes a strong
limitation on prediction at long delays.  
We expect this limitation to be
lifted as the reservoir leaves the chaotic regime and its dynamics are no
longer well described by the GUE.
We next consider the opposite, integrable limit, taking a free-fermion
reservoir as a representative example.

\paragraph*{Free endpoint: expressivity obstruction.---}
{
At the free endpoint, the reservoir evolves under a quadratic fermionic
Hamiltonian and is read out through quadratic observables.  }
We use the Jordan--Wigner transformation to represent the
$N$-qubit reservoir in terms of $2N$ Majorana operators
$\{m_p\}_{p=1}^{2N}$.  We choose the reset qubit $A$
as the first qubit in the Jordan--Wigner ordering.  With this choice,
\begin{equation}
 m_1=X_A\otimes I_{\bar{A}},\quad
 m_2=Y_A\otimes I_{\bar{A}},\quad
 \ii m_1m_2=-Z_A\otimes I_{\bar{A}}.
 \label{eq:v2-reset-majoranas}
\end{equation}
Here $X_A$, $Y_A$, and $Z_A$ are the Pauli operators on $A$, and
$I_{\bar{A}}$ is the identity on $\bar A$.  

Under the amplitude encoding in Eq.~\eqref{eq:v2-input-encoding}, the reset
state can be written as
\begin{equation}
 \rho_{\rm in}(u_k)
 =\frac12\left[I_A+\phi(u_k)X_A+(1-2u_k)Z_A\right],
 \label{eq:v2-free-reset-state}
\end{equation}
where $\phi(u):=2\sqrt{u(1-u)}$.  Therefore, the dependence of any feature on the current
input can involve only $\phi(u_k)$ and $z_k:=u_k-1/2$. 

We now consider whether the reservoir can learn certain cross-time products
involving current and past inputs.  For an input
distribution symmetric about $1/2$, $z_k$ has zero mean and is orthogonal
to $\phi(u_k)$.  We study targets of the form
\begin{equation}
 y_k=z_kh_{k-1},
 \label{eq:v2-free-target-class}
\end{equation}
where $h_{k-1}:=h(\ldots,u_{k-2},u_{k-1})$ is a nonzero, centered and 
square-integrable function of earlier inputs.  
A function is called centered
if it has zero expectation, i.e.,
$\mathbb E_{\rm in}h_{k-1}=0$.  
The following theorem shows that a free reservoir cannot learn any
nonzero target of this type.

\begin{theorem}[Cross-time product no-go]
\label{thm:v2-main-free-nogo}
Under the amplitude encoding in Eq.~\eqref{eq:v2-input-encoding}, let the
inputs be i.i.d. with a distribution symmetric about $1/2$.  For any
quadratic Hamiltonian and any collection of quadratic observables, every
nonzero target in Eq.~\eqref{eq:v2-free-target-class} has performance
\begin{equation}
 P_\lambda=0 .
 \label{eq:v2-nogo}
\end{equation}
\end{theorem}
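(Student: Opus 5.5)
The plan is to show that the feature--target covariance vanishes, $\bm c=\operatorname{Cov}(\bm x_k,y_k)=0$, for every collection of quadratic observables. Then Eq.~\eqref{eq:v2-gain} gives $P_\lambda=\bm c^{\mathsf T}(K_\lambda+\lambda K_\lambda^2)\bm c/\sigma_y^2=0$ immediately, since $\sigma_y^2>0$ by assumption.

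\textbf{Step 1: Heisenberg picture.} A quadratic Hamiltonian generates a Gaussian unitary, so $U^\dagger m_p U=\sum_q R_{pq}m_q$ with $R$ real orthogonal. Hence for any quadratic observable $O=a_0 I+\sum_{p<q}a_{pq}\,\ii m_pm_q$, the operator $O':=U^\dagger OU$ is again of the same quadratic form. Each feature is then
\begin{equation}
x_{k,i}=\Tr\!\left[O_i'\left(\rho_{\rm in}(u_k)\otimes\sigma_{k-1}\right)\right],
\end{equation}
where $\sigma_{k-1}:=\Tr_A\rho_{k-1}$ is a density operator on $\bar A$ that depends only on $u_{k-1},u_{k-2},\ldots$. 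By the echo-state property and i.i.d.\ inputs, $\sigma_{k-1}$ is independent of $u_k$.

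\textbf{Step 2: Sort the quadratic terms by their action on $A$.} With $A$ first in the Jordan--Wigner ordering, every Majorana with $q\ge3$ has the form $m_q=Z_A\otimes\tilde m_q$, where $\tilde m_q$ acts on $\bar A$. Combined with Eq.~\eqref{eq:v2-reset-majoranas}, this gives four cases:
\begin{itemize}
\item[] $\ii m_1m_2=-Z_A$, whose expectation is $1-2u_k=-2z_k$.
\item[] $m_1m_q\propto Y_A\otimes\tilde m_q$. Its expectation in the real state $\rho_{\rm in}(u_k)$ of Eq.~\eqref{eq:v2-free-reset-state} vanishes.
\item[] $m_2m_q\propto X_A\otimes\tilde m_q$, whose expectation is $\phi(u_k)\,\Tr(\tilde m_q\sigma_{k-1})$.
\item[] For $p,q\ge3$, $m_pm_q=I_A\otimes\tilde m_p\tilde m_q$, whose expectation depends only on $\sigma_{k-1}$.
\end{itemize}
Therefore every feature has the structure
\begin{equation}
x_{k,i}=\beta_i z_k+\phi(u_k)\,g_i(\sigma_{k-1})+f_i(\sigma_{k-1})+\text{const}.
\end{equation}
Here $\beta_i$ is a fixed constant, and $g_i,f_i$ are bounded functions of past inputs only. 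This decomposition is the step I expect to need the most care: the Jordan--Wigner strings must be tracked correctly, and one must confirm that no $Y_A$-type term survives. That relies on the amplitudes in Eq.~\eqref{eq:v2-input-encoding} being real.

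\textbf{Step 3: Compute the covariance term by term.} First, $\mathbb E y_k=\mathbb E z_k\,\mathbb E h_{k-1}=0$, so $\operatorname{Cov}(x_{k,i},y_k)=\mathbb E[x_{k,i}z_kh_{k-1}]$. Using the independence of $u_k$ from the past, each term vanishes:
\begin{itemize}
\item[] The $\beta_i$ term gives $\beta_i\,\mathbb E z_k^2\,\mathbb E h_{k-1}=0$, because $h$ is centered.
\item[] The $\phi$ term gives $\mathbb E[z_k\phi(u_k)]\,\mathbb E[g_ih_{k-1}]=0$. This holds because $\phi$ is symmetric and $z$ is antisymmetric about $1/2$, and the input distribution is symmetric.
\item[] The $f_i$ term and the constant term are each proportional to $\mathbb E z_k=0$.
\end{itemize}
All these expectations are finite, since the features are bounded and $h$ is square integrable. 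Hence $\bm c=0$ and $P_\lambda=0$ for every $\lambda>0$, every quadratic Hamiltonian, and every choice of quadratic observables.
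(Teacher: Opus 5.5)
Your proposal is correct and follows the paper's proof essentially step for step. The paper's quadratic current-input normal form, $U^\dagger O_iU=I_A\otimes Q_{\bar A,i}+\zeta_iZ_A\otimes I_{\bar A}+X_A\otimes R_{\bar A,i}+Y_A\otimes S_{\bar A,i}$, gives exactly your $x_{k,i}=a^{(0)}_{k,i}+\phi(u_k)a^{(1)}_{k,i}-2\zeta_iz_k$ with a history-independent coefficient of $z_k$; each covariance term then vanishes by independence of $u_k$ from the past, $\mathbb E z_k=\mathbb E[z_k\phi(u_k)]=0$, and $\mathbb E h_{k-1}=0$. There is one harmless sign slip: $\langle \ii m_1m_2\rangle=\langle -Z_A\rangle=2z_k$, not $-2z_k$, and this is absorbed into $\beta_i$.
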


The proof follows from quadratic closure.  Any quadratic observable remains
quadratic under Heisenberg evolution generated by a quadratic Hamiltonian.
The term linear in the current input $z_k$ enters the
reservoir only through $Z_A$.  Within the
quadratic span, the only term containing $Z_A$ is
$\ii m_1m_2=-Z_A\otimes I_{\bar A}$, which acts trivially on $\bar A$.
Therefore, the coefficient of $z_k$ cannot depend on earlier inputs
stored in $\bar A$, and the reservoir cannot learn the cross-time dependence
in Eq.~\eqref{eq:v2-free-target-class}.
{
Details are given in Supplemental Material~\cite{supplemental}. }

Although the theorem is stated for centered targets, it also limits 
uncentered targets containing the same inaccessible component.  
For example, the uncentered target
$y_k=u_ku_{k-\tau}$ with an integer delay $\tau\geq1$ obeys the nontrivial 
upper bound $P_\lambda\leq6/7$ under uniform inputs~\cite{supplemental}.

Under the QRC rewinding
protocol~\cite{mujal2023measurements}, perturbation theory shows that weak
quartic interactions can lift the exact obstruction in
Theorem~\ref{thm:v2-main-free-nogo}. 
For inputs with a distribution symmetric about $1/2$, consider the centered delayed-product
target $y_k=z_kz_{k-\tau}$ with $N\geq3$ and $1\leq\tau\leq N-1$.
Let $H_2$ be a quadratic Hamiltonian and $H_4$ a pure-quartic Majorana
Hamiltonian.  We interpolate between them as
\begin{equation}
 H_\theta=\cos\theta\,H_2+\sin\theta\,H_4,
 \label{eq:v2-interpolation}
\end{equation}
so that $\theta=0$ is the free endpoint, and small $\theta$ introduces a weak
quartic interaction.  For the evolution
$U_\theta=\exp(-\ii H_\theta t)$, let $\bm c(\theta)$ be the corresponding
feature--target covariance under this protocol.  At $\theta=0$, the
no-go argument gives $\bm c(0)=0$.  
{
There exist choices of $H_2$, $H_4$, the
evolution time $t$, and a collection of quadratic observables for which
$\dot{\bm c}(0)\neq0$, where the dot denotes differentiation with respect to
$\theta$ (see Supplemental Material~\cite{supplemental}). }
Consequently, for every $\lambda>0$, 
\begin{equation}
 P_\lambda(\theta)=\kappa_\lambda\theta^2+O(\theta^3),
 \qquad \kappa_\lambda>0.
 \label{eq:v2-perturbative-gain}
\end{equation}
Thus, a sufficiently small nonzero quartic interaction gives $P_\lambda>0$
under the rewinding protocol. 

Figure~\ref{fig:v2-endpoints}(c) numerically illustrates the free-endpoint
obstruction.  For quadratic Hamiltonians and
$y_k=z_kz_{k-1}$, the estimated performance $P_\lambda$ approaches zero as
the training and test lengths increase.
Since these reservoirs can still predict the linear target $y_k=u_{k-\tau}$
(see Supplemental Material~\cite{supplemental}), this failure reflects the
nonlinearity obstruction.

Near the edge of chaos, both the rapid memory loss characteristic of the
chaotic regime and the nonlinear-processing obstruction at the free endpoint
can be alleviated.  We expect the resulting balance between memory retention
and nonlinear processing to underlie the optimal performance observed in
practical quantum reservoir computing tasks.

\paragraph*{Theory-guided tasks for probing the edge of chaos.---}
Guided by the two theorems, we construct a family of nonlinear tasks that 
probe both memory loss in the chaotic regime and the inability to learn certain 
nonlinear cross-time products at the free endpoint.   For uniform inputs, we define
\begin{equation}
 y_k^{(S)}=(u_k-1/2)\prod_{s\in S}(u_{k-s}-1/2),
 \label{eq:v2-probing-task-family}
\end{equation}
where $S$ is any nonempty finite set of positive integer delays, specifying
which past inputs enter the product.  Each target belongs to the
free-reservoir no-go class and is a lag-$\tau$ innovation target for every
$\tau\in S$.
For the numerical tests, we choose the simplest case, $S=\{\tau\}$ with
$\tau\geq1$, giving $y_k=(u_k-1/2)(u_{k-\tau}-1/2)$.
We compute the stationary performance $P_\lambda$ using quadratic observables along
the quadratic--quartic Sachdev--Ye--Kitaev (SYK) interpolation in
Eq.~\eqref{eq:v2-interpolation}, where $g:=\tan\theta$ measures the strength
of the quartic term relative to the quadratic term.  This defines an
integrable-to-chaotic crossover from free
fermions at $g=0$ to the pure-quartic $\mathrm{SYK}_4$ model at $g=\infty$,
a standard model of many-body quantum
chaos~\cite{maldacena2016remarks,garciagarcia2018chaotic,haque2019eigenstate,kobayashi2026q_eoc}.
The Hamiltonian normalization and numerical methods are given in
Supplemental Material~\cite{supplemental}.

\begin{figure}[t]
\centering
\includegraphics[width=0.90\columnwidth]{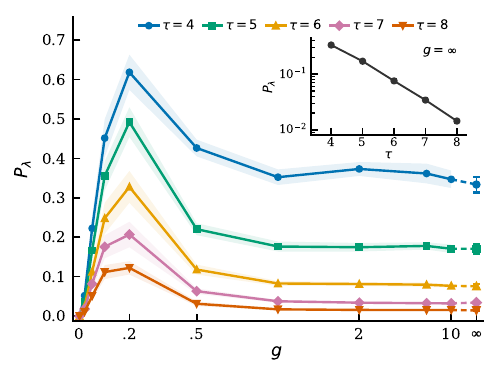}
\caption{\textbf{Theory-guided tasks reveal an intermediate performance peak.}
Stationary $P_\lambda$ for
$y_k=(u_k-1/2)(u_{k-\tau}-1/2)$ with $4\leq\tau\leq8$.
The readout uses the complete set of normalized quadratic observables.
The horizontal axis is scaled as $\arctan g$.
Dashed terminal segments connect $g=10$ to the separately evaluated
limit $g=\infty$ in the chaotic regime.  The inset replots the $g=\infty$
values on a logarithmic scale.  Values are sample means over the
same 30 independently drawn $N=6$ mixed
$\mathrm{SYK}_2+\mathrm{SYK}_4$ Hamiltonian pairs at $t=2$ and
$\lambda\simeq1.56\times10^{-6}$.  Shading and error bars show
$95\%$ confidence intervals for the ensemble mean.}
\label{fig:v2-crossover}
\end{figure}

Figure~\ref{fig:v2-crossover} shows the two limitations.
At the free endpoint $g=0$, $P_\lambda=0$ for every displayed lag,
in agreement with Theorem~\ref{thm:v2-main-free-nogo}.
In the chaotic regime at $g\gtrsim1$, performance decreases approximately
exponentially with the delay $\tau$ across the sampled couplings; the inset
illustrates this behavior at $g=\infty$.  This trend supports the memory-loss
mechanism identified by the GUE analysis.

Away from the free endpoint, performance improves as quartic interactions are introduced
and reaches its largest sampled value at $g=0.2$ for every displayed lag.
Performance then declines toward the chaotic regime.  The intermediate
region outperforms both the free endpoint and the sampled chaotic regime,
supporting the expected advantage of
combining useful memory with the nonlinear response required by the task.

\paragraph*{Discussion.---}
We have established distinct limitations on memory retention in chaotic quantum reservoirs and nonlinear processing at an integrable point. 
Their simultaneous lifting near the edge of chaos is expected to be the key to optimal performance in practical quantum reservoir computing tasks, thus providing a theoretical basis for the observed edge-of-chaos advantage.
Although our rigorous results concern specific task families, the underlying obstructions are
relevant more broadly to temporal learning tasks that require both persistent memory and
nonlinear coupling between inputs at different times. For example, the input-product
terms in NARMA tasks share the nonlinear cross-time structure of the task family studied
here~\cite{kobayashi2026q_eoc}. 
Our framework thus identifies mechanisms that can account for the edge-of-chaos advantage observed in such tasks. 
These results suggest a design
principle for quantum reservoirs: introduce sufficient interactions to enable nonlinear
temporal processing while avoiding the rapid memory loss associated with strongly
chaotic dynamics.


\paragraph*{Acknowledgments.---}
We thank Xingze Qiu for helpful discussions. This work was supported by the National
Basic Research Program of China (Grant No.~2021YFA1400900), and the Innovation Program
for Quantum Science and Technology of China (Grant No.~2024ZD0300100), and Shanghai
Municipal Science and Technology (Grant Nos.~26LZ0500100, 25TQ003, and 24DP2600100).

\nocite{oliviero2021isospectral_twirling,soshnikov2000determinantal,sup_almohy2009frechet,sup_bhatia1997matrix,sup_bravyi2005fermionic,sup_collins2006integration,sup_efron1979bootstrap,sup_efronstein1981jackknife,sup_garoni2005density,sup_mele2024swap,sup_mityagin2020zero}
\bibliographystyle{apsrev4-1}
\bibliography{references}

@article{maldacena2016remarks,
  title = {Remarks on the {Sachdev-Ye-Kitaev} model},
  author = {Maldacena, Juan and Stanford, Douglas},
  journal = {Physical Review D},
  volume = {94},
  number = {10},
  pages = {106002},
  year = {2016},
  doi = {10.1103/PhysRevD.94.106002},
  eprint = {1604.07818},
  archivePrefix = {arXiv},
  primaryClass = {hep-th}
}

@article{garciagarcia2018chaotic,
  title = {Chaotic-Integrable Transition in the {Sachdev-Ye-Kitaev} Model},
  author = {Garc{\'i}a-Garc{\'i}a, Antonio M. and Loureiro, Bruno and Romero-Berm{\'u}dez, Aurelio and Tezuka, Masaki},
  journal = {Physical Review Letters},
  volume = {120},
  number = {24},
  pages = {241603},
  year = {2018},
  doi = {10.1103/PhysRevLett.120.241603},
  eprint = {1707.02197},
  archivePrefix = {arXiv},
  primaryClass = {hep-th}
}

@article{haque2019eigenstate,
  title = {Eigenstate Thermalization Scaling in Majorana Clusters: From Chaotic to Integrable {Sachdev-Ye-Kitaev} Models},
  author = {Haque, Masudul and McClarty, Paul A.},
  journal = {Physical Review B},
  volume = {100},
  number = {11},
  pages = {115122},
  year = {2019},
  doi = {10.1103/PhysRevB.100.115122},
  eprint = {1711.02360},
  archivePrefix = {arXiv},
  primaryClass = {cond-mat.stat-mech}
}

@article{langton1990eoc,
  author  = {Langton, Chris G.},
  title   = {Computation at the edge of chaos: Phase transitions and emergent computation},
  journal = {Physica D: Nonlinear Phenomena},
  volume  = {42},
  number  = {1--3},
  pages   = {12--37},
  year    = {1990},
  doi     = {10.1016/0167-2789(90)90064-V}
}

@inproceedings{natschlaeger2005eoc,
  author    = {Bertschinger, Nils and Natschl{\"a}ger, Thomas and Legenstein, Robert A.},
  title     = {At the edge of chaos: Real-time computations and self-organized criticality in recurrent neural networks},
  booktitle = {Advances in Neural Information Processing Systems},
  volume    = {17},
  pages     = {145--152},
  year      = {2005},
  publisher = {MIT Press}
}

@article{legenstein2007eoc,
  author  = {Legenstein, Robert and Maass, Wolfgang},
  title   = {Edge of chaos and prediction of computational performance for neural circuit models},
  journal = {Neural Networks},
  volume  = {20},
  number  = {3},
  pages   = {323--334},
  year    = {2007},
  doi     = {10.1016/j.neunet.2007.04.017}
}

@article{boedecker2012eoc,
  author  = {Boedecker, Joschka and Obst, Oliver and Lizier, Joseph T. and Mayer, N. Michael and Asada, Minoru},
  title   = {Information processing in echo state networks at the edge of chaos},
  journal = {Theory in Biosciences},
  volume  = {131},
  number  = {3},
  pages   = {205--213},
  year    = {2012},
  doi     = {10.1007/s12064-011-0146-8}
}

@article{pena2023quantum,
  title = {Quantum reservoir computing in finite dimensions},
  author = {Mart\'{\i}nez-Pe\~na, Rodrigo and Ortega, Juan-Pablo},
  journal = {Phys. Rev. E},
  volume = {107},
  issue = {3},
  pages = {035306},
  numpages = {16},
  year = {2023},
  month = {Mar},
  publisher = {American Physical Society},
  doi = {10.1103/PhysRevE.107.035306},
  url = {https://link.aps.org/doi/10.1103/PhysRevE.107.035306}
}

@ARTICLE{domingo2023taking,
  title    = "Taking advantage of noise in quantum reservoir computing",
  author   = "Domingo, L and Carlo, G and Borondo, F",
  journal  = "Scientific Reports",
  volume   =  13,
  number   =  1,
  pages    = "8790",
  month    =  may,
  year     =  2023
}

@article{bravo2022quantum,
  title = {Quantum Reservoir Computing Using Arrays of Rydberg Atoms},
  author = {Bravo, Rodrigo Araiza and Najafi, Khadijeh and Gao, Xun and Yelin, Susanne F.},
  journal = {PRX Quantum},
  volume = {3},
  issue = {3},
  pages = {030325},
  numpages = {19},
  year = {2022},
  month = {Aug},
  publisher = {American Physical Society},
  doi = {10.1103/PRXQuantum.3.030325},
  url = {https://link.aps.org/doi/10.1103/PRXQuantum.3.030325}
}

@article{kubota2023temporal,
  title = {Temporal information processing induced by quantum noise},
  author = {Kubota, Tomoyuki and Suzuki, Yudai and Kobayashi, Shumpei and Tran, Quoc Hoan and Yamamoto, Naoki and Nakajima, Kohei},
  journal = {Phys. Rev. Res.},
  volume = {5},
  issue = {2},
  pages = {023057},
  numpages = {15},
  year = {2023},
  month = {Apr},
  publisher = {American Physical Society},
  doi = {10.1103/PhysRevResearch.5.023057},
  url = {https://link.aps.org/doi/10.1103/PhysRevResearch.5.023057}
}

@article{hou2026high,
  title = {High-Accuracy Temporal Prediction via Experimental Quantum Reservoir Computing in Correlated Spins},
  author = {Hou, Yanjun and Hua, Juncheng and Wu, Ze and Xia, Wei and Chen, Yuquan and Li, Xiaopeng and Li, Zhaokai and Peng, Xinhua and Du, Jiangfeng},
  journal = {Phys. Rev. Lett.},
  volume = {136},
  issue = {12},
  pages = {120602},
  numpages = {8},
  year = {2026},
  month = {Mar},
  publisher = {American Physical Society},
  doi = {10.1103/r8ww-qw7j},
  url = {https://link.aps.org/doi/10.1103/r8ww-qw7j}
}

@ARTICLE{suzuki2022natural,
  title    = "Natural quantum reservoir computing for temporal information
              processing",
  author   = "Suzuki, Yudai and Gao, Qi and Pradel, Ken C and Yasuoka, Kenji
              and Yamamoto, Naoki",
  journal  = "Scientific Reports",
  volume   =  12,
  number   =  1,
  pages    = "1353",
  month    =  jan,
  year     =  2022,
  doi      = {10.1038/s41598-022-05061-w}
}

@article{chen2020temporal,
  title = {Temporal Information Processing on Noisy Quantum Computers},
  author = {Chen, Jiayin and Nurdin, Hendra I. and Yamamoto, Naoki},
  journal = {Phys. Rev. Appl.},
  volume = {14},
  issue = {2},
  pages = {024065},
  numpages = {22},
  year = {2020},
  month = {Aug},
  publisher = {American Physical Society},
  doi = {10.1103/PhysRevApplied.14.024065},
  url = {https://link.aps.org/doi/10.1103/PhysRevApplied.14.024065}
}

@article{mujal2023measurements,
  title={Time-series quantum reservoir computing with weak and projective measurements},
  author={Mujal, Pere and Mart{\'\i}nez-Pe{\~n}a, Rodrigo and Giorgi, Gian Luca and Soriano, Miguel C. and Zambrini, Roberta},
  journal={npj Quantum Information},
  volume={9},
  pages={16},
  year={2023},
  publisher={Nature Publishing Group},
  doi={10.1038/s41534-023-00682-z}
}

@article{sannia2024dissipation,
  title={Dissipation as a resource for {Q}uantum {R}eservoir {C}omputing},
  author={Sannia, Antonio and Mart{\'\i}nez-Pe{\~n}a, Rodrigo and Soriano, Miguel C. and Giorgi, Gian Luca and Zambrini, Roberta},
  journal={Quantum},
  volume={8},
  pages={1291},
  year={2024},
  publisher={Verein zur F{\"o}rderung des Open Access Publizierens in den Quantenwissenschaften},
  doi={10.22331/q-2024-03-20-1291}
}

@article{palacios2024coherence,
  title={Role of coherence in many-body {Q}uantum {R}eservoir {C}omputing},
  author={Palacios, Ana and Mart{\'\i}nez-Pe{\~n}a, Rodrigo and Soriano, Miguel C. and Giorgi, Gian Luca and Zambrini, Roberta},
  journal={Communications Physics},
  volume={7},
  number={1},
  pages={369},
  year={2024},
  publisher={Nature Publishing Group UK}
}

@ARTICLE{tanaka2019rc,
  title    = "Recent advances in physical reservoir computing: A review",
  author   = "Tanaka, Gouhei and Yamane, Toshiyuki and H{\'e}roux, Jean Benoit
              and Nakane, Ryosho and Kanazawa, Naoki and Takeda, Seiji and
              Numata, Hidetoshi and Nakano, Daiju and Hirose, Akira",
  journal  = "Neural Networks",
  volume   =  115,
  pages    = "100--123",
  month    =  jul,
  year     =  2019
}

@ARTICLE{dambre2012ipc,
  title    = "Information Processing Capacity of Dynamical Systems",
  author   = "Dambre, Joni and Verstraeten, David and Schrauwen, Benjamin and
              Massar, Serge",
  journal  = "Scientific Reports",
  volume   =  2,
  number   =  1,
  pages    = "514",
  month    =  jul,
  year     =  2012
}

@techreport{jaeger2001esn,
  title={The ``echo state'' approach to analysing and training recurrent neural networks-with an erratum note},
  author={Jaeger, Herbert},
  year={2001},
  institution={German National Research Center for Information Technology GMD Technical Report 148},
  address={Bonn, Germany}
}

@article{maass2002lsm,
  title={Real-time computing without stable states: A new framework for neural computation based on perturbations},
  author={Maass, Wolfgang and Natschl{\"a}ger, Thomas and Markram, Henry},
  journal={Neural computation},
  volume={14},
  number={11},
  pages={2531--2560},
  year={2002},
  publisher={MIT Press}
}

@article{jaeger2004esn,
  title={Harnessing nonlinearity: Predicting chaotic systems and saving energy in wireless communication},
  author={Jaeger, Herbert and Haas, Harald},
  journal={Science},
  volume={304},
  number={5667},
  pages={78--80},
  year={2004},
  publisher={American Association for the Advancement of Science}
}

@article{lukosevicius2009rc,
  title={Reservoir computing approaches to recurrent neural network training},
  author={Luko{\v{s}}evi{\v{c}}ius, Mantas and Jaeger, Herbert},
  journal={Computer science review},
  volume={3},
  number={3},
  pages={127--149},
  year={2009},
  publisher={Elsevier}
}

@ARTICLE{larocca2025bp,
  title    = "Barren plateaus in variational quantum computing",
  author   = "Larocca, Mart{\'\i}n and Thanasilp, Supanut and Wang, Samson and
              Sharma, Kunal and Biamonte, Jacob and Coles, Patrick J and
              Cincio, Lukasz and McClean, Jarrod R and Holmes, Zo{\"e} and
              Cerezo, M",
  journal  = "Nature Reviews Physics",
  volume   =  7,
  number   =  4,
  pages    = "174--189",
  month    =  apr,
  year     =  2025,
  doi      = {10.1038/s42254-025-00813-9}
}

@ARTICLE{biamonte2017qml,
  title    = "Quantum machine learning",
  author   = "Biamonte, Jacob and Wittek, Peter and Pancotti, Nicola and
              Rebentrost, Patrick and Wiebe, Nathan and Lloyd, Seth",
  journal  = "Nature",
  volume   =  549,
  number   =  7671,
  pages    = "195--202",
  month    =  sep,
  year     =  2017
}

@ARTICLE{schuld2015qml,
  title     = "An introduction to quantum machine learning",
  author    = "Schuld, Maria and Sinayskiy, Ilya and Petruccione, Francesco",
  journal   = "Contemporary Physics",
  publisher = "Taylor \& Francis",
  volume    =  56,
  number    =  2,
  pages     = "172--185",
  month     =  apr,
  year      =  2015
}

@article{dunjko2017qml,
  title={Machine learning \& artificial intelligence in the quantum domain: a review of recent progress},
  author={Dunjko, Vedran and Briegel, Hans J.},
  journal={Reports on Progress in Physics},
  volume={81},
  number={7},
  pages={074001},
  year={2018},
  doi={10.1088/1361-6633/aab406}
}

@ARTICLE{jerbi2023qml,
  title    = "Quantum machine learning beyond kernel methods",
  author   = "Jerbi, Sofiene and Fiderer, Lukas J and Poulsen Nautrup, Hendrik
              and K{\"u}bler, Jonas M and Briegel, Hans J and Dunjko, Vedran",
  journal  = "Nature Communications",
  volume   =  14,
  number   =  1,
  pages    = "517",
  month    =  jan,
  year     =  2023
}

@article{schuld2019qml,
  title = {Quantum Machine Learning in Feature Hilbert Spaces},
  author = {Schuld, Maria and Killoran, Nathan},
  journal = {Phys. Rev. Lett.},
  volume = {122},
  issue = {4},
  pages = {040504},
  numpages = {6},
  year = {2019},
  month = {Feb},
  publisher = {American Physical Society},
  doi = {10.1103/PhysRevLett.122.040504},
  url = {https://link.aps.org/doi/10.1103/PhysRevLett.122.040504}
}

@ARTICLE{mcClean2018bp,
  title    = "Barren plateaus in quantum neural network training landscapes",
  author   = "McClean, Jarrod R and Boixo, Sergio and Smelyanskiy, Vadim N and
              Babbush, Ryan and Neven, Hartmut",
  journal  = "Nature Communications",
  volume   =  9,
  number   =  1,
  pages    = "4812",
  month    =  nov,
  year     =  2018
}

@ARTICLE{cerezo2021bp,
  title    = "Cost function dependent barren plateaus in shallow parametrized
              quantum circuits",
  author   = "Cerezo, M and Sone, Akira and Volkoff, Tyler and Cincio, Lukasz
              and Coles, Patrick J",
  journal  = "Nature Communications",
  volume   =  12,
  number   =  1,
  pages    = "1791",
  month    =  mar,
  year     =  2021
}

@article{fujii2017qrc,
  author  = {Keisuke Fujii and Kohei Nakajima},
  title   = {Harnessing Disordered-Ensemble Quantum Dynamics for Machine Learning},
  journal = {Physical Review Applied},
  volume  = {8},
  number  = {2},
  pages   = {024030},
  year    = {2017},
  doi     = {10.1103/PhysRevApplied.8.024030}
}

@article{nakajima2019qrc,
  author  = {Kohei Nakajima and Keisuke Fujii and Makoto Negoro and Kosuke Mitarai and Masahiro Kitagawa},
  title   = {Boosting Computational Power through Spatial Multiplexing in Quantum Reservoir Computing},
  journal = {Physical Review Applied},
  volume  = {11},
  number  = {3},
  pages   = {034021},
  year    = {2019},
  doi     = {10.1103/PhysRevApplied.11.034021}
}

@article{kutvonen2020qrc,
  author  = {Antti Kutvonen and Keisuke Fujii and Takahiro Sagawa},
  title   = {Optimizing a Quantum Reservoir Computer for Time Series Prediction},
  journal = {Scientific Reports},
  volume  = {10},
  pages   = {14687},
  year    = {2020},
  doi     = {10.1038/s41598-020-71601-1}
}

@incollection{fujii2020qrc,
  author    = {Keisuke Fujii and Kohei Nakajima},
  title     = {Quantum Reservoir Computing: A Reservoir Approach toward Quantum Machine Learning on Near-Term Quantum Devices},
  booktitle = {Reservoir Computing},
  editor    = {Kohei Nakajima and Ingo Fischer},
  publisher = {Springer},
  address   = {Singapore},
  year      = {2021},
  pages     = {423--450},
  doi       = {10.1007/978-981-13-1687-6_23}
}

@article{mujal2021qrc,
  author  = {Pol Mujal and Miguel {C. Mu{\~n}oz} and Carlos Navarrete-Benlloch and Adri{\'a}n Pérez-Salinas and Diego García-Martín},
  title   = {Opportunities in Quantum Reservoir Computing and Extreme Learning Machines},
  journal = {Advanced Quantum Technologies},
  volume  = {4},
  number  = {8},
  pages   = {2100027},
  year    = {2021},
  doi     = {10.1002/qute.202100027}
}

@article{bertschinger2004eoc,
  author  = {Nils Bertschinger and Thomas Natschl{\"a}ger},
  title   = {Real-Time Computation at the Edge of Chaos in Recurrent Neural Networks},
  journal = {Neural Computation},
  volume  = {16},
  number  = {7},
  pages   = {1413--1436},
  year    = {2004},
  doi     = {10.1162/089976604323057443}
}

@inproceedings{schrauwen2008eoc,
  author    = {Benjamin Schrauwen and Lars B{\"u}sing and Robert Legenstein},
  title     = {On Computational Power and the Order-Chaos Phase Transition in Reservoir Computing},
  booktitle = {Advances in Neural Information Processing Systems 21},
  year      = {2008},
  pages     = {1425--1432}
}

@ARTICLE{carroll2020eoc,
  title    = "Do reservoir computers work best at the edge of chaos?",
  author   = "Carroll, T L",
  journal  = "Chaos: An Interdisciplinary Journal of Nonlinear Science",
  volume   =  30,
  number   =  12,
  pages    = "121109",
  month    =  dec,
  year     =  2020
}

@article{Pena2021q_eoc,
  title = {Dynamical Phase Transitions in Quantum Reservoir Computing},
  author = {Mart\'{\i}nez-Pe\~na, Rodrigo and Giorgi, Gian Luca and Nokkala, Johannes and Soriano, Miguel C. and Zambrini, Roberta},
  journal = {Phys. Rev. Lett.},
  volume = {127},
  issue = {10},
  pages = {100502},
  numpages = {7},
  year = {2021},
  month = {Aug},
  publisher = {American Physical Society},
  doi = {10.1103/PhysRevLett.127.100502},
  url = {https://link.aps.org/doi/10.1103/PhysRevLett.127.100502}
}

@article{kobayashi2026q_eoc,
  title = {Edge of Many-Body Quantum Chaos in Quantum Reservoir Computing},
  author = {Kobayashi, Kaito and Motome, Yukitoshi},
  journal = {Phys. Rev. Lett.},
  volume = {136},
  issue = {4},
  pages = {040602},
  numpages = {8},
  year = {2026},
  month = {Jan},
  publisher = {American Physical Society},
  doi = {10.1103/j2qj-vwcl},
  url = {https://link.aps.org/doi/10.1103/j2qj-vwcl}
}

@ARTICLE{xia2022q_eoc,
  title    = "The reservoir learning power across quantum many-body
              localization transition",
  author   = "Xia, Wei and Zou, Jie and Qiu, Xingze and Li, Xiaopeng",
  journal  = "Frontiers of Physics",
  volume   =  17,
  number   =  3,
  pages    = "33506",
  month    =  apr,
  year     =  2022
}

@article{cotler2017chaos_designs,
  author = {Cotler, Jordan S. and Hunter-Jones, Nicholas and Liu, Junyu and Yoshida, Beni},
  title = {Chaos, Complexity, and Random Matrices},
  journal = {Journal of High Energy Physics},
  volume = {2017},
  number = {11},
  pages = {048},
  year = {2017},
  doi = {10.1007/JHEP11(2017)048},
  eprint = {1706.05400},
  archivePrefix = {arXiv},
  primaryClass = {hep-th}
}

@article{oliviero2021isospectral_twirling,
  author = {Oliviero, Salvatore F. E. and Leone, Lorenzo and Caravelli, Francesco and Hamma, Alioscia},
  title = {Random Matrix Theory of the Isospectral Twirling},
  journal = {SciPost Physics},
  volume = {10},
  pages = {076},
  year = {2021},
  doi = {10.21468/SciPostPhys.10.3.076},
  eprint = {2012.07681},
  archivePrefix = {arXiv},
  primaryClass = {quant-ph}
}

@article{soshnikov2000determinantal,
  author = {Soshnikov, Alexander},
  title = {Determinantal Random Point Fields},
  journal = {Russian Mathematical Surveys},
  volume = {55},
  number = {5},
  pages = {923--975},
  year = {2000},
  doi = {10.1070/RM2000v055n05ABEH000321},
  eprint = {math/0002099},
  archivePrefix = {arXiv},
  primaryClass = {math.PR}
}

@article{cerezo2022qml,
  author = {Cerezo, M. and Verdon, Guillaume and Huang, Hsin-Yuan and Cincio, Lukasz and Coles, Patrick J.},
  title = {Challenges and Opportunities in Quantum Machine Learning},
  journal = {Nature Computational Science},
  volume = {2},
  number = {9},
  pages = {567--576},
  year = {2022},
  doi = {10.1038/s43588-022-00311-3}
}

@article{havlicek2019qml,
  author = {Havl{\'i}{\v c}ek, Vojt{\v e}ch and C{\'o}rcoles, Antonio D. and Temme, Kristan and Harrow, Aram W. and Kandala, Abhinav and Chow, Jerry M. and Gambetta, Jay M.},
  title = {Supervised Learning with Quantum-Enhanced Feature Spaces},
  journal = {Nature},
  volume = {567},
  number = {7747},
  pages = {209--212},
  year = {2019},
  doi = {10.1038/s41586-019-0980-2}
}

@article{cong2019qcnn,
  author = {Cong, Iris and Choi, Soonwon and Lukin, Mikhail D.},
  title = {Quantum Convolutional Neural Networks},
  journal = {Nature Physics},
  volume = {15},
  number = {12},
  pages = {1273--1278},
  year = {2019},
  doi = {10.1038/s41567-019-0648-8}
}

@article{ghosh2019qrp,
  author = {Ghosh, Sanjib and Opala, Andrzej and Matuszewski, Micha{\l} and Paterek, Tomasz and Liew, Timothy C. H.},
  title = {Quantum Reservoir Processing},
  journal = {npj Quantum Information},
  volume = {5},
  pages = {35},
  year = {2019},
  doi = {10.1038/s41534-019-0149-8}
}

@article{govia2021oscillator,
  author = {Govia, L. C. G. and Ribeill, G. J. and Rowlands, G. E. and Krovi, H. K. and Ohki, T. A.},
  title = {Quantum Reservoir Computing with a Single Nonlinear Oscillator},
  journal = {Physical Review Research},
  volume = {3},
  number = {1},
  pages = {013077},
  year = {2021},
  doi = {10.1103/PhysRevResearch.3.013077}
}

@article{senanian2024microwave,
  author = {Senanian, Alen and others},
  title = {Microwave Signal Processing Using an Analog Quantum Reservoir Computer},
  journal = {Nature Communications},
  volume = {15},
  pages = {7490},
  year = {2024},
  doi = {10.1038/s41467-024-51161-8}
}

@article{paparelle2026optical,
  author = {Paparelle, Iris and Henaff, Johan and Garc{\'i}a-Beni, Jorge and Gillet, {\'E}milie and Montesinos, Daniel and Giorgi, Gian Luca and Soriano, Miguel C. and Zambrini, Roberta and Parigi, Valentina},
  title = {Experimental Memory Control in Continuous-Variable Optical Quantum Reservoir Computing},
  journal = {Nature Photonics},
  volume = {20},
  pages = {413--420},
  year = {2026},
  doi = {10.1038/s41566-026-01880-9}
}

@article{kobayashi2024feedback,
  author = {Kobayashi, Kaito and Fujii, Keisuke and Yamamoto, Naoki},
  title = {Feedback-Driven Quantum Reservoir Computing for Time-Series Analysis},
  journal = {PRX Quantum},
  volume = {5},
  number = {4},
  pages = {040325},
  year = {2024},
  doi = {10.1103/PRXQuantum.5.040325}
}

@article{nokkala2021gaussian,
  author = {Nokkala, Johannes and Mart{\'\i}nez-Pe{\~n}a, Rodrigo and Giorgi, Gian Luca and Parigi, Valentina and Soriano, Miguel C. and Zambrini, Roberta},
  title = {Gaussian States of Continuous-Variable Quantum Systems Provide Universal and Versatile Reservoir Computing},
  journal = {Communications Physics},
  volume = {4},
  pages = {53},
  year = {2021},
  doi = {10.1038/s42005-021-00556-w}
}

@article{martinezpena2023ipc,
  author = {Mart{\'\i}nez-Pe{\~n}a, Rodrigo and Nokkala, Johannes and Giorgi, Gian Luca and Zambrini, Roberta and Soriano, Miguel C.},
  title = {Information Processing Capacity of Spin-Based Quantum Reservoir Computing Systems},
  journal = {Cognitive Computation},
  volume = {15},
  pages = {1440--1451},
  year = {2023},
  doi = {10.1007/s12559-020-09772-y}
}

@article{llodra2023particle,
  author = {Llodr{\`a}, Guillem and Charalambous, Christos and Giorgi, Gian Luca and Zambrini, Roberta},
  title = {Benchmarking the Role of Particle Statistics in Quantum Reservoir Computing},
  journal = {Advanced Quantum Technologies},
  volume = {6},
  pages = {2200100},
  year = {2023},
  doi = {10.1002/qute.202200100}
}

@article{cindrak2024memory,
  author = {{\v{C}}indrak, Simon and Donvil, Brecht and L{\"u}dge, Kathy and Jaurigue, Lina},
  title = {Enhancing the Performance of Quantum Reservoir Computing and Solving the Time-Complexity Problem by Artificial Memory Restriction},
  journal = {Physical Review Research},
  volume = {6},
  pages = {013051},
  year = {2024},
  doi = {10.1103/PhysRevResearch.6.013051}
}

@misc{supplemental,
  note = {See the Supplemental Material appended to this manuscript for
  detailed proofs, exact population-moment calculations, and additional
  numerical results}
}

@book{sup_bhatia1997matrix,
  title     = {Matrix Analysis},
  author    = {Bhatia, Rajendra},
  series    = {Graduate Texts in Mathematics},
  volume    = {169},
  publisher = {Springer},
  address   = {New York},
  year      = {1997},
  doi       = {10.1007/978-1-4612-0653-8}
}

@article{sup_collins2006integration,
  title   = {Integration with respect to the {Haar} measure on unitary, orthogonal and symplectic group},
  author  = {Collins, Beno{\^i}t and {\'S}niady, Piotr},
  journal = {Communications in Mathematical Physics},
  volume  = {264},
  number  = {3},
  pages   = {773--795},
  year    = {2006},
  doi     = {10.1007/s00220-006-1554-3}
}

@article{sup_mele2024swap,
  title   = {Introduction to Haar Measure Tools in Quantum Information: A Beginner's Tutorial},
  author  = {Mele, Antonio Anna},
  journal = {Quantum},
  volume  = {8},
  pages   = {1340},
  year    = {2024},
  doi     = {10.22331/q-2024-05-08-1340}
}

@article{sup_hoerl1970ridge,
  title   = {Ridge Regression: Biased Estimation for Nonorthogonal Problems},
  author  = {Hoerl, Arthur E. and Kennard, Robert W.},
  journal = {Technometrics},
  volume  = {12},
  number  = {1},
  pages   = {55--67},
  year    = {1970},
  doi     = {10.1080/00401706.1970.10488634}
}

@article{sup_efronstein1981jackknife,
  title   = {The Jackknife Estimate of Variance},
  author  = {Efron, Bradley and Stein, Charles},
  journal = {The Annals of Statistics},
  volume  = {9},
  number  = {3},
  pages   = {586--596},
  year    = {1981},
  doi     = {10.1214/aos/1176345462}
}

@article{sup_bravyi2005fermionic,
  title   = {Lagrangian Representation for Fermionic Linear Optics},
  author  = {Bravyi, Sergey},
  journal = {Quantum Information and Computation},
  volume  = {5},
  number  = {3},
  pages   = {216--238},
  year    = {2005},
  doi     = {10.26421/QIC5.3-3},
  eprint  = {quant-ph/0404180},
  archivePrefix = {arXiv}
}

@article{sup_almohy2009frechet,
  title   = {Computing the Fr\'{e}chet Derivative of the Matrix Exponential,
             with an Application to Condition Number Estimation},
  author  = {Al-Mohy, Awad H. and Higham, Nicholas J.},
  journal = {SIAM Journal on Matrix Analysis and Applications},
  volume  = {30},
  number  = {4},
  pages   = {1639--1657},
  year    = {2009},
  doi     = {10.1137/080716426}
}

@article{sup_mityagin2020zero,
  title   = {The Zero Set of a Real Analytic Function},
  author  = {Mityagin, Boris S.},
  journal = {Mathematical Notes},
  volume  = {107},
  number  = {3--4},
  pages   = {529--530},
  year    = {2020},
  doi     = {10.1134/S0001434620030189}
}

@article{sup_efron1979bootstrap,
  title   = {Bootstrap Methods: Another Look at the Jackknife},
  author  = {Efron, Bradley},
  journal = {The Annals of Statistics},
  volume  = {7},
  number  = {1},
  pages   = {1--26},
  year    = {1979},
  doi     = {10.1214/aos/1176344552}
}

@article{sup_garoni2005density,
  author  = {Garoni, Timothy M. and Forrester, Peter J. and Frankel, Norman E.},
  title   = {Asymptotic corrections to the eigenvalue density of the {GUE} and {LUE}},
  journal = {Journal of Mathematical Physics},
  volume  = {46},
  pages   = {103301},
  year    = {2005},
  doi     = {10.1063/1.2035028},
  eprint  = {math-ph/0504053},
  archivePrefix = {arXiv}
}

\clearpage
\onecolumngrid
\setcounter{theorem}{0}
\setcounter{equation}{0}
\setcounter{section}{0}
\setcounter{figure}{0}
\setcounter{table}{0}
\setcounter{secnumdepth}{3}
\renewcommand{\thetheorem}{S\arabic{theorem}}
\renewcommand{\theequation}{S\arabic{equation}}
\renewcommand{\thesection}{S\arabic{section}}
\renewcommand{\thefigure}{S\arabic{figure}}
\renewcommand{\thetable}{S\arabic{table}}
\renewcommand{\theHtheorem}{S\arabic{theorem}}
\renewcommand{\theHequation}{S\arabic{equation}}
\renewcommand{\theHsection}{S\arabic{section}}
\renewcommand{\theHfigure}{S\arabic{figure}}
\renewcommand{\theHtable}{S\arabic{table}}

\begin{center}
{\large\bfseries Supplemental Material for \texorpdfstring{\\}{ }
Theory of the Edge-of-Chaos Advantage in Quantum Reservoir Computing}
\end{center}

This Supplemental Material provides the definitions, proofs, and numerical
details supporting the theoretical results and crossover calculation in the main
text.
Section~\ref{sm:sec:v2-protocol} establishes the common QRC and regression
framework.  Section~\ref{sm:sec:v2-forgetting} derives the forgetting bounds
in the chaotic regime.  Section~\ref{sm:sec:v2-free} proves the free-fermion
obstruction and establishes interaction activation under the QRC rewinding protocol.
Section~\ref{sm:sec:v2-two-copy} derives the exact stationary population moments
used across the quadratic--quartic crossover, and
Sec.~\ref{sm:sec:v2-numerics} gives additional numerical results.

\section{Protocol and population ridge regression}
\label{sm:sec:v2-protocol}

Let the reservoir Hilbert space be
$\mathcal H_A\otimes\mathcal H_{\bar{A}}$, where $A$ is the reset subsystem and
$\bar{A}$ is the memory subsystem,
$d_A=\dim\mathcal H_A$, $d_{\bar{A}}=\dim\mathcal H_{\bar{A}}$, and $d=d_Ad_{\bar{A}}$.
We keep $d_A$ explicit in intermediate formulas that also apply to a general
reset subsystem.  The reservoir studied in the main text has $d_A=2$.
At step $k$, the input $u_k$ is encoded in a density operator
$\rho_{\rm in}(u_k)$ on $A$.  The reservoir then evolves with
$U=\exp(-\im Ht)$, where $H$ is its fixed Hermitian Hamiltonian and $t$
is the evolution time between successive inputs, with $\hbar=1$.
Let $\rho_k$ be the resulting full reservoir state and
$\sigma_k:=\Tr_A\rho_k$ its memory state, where $\Tr_A$ denotes the
partial trace over $A$.  One cycle of reset and evolution is described by
\begin{align}
    \rho_k
    &=U[\rho_{\rm in}(u_k)\otimes\sigma_{k-1}]U^\dagger,
    \label{sm:eq:v2-reset-full}\\
    \sigma_k
    &=\Phi_{u_k,U}(\sigma_{k-1}),
    &
    \Phi_{u,U}(\sigma)
    &:=\Tr_A\{U[\rho_{\rm in}(u)\otimes\sigma]U^\dagger\}.
    \label{sm:eq:v2-reset-memory}
\end{align}
The same linear map $\Phi_{u,U}$ also acts on differences of memory states.
The inputs are independent and identically distributed (i.i.d.) unless explicitly
stated otherwise.  Except for the finite-depth interaction-activation result
in Sec.~\ref{sm:sec:v2-activation}, the reservoir is assumed to satisfy the
echo-state property (ESP), so that Eq.~\eqref{sm:eq:v2-reset-memory} defines a
unique stationary causal process~\cite{chen2020temporal}.  The finite-depth
result instead specifies a fixed input-independent initialization.
Unless stated otherwise, $\mathbb E_{\rm in}$, $\operatorname{Cov}$, and
$\operatorname{Var}$ denote expectation, covariance, and variance over the
input law of the process under consideration with $U$ fixed.  For two
QRC runs whose inputs differ only at one step, let $u'$ denote the
input independently drawn from the input distribution at that step in the
second run. 
$\mathbb E_{\mathrm{in}'}$ denotes the expectation over the input history
of the first run and $u'$.  Fresh-input and Hamiltonian-ensemble averages are
identified explicitly.

For the product task studied below, $A$ is the first qubit and we use
amplitude encoding for $u\in[0,1]$:
\begin{equation}
    \rho_{\rm in}(u)
    =|\psi(u)\rangle\!\langle\psi(u)|,
    \qquad
    |\psi(u)\rangle=\sqrt{1-u}|0\rangle+\sqrt u|1\rangle.
    \label{sm:eq:v2-amplitude-encoding}
\end{equation}
Equivalently, with $z(u):=u-1/2$, $z_k:=z(u_k)$, and $\phi(u):=2\sqrt{u(1-u)}$,
\begin{equation}
    \rho_{\rm in}(u)=\frac I2+\frac{\phi(u)}2X-z(u)Z.
    \label{sm:eq:v2-amplitude-bloch}
\end{equation}
Here $X$, $Y$, and $Z$ denote the Pauli operators on $A$.  We add the
subscript $A$ when displaying tensor products with the memory subsystem.
We write $I$ for an identity operator or matrix of the dimension required
by its expression, and $I_A$ and $I_{\bar A}$ when specifying the subsystem.

Let $O_1,\ldots,O_m$ be the $m$ Hermitian readout observables on the full
reservoir, normalized by $\Tr(O_iO_j)=\delta_{ij}$.  The feature vector is
$\bm x_k=(x_{k,1},\ldots,x_{k,m})^\top$, where
$x_{k,i}:=\Tr(O_i\rho_k)$ for $i=1,\ldots,m$.
For an operator $B$, the squared unnormalized Hilbert--Schmidt norm is
$\|B\|_2^2=\Tr(B^\dagger B)$.  Unsubscripted vector norms are Euclidean norms.
A random variable is called centered when its expectation is zero.
For a real target $y_k$ with variance
$\sigma_y^2:=\operatorname{Var}(y_k)\in(0,\infty)$, define the feature
covariance matrix and feature--target covariance vector
\begin{equation}
    \Sigma:=\operatorname{Cov}(\bm x_k,\bm x_k),
    \qquad
    \bm c:=\operatorname{Cov}(\bm x_k,y_k).
    \label{sm:eq:v2-population-moments}
\end{equation}
The $i$th component of $\bm c$ is
$c_i:=\operatorname{Cov}(x_{k,i},y_k)$.
Population ridge regression~\cite{sup_hoerl1970ridge} fits the intercept
$b\in\mathbb R$ and readout weights $\bm w\in\mathbb R^m$ by minimizing
\begin{equation}
    \mathbb E_{\rm in}[(y_k-b-\bm w^\top\bm x_k)^2]+\lambda\|\bm w\|^2,
    \qquad \lambda>0.
    \label{sm:eq:v2-population-ridge}
\end{equation}
Here $\lambda>0$ is the ridge regularization parameter.  The fitted weights are
$\bm w_\lambda=K_\lambda\bm c$, where
$K_\lambda=(\Sigma+\lambda I)^{-1}$, and the fitted intercept is
$b_\lambda=\mathbb E_{\rm in}y_k
-\bm w_\lambda^\top\mathbb E_{\rm in}\bm x_k$.
The population normalized mean-squared error (NMSE), evaluated without the
penalty term, is
\begin{equation}
    \operatorname{NMSE}_{\rm pop}
    :=\frac{\mathbb E_{\rm in}[(y_k-b_\lambda-\bm w_\lambda^\top\bm x_k)^2]}
    {\sigma_y^2}.
    \label{sm:eq:v2-population-nmse}
\end{equation}
The corresponding performance is
\begin{align}
    P_\lambda
    &:=1-\operatorname{NMSE}_{\rm pop}=\frac{\bm c^{\mathsf T}
      (K_\lambda+\lambda K_\lambda^2)\bm c}
      {\sigma_y^2}.
    \label{sm:eq:v2-exact-ridge-gain}
\end{align}
Indeed,
$K_\lambda\Sigma K_\lambda=K_\lambda-\lambda K_\lambda^2$.

The analytic formulas above use Hilbert--Schmidt-normalized observables and
the penalty $\lambda$.  In the numerical calculations we instead use the
physical Pauli or Majorana bilinears $\widetilde O_i$, with
$\Tr(\widetilde O_i^2)=d$, as raw feature observables.
Let $\lambda_{\rm raw}$ be the ridge regularization parameter for these observables.
The subscripts ``raw'' and ``norm'' refer to quantities computed using
$\widetilde O_i$ and $O_i$, respectively.  Thus
$\Sigma_{\rm norm}=\Sigma$ and $\bm c_{\rm norm}=\bm c$.
Since $O_i=\widetilde O_i/\sqrt d$, the covariances and penalties satisfy
\begin{equation}
    \Sigma_{\rm norm}=\frac{\Sigma_{\rm raw}}d,
    \qquad
    \bm c_{\rm norm}=\frac{\bm c_{\rm raw}}{\sqrt d},
    \qquad
    \lambda_{\rm norm}=\frac{\lambda_{\rm raw}}d.
    \label{sm:eq:v2-coordinate-scaling}
\end{equation}
Thus $\lambda_{\rm raw}$ and $\lambda_{\rm norm}$ represent the same ridge
regularization in two feature-coordinate conventions, and the rescaling
leaves $P_\lambda$ unchanged.  Hereafter $\lambda$ denotes
$\lambda_{\rm norm}$ in normalized formulas.  Section~\ref{sm:sec:v2-numerics}
reports penalties in the raw convention and uses
Eq.~\eqref{sm:eq:v2-coordinate-scaling} when applying those formulas.

\section{Chaotic regime: GUE forgetting under the EDA}
\label{sm:sec:v2-forgetting}

For a square-integrable target $y_k$ and an integer delay $\tau\geq1$, define the
remaining-input sigma-algebra
\begin{equation}
    \mathcal F_{k,\tau}
    :=\sigma(u_j:j\leq k,\ j\neq k-\tau).
\end{equation}
Here $\sigma(\cdot)$ denotes the sigma-algebra generated by the enclosed
random variables, so $\mathcal F_{k,\tau}$ contains all input information
through time $k$ except $u_{k-\tau}$.
We call $y_k$ a lag-$\tau$ innovation target when
\begin{equation}
    \mathbb E_{\rm in}[y_k\mid\mathcal F_{k,\tau}]
    =\mathbb E_{\rm in} y_k.
    \label{sm:eq:v2-lag-innovation}
\end{equation}
This equality holds with probability one over the input histories.  The best
prediction from the other inputs alone is therefore the target mean.
Examples include any centered function of $u_{k-\tau}$ and its product
with a centered function of $u_k$.

Fix $U$ and the observation time $k$.  In this section, we display the
unitary dependence by writing $\rho_{k,U}:=\rho_k$, $P_\lambda(U)$,
and $\bm c(U)$ for the state, performance, and covariance vector
at this fixed $U$.  Two QRC runs use the same input
history through time $k-\tau-1$.  At time $k-\tau$, the first run receives the
input $u_{k-\tau}$ and the second receives $u'_{k-\tau}$, drawn
from the same input distribution independently of $U$ and the entire
input history of the first run.  All other inputs, including those at times
$k-\tau+1,\ldots,k$, are identical in the two runs.  The unitary $U$
also remains the same.  Let $\rho_{k,U}^{(k-\tau)}$ denote the state of the
second run at time $k$.  The expectation $\mathbb E_{\mathrm{in}'}$ averages
over the input history of the first run and the independently drawn $u'_{k-\tau}$,
with $U$ held fixed.  Define
\begin{equation}
    \mathcal I_\tau(U)
    :=\frac12\mathbb E_{\mathrm{in}'}
      \left\|\rho_{k,U}-\rho_{k,U}^{(k-\tau)}\right\|_2^2.
    \label{sm:eq:v2-lag-influence}
\end{equation}
Conditional on $\mathcal F_{k,\tau}$, the two states are i.i.d. because
their different inputs are drawn independently.  Let
\begin{equation}
    \overline\rho_{k,U}
    :=\mathbb E_{\rm in}[\rho_{k,U}\mid\mathcal F_{k,\tau}]
\end{equation}
be the conditional mean state, obtained by averaging over $u_{k-\tau}$
while holding all other inputs fixed.  Expanding the squared difference
and using conditional independence to remove
the cross term gives the conditional independent-copy
variance identity~\cite{sup_efronstein1981jackknife},
\begin{equation}
\begin{aligned}
 &\mathbb E_{\mathrm{in}'}\!\left[\left.
 \left\|\rho_{k,U}-\rho_{k,U}^{(k-\tau)}\right\|_2^2
 \right|\mathcal F_{k,\tau}\right]\\
 &\quad=\mathbb E_{\mathrm{in}'}\!\left[\left.
 \left\|\rho_{k,U}-\overline\rho_{k,U}\right\|_2^2
 \right|\mathcal F_{k,\tau}\right]
 +\mathbb E_{\mathrm{in}'}\!\left[\left.
 \left\|\rho_{k,U}^{(k-\tau)}-\overline\rho_{k,U}\right\|_2^2
 \right|\mathcal F_{k,\tau}\right]\\
 &\quad=2\mathbb E_{\mathrm{in}'}\!\left[\left.
 \left\|\rho_{k,U}-\overline\rho_{k,U}\right\|_2^2
 \right|\mathcal F_{k,\tau}\right].
\end{aligned}
\end{equation}
Combining this identity with Eq.~\eqref{sm:eq:v2-lag-influence} and averaging
over $\mathcal F_{k,\tau}$ yields
\begin{equation}
    \mathcal I_\tau(U)
    =\mathbb E_{\rm in}\|\rho_{k,U}-\overline\rho_{k,U}\|_2^2.
    \label{sm:eq:v2-influence-variance}
\end{equation}

The following lemma bounds the performance without requiring
$\Sigma$ and converts the surviving input influence into the performance bound
in the chaotic regime.

\begin{lemma}[Sharp covariance-free ridge envelope]
\label{sm:lem:v2-sharp-envelope}
Let $\bm a$ be the feature--target covariance divided by the target standard
deviation, and let $\mathcal R$ be its squared norm.  Define
\begin{equation}
    \bm a:=\frac{\bm c}{\sigma_y},
    \qquad
    \mathcal R:=\|\bm a\|^2,
    \qquad
    f_\lambda(r):=1-\left(\frac{\lambda}{\lambda+r}\right)^2,\quad r\geq0.
    \label{sm:eq:v2-psi-definition}
\end{equation}
For every joint feature--target distribution with finite second moments,
\begin{equation}
    P_\lambda\leq f_\lambda(\mathcal R)
    \leq\min\left\{1,\frac{2\mathcal R}{\lambda}\right\}.
    \label{sm:eq:v2-sharp-envelope}
\end{equation}
For fixed $\mathcal R$ and $\lambda$, the first bound can be attained by a
joint feature--target distribution.  If
$\mathcal R>0$, equality holds precisely when
\begin{equation}
    (\Sigma-\bm a\bm a^\top)\bm a=0.
    \label{sm:eq:v2-sharp-equality}
\end{equation}
\end{lemma}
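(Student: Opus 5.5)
We normalize by $\sigma_y$ throughout. Replacing $y_k$ by $y_k/\sigma_y$ turns $\bm c$ into $\bm a$ and leaves $P_\lambda=\bm a^{\mathsf T}(K_\lambda+\lambda K_\lambda^2)\bm a$ unchanged. The only structural input is positive semidefiniteness of the joint covariance of $(\bm x_k,y_k/\sigma_y)$:
\begin{equation*}
 \begin{pmatrix}\Sigma & \bm a\\ \bm a^{\mathsf T} & 1\end{pmatrix}\succeq 0 .
\end{equation*}
By a Schur complement this gives $\Sigma=\bm a\bm a^{\mathsf T}+M$ with $M\succeq0$. Thus $\Sigma$ enters only through the PSD "residual" $M$, and the task is to show that $M=0$, or more precisely $M\bm a=0$, is the worst case.

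\textbf{Step 1: reduce $P_\lambda$ to two scalars.} Set $B:=(M+\lambda I)^{-1}$, $\beta:=\bm a^{\mathsf T}B\bm a$ and $\gamma:=\bm a^{\mathsf T}B^2\bm a$. Sherman--Morrison gives $K_\lambda=B-B\bm a\bm a^{\mathsf T}B/(1+\beta)$, hence $K_\lambda\bm a=B\bm a/(1+\beta)$. It follows that
\begin{equation*}
 P_\lambda=\frac{\beta}{1+\beta}+\frac{\lambda\gamma}{(1+\beta)^2},
 \qquad
 1-P_\lambda=\frac{1+\beta-\lambda\gamma}{(1+\beta)^2}.
\end{equation*}

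\textbf{Step 2: spectral bookkeeping.} Diagonalize $M$ with eigenvalues $m_i\ge0$, and let $w_i\ge0$ be the squared components of $\bm a$ in that eigenbasis, so $\sum_i w_i=\mathcal R$. Then $\beta=\sum_i w_i/(m_i+\lambda)$ and $\gamma=\sum_i w_i/(m_i+\lambda)^2$. Define
\begin{equation*}
 \delta:=\beta-\lambda\gamma=\sum_i\frac{w_i m_i}{(m_i+\lambda)^2}\ge0 .
\end{equation*}
We also have
\begin{equation*}
 \frac{\mathcal R}{\lambda}-\beta=\sum_i\frac{w_i m_i}{\lambda(m_i+\lambda)}\ge\delta ,
\end{equation*}
because $1/\lambda\ge 1/(m_i+\lambda)$.

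\textbf{Step 3: the first inequality.} The target bound $P_\lambda\le f_\lambda(\mathcal R)$ is equivalent to
\begin{equation*}
 (1+\mathcal R/\lambda)^2(1+\delta)\ge(1+\beta)^2 .
\end{equation*}
Step 2 gives $1+\mathcal R/\lambda\ge 1+\beta+\delta$. Hence the left side is at least $(1+\beta+\delta)^2(1+\delta)\ge(1+\beta)^2$, which proves the bound.

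\textbf{Step 4: equality case.} If $\mathcal R>0$, equality forces $\delta=0$ and $\mathcal R/\lambda=\beta$, i.e.\ $w_i m_i=0$ for every $i$. This says $M\bm a=0$, which is exactly $(\Sigma-\bm a\bm a^{\mathsf T})\bm a=0$. Conversely, if $M\bm a=0$ then $\beta=\lambda\gamma=\mathcal R/\lambda$ and equality holds. For attainability, take a scalar feature $\bm x_k=\bm a\,y_k/\sigma_y$ (or one with $\mathcal R=0$ trivially); then $\Sigma=\bm a\bm a^{\mathsf T}$ and the bound is attained.

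\textbf{Step 5: the elementary envelope.} Write $f_\lambda(r)=r(2\lambda+r)/(\lambda+r)^2$. Clearly $f_\lambda(r)\le1$. Moreover $f_\lambda(r)\le 2r/\lambda$ reduces to $\lambda(2\lambda+r)\le2(\lambda+r)^2$, which is evident.

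The main obstacle is Step 3. $P_\lambda$ is not obviously monotone in $\Sigma$ in the Loewner order, since $K_\lambda+\lambda K_\lambda^2=K_\lambda(\Sigma+2\lambda)K_\lambda$ mixes $\Sigma$ in numerator and denominator. The rank-one Sherman--Morrison split, together with the two comparisons $\delta\ge0$ and $\mathcal R/\lambda-\beta\ge\delta$, is the device that resolves this.
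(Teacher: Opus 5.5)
Your proof is correct, but it takes a different route from the paper's. Both arguments start from the Schur complement $M=\Sigma-\bm a\bm a^{\mathsf T}\succeq0$. The paper, however, never inverts $M+\lambda I$.

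\textbf{The paper's route.} With $\bm w_{\rm std}=K_\lambda\bm a$ and $s=\bm a^{\mathsf T}K_\lambda\bm a$, it writes
\[
P_\lambda=2s-s^2-\bm w_{\rm std}^{\mathsf T}M\,\bm w_{\rm std}
\]
and drops the nonnegative residual term. It then bounds
\[
s\le\bm a^{\mathsf T}(\bm a\bm a^{\mathsf T}+\lambda I)^{-1}\bm a=\frac{\mathcal R}{\mathcal R+\lambda}
\]
using operator monotonicity of the inverse. Sherman--Morrison is applied only at the extremal point $M=0$. Monotonicity of $2s-s^2$ on $[0,1]$ finishes the argument. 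This sidesteps the non-monotonicity of $P_\lambda$ in $\Sigma$ that you identify: the residual is isolated in one subtracted quadratic form, plus the scalar $s$, which is monotone in the Loewner order.

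\textbf{Your route.} You apply Sherman--Morrison with the general residual and diagonalize $M$. This yields the exact identity $1-P_\lambda=(1+\delta)/(1+\beta)^2$, which the paper does not produce.

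\textbf{Trade-off.} Yours costs more bookkeeping. In return, both the slack in the bound and the equality case become explicit scalar conditions: equality holds iff $w_im_i=0$ for all $i$, and this argument does not even need $\mathcal R>0$. The paper's equality claim instead rests on the characterization of equality in the inverse-order inequality, $M(\bm a\bm a^{\mathsf T}+\lambda I)^{-1}\bm a=0$, which it states without proof.

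\textbf{Minor wording point.} $\bm x_k=\bm a\,y_k/\sigma_y$ is a vector feature collinear with the target, not a scalar feature. It still gives $\Sigma=\bm a\bm a^{\mathsf T}$ and $\bm c=\sigma_y\bm a$, so the attainability claim stands.
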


\textit{Proof.}
The covariance matrix of $\bm x_k-\mathbb E_{\rm in}\bm x_k$ and
$(y_k-\mathbb E_{\rm in}y_k)/\sigma_y$ is positive semidefinite, so its
Schur complement gives
\begin{equation}
    \Sigma_\perp:=\Sigma-\bm a\bm a^\top\succeq0.
\end{equation}
For this proof, use the weights
$\bm w_{\rm std}:=K_\lambda\bm a=\bm w_\lambda/\sigma_y$ associated with
the target divided by its standard deviation, and set
$s:=\bm a^\top\bm w_{\rm std}$.  Equation~\eqref{sm:eq:v2-exact-ridge-gain} gives
\begin{equation}
    P_\lambda
    =2s-\bm w_{\rm std}^\top\Sigma\bm w_{\rm std}
    =2s-s^2-\bm w_{\rm std}^\top\Sigma_\perp\bm w_{\rm std}
    \leq2s-s^2.
    \label{sm:eq:v2-sharp-proof-one}
\end{equation}
Since
$\Sigma+\lambda I\succeq\bm a\bm a^\top+\lambda I$,
inverse order and the Sherman--Morrison identity imply
\begin{equation}
    0\leq s\leq
    \bm a^\top(\bm a\bm a^\top+\lambda I)^{-1}\bm a
    =\frac{\mathcal R}{\mathcal R+\lambda}<1.
\end{equation}
These standard positive-matrix and rank-one inverse identities are reviewed
in Ref.~\cite{sup_bhatia1997matrix}.
The function $2s-s^2$ is increasing on $[0,1]$, which proves the first
inequality in Eq.~\eqref{sm:eq:v2-sharp-envelope}; the second is elementary.
Equality in the inverse-order bound occurs precisely when
$\Sigma_\perp(\bm a\bm a^\top+\lambda I)^{-1}\bm a=0$, equivalently when
$\Sigma_\perp\bm a=0$.  In that case
$\bm w_{\rm std}=\bm a/(\mathcal R+\lambda)$ and
$\bm w_{\rm std}^\top\Sigma_\perp\bm w_{\rm std}=0$, so Eq.~\eqref{sm:eq:v2-sharp-proof-one} is also
saturated.  Conversely, equality in the envelope forces equality in these two
steps and hence the same condition.  Taking
$\Sigma=\bm a\bm a^\top$ establishes optimality at fixed $\mathcal R$.
\hfill$\blacksquare$

\begin{lemma}[Performance bound from input influence]
\label{sm:thm:v2-universal-forgetting}
For i.i.d. inputs, every fixed $U$ admitting a causal stationary process,
every lag-$\tau$ innovation target, and every $\lambda>0$,
\begin{equation}
    P_\lambda(U)
    \leq f_\lambda[\mathcal I_\tau(U)]
    \leq\min\left\{1,
      \frac{2\mathcal I_\tau(U)}{\lambda}
      \right\}.
    \label{sm:eq:v2-universal-forgetting}
\end{equation}
\end{lemma}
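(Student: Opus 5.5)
The plan is to reduce this to Lemma~\ref{sm:lem:v2-sharp-envelope}. Since $f_\lambda$ is increasing in $r$, it suffices to show $\mathcal R=\|\bm c\|^2/\sigma_y^2\leq\mathcal I_\tau(U)$. The second inequality in Eq.~\eqref{sm:eq:v2-universal-forgetting} then follows exactly as in that lemma.

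\emph{Step 1: rewrite $\bm c$ through the innovation part of the features.} Because $y_k$ is a lag-$\tau$ innovation target, $\mathbb E_{\rm in}[y_k-\mathbb E_{\rm in}y_k\mid\mathcal F_{k,\tau}]=0$. For each feature, decompose
\[
x_{k,i}-\mathbb E_{\rm in}x_{k,i}=\bigl(x_{k,i}-\bar x_{k,i}\bigr)+\bigl(\bar x_{k,i}-\mathbb E_{\rm in}x_{k,i}\bigr),
\qquad \bar x_{k,i}:=\Tr(O_i\overline\rho_{k,U}).
\]
Linearity of the trace gives $\bar x_{k,i}=\mathbb E_{\rm in}[x_{k,i}\mid\mathcal F_{k,\tau}]$. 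The second piece is $\mathcal F_{k,\tau}$-measurable, so its covariance with $y_k$ vanishes by the tower property. Hence
\[
c_i=\mathbb E_{\rm in}\bigl[\Tr\bigl(O_i(\rho_{k,U}-\overline\rho_{k,U})\bigr)\,(y_k-\mathbb E_{\rm in}y_k)\bigr].
\]
Stationarity and causality ensure that $\rho_{k,U}$ is a measurable function of the inputs $u_j$, $j\leq k$. This is what makes the conditional expectation meaningful.

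\emph{Step 2: Cauchy--Schwarz in the combined space.} Let $D:=\rho_{k,U}-\overline\rho_{k,U}$ and $\eta:=(y_k-\mathbb E_{\rm in}y_k)/\sigma_y$. Then $a_i=\mathbb E_{\rm in}[\Tr(O_iD)\eta]$. For any unit vector $\bm v\in\mathbb R^m$, set $O_{\bm v}=\sum_iv_iO_i$, which satisfies $\|O_{\bm v}\|_2=1$ by orthonormality. Cauchy--Schwarz, first in $L^2$ of the inputs and then in Hilbert--Schmidt space, gives
\[
\bm v^\top\bm a=\mathbb E_{\rm in}[\Tr(O_{\bm v}D)\eta]\leq\bigl(\mathbb E_{\rm in}\Tr(O_{\bm v}D)^2\bigr)^{1/2}\leq\bigl(\mathbb E_{\rm in}\|D\|_2^2\bigr)^{1/2}.
\]
Maximizing over $\bm v$ yields $\mathcal R\leq\mathbb E_{\rm in}\|D\|_2^2$. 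By Eq.~\eqref{sm:eq:v2-influence-variance}, the right-hand side equals $\mathcal I_\tau(U)$.

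\emph{Step 3: apply the lemma.} Lemma~\ref{sm:lem:v2-sharp-envelope} gives $P_\lambda\leq f_\lambda(\mathcal R)\leq f_\lambda(\mathcal I_\tau(U))$. The last bound uses monotonicity of $f_\lambda$, and $f_\lambda(r)\leq\min\{1,2r/\lambda\}$ is elementary.

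The only delicate point is Step 1: one must justify that $\Tr(O_i\overline\rho_{k,U})$ is truly the conditional expectation of the feature, and that all moments are finite. Finiteness is immediate because states and features are bounded. A subtlety worth flagging is whether the observables may be complex combinations. They are Hermitian, so $\Tr(O_{\bm v}D)$ is real and the Hilbert--Schmidt Cauchy--Schwarz step applies directly.
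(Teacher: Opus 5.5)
Your proof is correct and follows the paper's argument. The innovation property removes the $\mathcal F_{k,\tau}$-measurable part, so that $c_i=\Tr(O_iC_U)$ with $C_U=\mathbb E_{\rm in}[(y_k-\mathbb E_{\rm in}y_k)(\rho_{k,U}-\overline\rho_{k,U})]$; Cauchy--Schwarz with orthonormality and Eq.~\eqref{sm:eq:v2-influence-variance} then give $\|\bm c\|^2\leq\sigma_y^2\mathcal I_\tau(U)$, and the envelope lemma with monotonicity of $f_\lambda$ finishes. The only cosmetic difference is that you apply Cauchy--Schwarz direction by direction through a unit vector $\bm v$, whereas the paper first bounds $\|\bm c\|^2\leq\|C_U\|_2^2$ and then $\|C_U\|_2^2\leq\sigma_y^2\,\mathbb E_{\rm in}\|\rho_{k,U}-\overline\rho_{k,U}\|_2^2$.
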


\textit{Proof.}
Write $y_k^\circ:=y_k-\mathbb E_{\rm in}y_k$ for the centered target.
Equation~\eqref{sm:eq:v2-lag-innovation} implies
$\mathbb E_{\rm in}[y_k^\circ\mid\mathcal F_{k,\tau}]=0$.
The feature--target covariance can therefore be expressed through the
target--state covariance operator $C_U$:
\begin{equation}
    c_i(U)=\Tr(O_iC_U),
    \qquad
    C_U:=\mathbb E_{\rm in}\{y_k^\circ
      (\rho_{k,U}-\overline\rho_{k,U})\}.
    \label{sm:eq:v2-influence-covariance-operator}
\end{equation}
Orthonormality of the observables, Hilbert-space Cauchy--Schwarz, and
Eq.~\eqref{sm:eq:v2-influence-variance} give
\begin{equation}
    \|\bm c(U)\|^2
    \leq\|C_U\|_2^2
    \leq\sigma_y^2\mathcal I_\tau(U).
    \label{sm:eq:v2-covariance-influence-bound}
\end{equation}
Apply Lemma~\ref{sm:lem:v2-sharp-envelope} and use the monotonicity of
$f_\lambda$.
\hfill$\blacksquare$

We now follow the input influence through successive cycles of reset and
evolution.  At fixed evolution time $t$, choose an ensemble of Hamiltonians $H$ and
let $\nu_t$ be the induced distribution of $U(t)=\exp(-\im Ht)$.  For each $U\sim\nu_t$, compare the two runs defined
above, whose inputs differ only at step $k-\tau$.
We write $\mathbb E_{U\sim\nu_t}$ for expectation over this draw.
Their memory difference is
\begin{equation}
 \Delta_j:=\Tr_A\!\left[
 \rho_{k-\tau+j,U}-\rho_{k-\tau+j,U}^{(k-\tau)}
 \right],\qquad j=0,\ldots,\tau-1.
\end{equation}
The subscript on $\rho$ gives the time at which the state is observed,
after that step's reset and evolution.  Thus $\Delta_0$ is the difference immediately after
the two different inputs at time $k-\tau$ have each been encoded and evolved.
The transition from $\Delta_j$ to $\Delta_{j+1}$ is one further round of
reset and evolution, with the same next input in both runs.

The map $\Phi_{u,U}$ in Eq.~\eqref{sm:eq:v2-reset-memory} acts linearly on
traceless Hermitian operators $\Delta$ on the memory subsystem $\bar A$:
\begin{equation}
 \Phi_{u,U}(\Delta)
 =\Tr_A\!\left[U(\rho_{\rm in}(u)\otimes\Delta)U^\dagger\right].
\end{equation}
For a memory-state difference, its output is the difference remaining after
both runs receive the common input $u$ and evolve with the same $U$.
For $j=0,\ldots,\tau-2$,
\begin{equation}
 \Delta_{j+1}=\Phi_{u_{k-\tau+j+1},U}(\Delta_j).
\end{equation}
The next common input is $u_{k-\tau+j+1}$.  We denote it simply by $u$ when
averaging this one update.  Because the inputs are i.i.d., it is independent
of the inputs already used in the two runs.  For fixed $U$ and fixed
nonzero $\Delta$, define the retention factor per step by
\begin{equation}
 q_U(\Delta):=\frac{\mathbb E_u\|\Phi_{u,U}(\Delta)\|_2^2}{\|\Delta\|_2^2},
 \label{sm:eq:v2-q-definition}
\end{equation}
where $\mathbb E_u$ averages only this next common input, drawn from the
input distribution.  Thus $q_U(\Delta)$ is the mean squared difference
after one update divided by the squared difference before it.  We set
$q_U(0)=0$.

For averages over the input histories and reservoir ensemble, write
$\mathbb E_{\rm joint}:=\mathbb E_{U\sim\nu_t}\mathbb E_{\mathrm{in}'}$:
we first average the input history of the first run and the independently drawn
$u'_{k-\tau}$ at fixed $U$, and then average over $U\sim\nu_t$.
Conditioning on $U$ and the inputs
already used, and averaging the next common input, gives the exact identity
\begin{equation}
 \mathbb E_{\rm joint}\|\Delta_{j+1}\|_2^2
 =\mathbb E_{\rm joint}[\|\Delta_j\|_2^2q_U(\Delta_j)].
 \label{sm:eq:v2-exact-influence-recursion}
\end{equation}
This identity explains the weight $\|\Delta_j\|_2^2$ in the definition of EDA: for the same retention
factor, a pair of runs with a larger remaining difference contributes more
to the mean squared difference after the next update.

For the influence propagation below, assume pure reset states, as in the
main text.  At the observation time $k$, the memory difference before the final reset
is $\Delta_{\tau-1}$.  The full-state difference after that reset and
evolution is
\begin{equation}
 \rho_{k,U}-\rho_{k,U}^{(k-\tau)}
 =U\bigl[\rho_{\rm in}(u_k)\otimes \Delta_{\tau-1}\bigr]U^\dagger.
\end{equation}
Unitary invariance of the Hilbert--Schmidt norm and the purity of the reset
state, $\Tr[\rho_{\rm in}(u_k)^2]=1$, therefore give
$\|\rho_{k,U}-\rho_{k,U}^{(k-\tau)}\|_2^2
=\|\Delta_{\tau-1}\|_2^2$.  Hence
\begin{equation}
 \overline{\mathcal I}_\tau
 :=\mathbb E_{U\sim\nu_t}\mathcal I_\tau(U)
 =\frac12\mathbb E_{U\sim\nu_t}
   \mathbb E_{\mathrm{in}'}\|\Delta_{\tau-1}\|_2^2.
 \label{sm:eq:v2-ensemble-influence}
\end{equation}

We now introduce the effective decorrelation approximation (EDA).
First, define the ensemble mean of the
scalar-valued function $q_U$ by
\begin{equation}
 \bar q_t:=\mathbb E_{U\sim\nu_t}[q_U].
\end{equation}
This is a pointwise average: we hold the memory-operator argument fixed
while averaging the values of $q_U$ over the reservoir ensemble.  We then
evaluate the resulting deterministic function at the actual memory
difference $\Delta_j$. 

The value $\bar q_t(\Delta_j)$ gives the mean retention if the unitary for
the next cycle were drawn independently of $\Delta_j$.  However, the original $U$
can be correlated with $\Delta_j$, since it was used to generate that
difference.  The EDA uses
$\bar q_t$ to estimate the mean squared memory difference after the next
cycle.  Using the exact recursion in
Eq.~\eqref{sm:eq:v2-exact-influence-recursion}, its condition for every step
$j=0,\ldots,\tau-2$ in the chosen window can be written as
\begin{equation}
 \frac{\left|
 \mathbb E_{\rm joint}\|\Delta_{j+1}\|_2^2
 -\mathbb E_{\rm joint}[\|\Delta_j\|_2^2\bar q_t(\Delta_j)]
 \right|}
 {\mathbb E_{\rm joint}\|\Delta_j\|_2^2}
 \leq\varepsilon_{\rm EDA}.
 \label{sm:eq:v2-eda}
\end{equation}
The condition applies when the denominator is positive.  If it vanishes,
all subsequent memory differences vanish as well.
The error is measured relative to the current mean squared difference.
The same $\varepsilon_{\rm EDA}\geq0$ controls all steps in this window
and obeys $\varepsilon_{\rm EDA}\to0$ as $d_{\bar A}\to\infty$ at fixed
reset subsystem $A$.  It may depend on the chosen window; no uniform
claim at arbitrarily long delays is assumed.  In the main text $A$
is one qubit, so $d_A=2$.

To evaluate $\bar q_t$ in the chaotic regime, we draw Hamiltonians from
the normalized Gaussian unitary ensemble (GUE)~\cite{cotler2017chaos_designs},
\begin{equation}
    \mathrm d\mathbb P_d(H)
    =Z_d^{-1}\exp[-(d/2)\Tr(H^2)]\,\mathrm dH,
    \qquad U(t)=\e^{-\im Ht}.
    \label{sm:eq:v2-normalized-gue}
\end{equation}
Here $H$ ranges over $d\times d$ Hermitian matrices, $\mathrm dH$ is the
product measure over their independent real components, and $Z_d$ is the
normalization constant.  Let $\mathbb E_H$ denote expectation over the GUE draw in
Eq.~\eqref{sm:eq:v2-normalized-gue}.
We consider evolution times $t\geq\sqrt d$.

\begin{theorem}[GUE forgetting under the EDA]
\label{sm:thm:v2-gue-forgetting}
Let $\tau\geq1$ be an integer and $y_k$ a lag-$\tau$ innovation target.
For GUE reservoirs with a single reset qubit and evolution time
$t\geq\sqrt d$, if the EDA holds for $j=0,\ldots,\tau-2$
(an empty condition for $\tau=1$), then
\begin{equation}
 \mathbb E_{U\sim\nu_t}P_\lambda(U)
 \leq\frac{2\overline{\mathcal I}_1}{\lambda}
 \left[\frac12+o(1)\right]^{\tau-1}.
 \label{sm:eq:v2-gue-main-performance-bound}
\end{equation}
Here $o(1)$ tends to zero as $d_{\bar A}\to\infty$.
The performance bound decays exponentially with lag over the range covered
by the EDA.  
\end{theorem}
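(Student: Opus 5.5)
The argument has three stages: per-unitary performance bound, influence propagation under the EDA, and the GUE value of $\bar q_t$.

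\emph{Stage 1 (performance bound).} For each fixed $U$, apply Lemma~\ref{sm:thm:v2-universal-forgetting} to obtain
\[
P_\lambda(U)\leq 2\mathcal I_\tau(U)/\lambda .
\]
Averaging over $U\sim\nu_t$ then gives
\[
\mathbb E_{U\sim\nu_t}P_\lambda(U)\leq 2\overline{\mathcal I}_\tau/\lambda .
\]
This reduces the theorem to showing
\[
\overline{\mathcal I}_\tau\leq \overline{\mathcal I}_1\,\alpha_t^{\tau-1},\qquad \alpha_t=\bar q_t+\varepsilon_{\rm EDA}=\tfrac12+o(1).
\]
By Eq.~\eqref{sm:eq:v2-ensemble-influence}, $\overline{\mathcal I}_\tau=\tfrac12\mathbb E_{\rm joint}\|\Delta_{\tau-1}\|_2^2$. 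In particular, $\overline{\mathcal I}_1=\tfrac12\mathbb E_{\rm joint}\|\Delta_0\|_2^2$. The two-run setup is stationary, so $\Delta_0$ has the same law whatever $k-\tau$ is, and $\overline{\mathcal I}_1$ does not depend on $\tau$. For $\tau=1$ the claim follows immediately.

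\emph{Stage 2 (propagation).} For $j=0,\dots,\tau-2$, combine the exact recursion~\eqref{sm:eq:v2-exact-influence-recursion} with the EDA~\eqref{sm:eq:v2-eda}:
\[
\mathbb E_{\rm joint}\|\Delta_{j+1}\|_2^2\leq \mathbb E_{\rm joint}\big[\|\Delta_j\|_2^2\,\bar q_t(\Delta_j)\big]+\varepsilon_{\rm EDA}\,\mathbb E_{\rm joint}\|\Delta_j\|_2^2 .
\]
Once we know that $\bar q_t(\Delta)$ equals a constant $\bar q_t$ for all nonzero traceless $\Delta$, this becomes $\mathbb E\|\Delta_{j+1}\|^2\leq \alpha_t\,\mathbb E\|\Delta_j\|^2$. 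If some denominator vanishes, all later differences vanish and the bound holds trivially. Induction over $j$ gives the factor $\alpha_t^{\tau-1}$.

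\emph{Stage 3 (GUE computation).} The GUE measure is invariant under $H\mapsto VHV^\dagger$ for every unitary $V$. Taking $V=I_A\otimes W$, $q_U(\Delta)$ and $q_U(W\Delta W^\dagger)$ have the same ensemble average, so $\bar q_t(\Delta)$ depends only on $\Delta$ through conjugation-invariant data. To remove the remaining dependence, I would compute the average exactly. Write $U=V D V^\dagger$ with $V$ Haar-distributed independently of the spectrum, as in isospectral twirling. Then $\mathbb E_H\|\Tr_A[U(\rho\otimes\Delta)U^\dagger]\|_2^2$ is a degree-$(2,2)$ polynomial in $U,U^\dagger$. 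The Haar average of $V^{\otimes4}$ can be evaluated with Weingarten calculus. The result is a combination of spectral form factors $\mathbb E|\Tr U|^2$, $\mathbb E|\Tr U|^4$, $\mathbb E\,\Tr U^2(\Tr U^\dagger)^2$, and so on, times the invariants $\|\Delta\|_2^2$ and $(\Tr\Delta)^2=0$. Because $\Delta$ is traceless and $\Tr\rho_{\rm in}^2=1$, each term comes out proportional to $\|\Delta\|_2^2$, so $\bar q_t$ is $\Delta$-independent.

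A Haar-random $U$ alone would give $\bar q\approx d_{\bar A}/(d_Ad_{\bar A})\cdot(\ldots)\to 1/d_A=1/2$, the standard decoupling value for tracing out one qubit. The main obstacle is showing that at $t\geq\sqrt d$ the GUE form factors are close enough to their Haar values for the Weingarten sum to equal $1/2+o(1)$. This requires $\mathbb E|\Tr U|^2/d^2$, $\mathbb E|\Tr U|^4/d^4$, etc.\ to be $o(1)$ after multiplication by the Weingarten coefficients of order $d^{-4}$. I would first use the GUE one-point function, whose Bessel-type decay $J_1(2t)/t$ is negligible at $t\geq\sqrt d$. For the two-point (ramp) terms, $\mathbb E|\Tr U|^2\lesssim t+O(1)$ is at most $O(d)$ for $t\leq 2d$ and saturates at $d$ at the plateau. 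The four-point pieces would be bounded using determinantal-process cumulant bounds (Soshnikov). Substituting $\alpha_t=\tfrac12+o(1)$ into Stages 1 and 2 then yields~\eqref{sm:eq:v2-gue-main-performance-bound}.
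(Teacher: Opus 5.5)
Your proposal is correct. It follows the paper's overall structure: the covariance-free bound for each fixed $U$, then iteration of the EDA inequality obtained from the exact recursion, then a fixed-spectrum Haar (isospectral-twirling) evaluation of $\bar q_t$ in terms of GUE spectral form factors.

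Two small differences are harmless:
\begin{itemize}
\item Averaging the linear bound $P_\lambda(U)\le 2\mathcal I_\tau(U)/\lambda$ directly avoids the Jensen step the paper applies to the concave envelope $f_\lambda$. The paper's route gives the sharper intermediate bound $f_\lambda(\overline{\mathcal I}_1\alpha_t^{\tau-1})$, but the stated theorem only needs yours.
\item You read $\Delta$-independence off the Weingarten output, which sees $\Delta$ only through $\Tr\Delta$ and $\Tr\Delta^2$. The paper instead uses a conjugation-and-permutation-symmetry argument. Both work.
\end{itemize}

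The real divergence is the late-time estimate, where you do more work than needed. What matters is not closeness of the form factors to their CUE values; $\mathcal R_2$ grows toward the plateau value $d$, far from the CUE value $1$. What matters is only $\mathcal R_4(t)=o(d^4)$. The other phase terms carry prefactors of order $d^{-4}$ or $d^{-5}$ against form factors of size at most $d^2$ or $d^3$, so they are automatically small.

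The paper gets this from the crude bounds $|s_1|,|s_2|\le d$, which give $\mathcal R_4\le d^2\mathcal R_2$, $|\mathcal R_{4,1}|\le d\mathcal R_2$ and $\mathcal R_{4,2}\le d^2$. Hence $|\bar q_t-1/2|\le C[\mathcal R_2(t)/d^2+d^{-2}]$, and no four-point cumulant bounds are required. For $\mathcal R_2$, the paper uses the exact projection-kernel bound $\operatorname{Var}(\Tr U(t))\le d$, valid for all $d$ and $t$, in place of your ramp/plateau heuristic.

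The one step in your sketch that needs tightening is the disconnected part, $|\mathbb E_H\Tr U(t)/d|$. The function $J_1(2t)/t$ is the Fourier transform of the limiting semicircle, not of the finite-$d$ mean density $p_d$. You need this quantity to tend to zero uniformly over all $t\ge\sqrt d$, including $t\gg d$. The paper obtains that uniformity from $\|p_d-p_{\rm sc}\|_{L^1}\to0$, which bounds the Fourier difference at every $t$, combined with Riemann--Lebesgue for $p_{\rm sc}$. Bessel decay of the limit alone does not supply it.
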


\textit{Proof.}
We first calculate the GUE mean retention factor, then use the EDA to bound
the surviving influence and hence the performance.

\textit{1. Mean retention from the GUE spectral form factors.}
We first establish that the mean retention is the same for every nonzero
memory difference and every pure reset state.
Let $\rho_{\rm in}$ be pure, let $\Delta$ be a nonzero traceless Hermitian memory
operator independent of $H$, and put $B=\rho_{\rm in}\otimes\Delta$.
For the GUE, the ensemble-averaged ratio is independent of both $\Delta$
and the choice of pure reset state, as shown below.
We therefore suppress the argument of the average function and write
its common value as $\bar q_t$:
\begin{equation}
    \bar q_t=
    \mathbb E_H\frac{
      \|\Tr_A[U(t)B U(t)^\dagger]\|_2^2}{\|\Delta\|_2^2}.
    \label{sm:eq:v2-gue-contraction}
\end{equation}
To prove this, for a fixed pure reset state $\rho$ and any Hermitian memory
operator $\Delta$, define the quadratic form
\begin{equation}
 F_t^{\rho}(\Delta):=\mathbb E_H
 \left\|\Tr_A\!\left[U(t)(\rho\otimes \Delta)U(t)^\dagger\right]\right\|_2^2.
\end{equation}
For any memory unitary $W_{\bar{A}}$, put
$W=I_A\otimes W_{\bar{A}}$.  The matrices $H$ and $W^\dagger HW$ have the same GUE distribution.
Together with
$\e^{-\im Ht}W=W\e^{-\im(W^\dagger HW)t}$, this gives
\begin{equation}
 F_t^{\rho}(W_{\bar{A}}\Delta W_{\bar{A}}^\dagger)=F_t^{\rho}(\Delta).
\end{equation}
Let $\delta_1,\ldots,\delta_{d_{\bar A}}$ be the eigenvalues of $\Delta$.
The invariance above allows us to diagonalize $\Delta$ without changing
$F_t^\rho$.  This function is quadratic in the eigenvalues and unchanged
by any permutation of them.  Consequently, all terms $\delta_a^2$ have
one common coefficient, and all products $\delta_a\delta_b$ with $a\ne b$
have another.  Equivalently, for some real coefficients $\eta_1(t)$ and
$\eta_0(t)$ independent of $\Delta$, we can write
\begin{align}
 F_t^{\rho}(\Delta)
 &=\eta_1(t)\sum_{a=1}^{d_{\bar A}}\delta_a^2
   +\eta_0(t)\left(\sum_{a=1}^{d_{\bar A}}\delta_a\right)^2\nonumber\\
 &=\eta_1(t)\Tr(\Delta^2)+\eta_0(t)[\Tr\Delta]^2.
\end{align}
For a nonzero memory difference, $\Tr\Delta=0$, so
$F_t^\rho(\Delta)/\|\Delta\|_2^2=\eta_1(t)$.
Thus the ensemble-averaged retention factor is the same for every nonzero
memory difference.
Any two pure reset states are related by
$\rho'=W_A\rho W_A^\dagger$ for some unitary $W_A$ on $A$.
The same change of GUE variable with
$W=W_A\otimes I_{\bar{A}}$ leaves the output partial trace, and therefore its norm,
unchanged.  Thus $F_t^\rho$ is also independent of the pure reset state $\rho$, and averaging over
the pure amplitude encoding does not alter the ratio.

It remains to calculate this common retention factor.  We first average
over eigenvectors while holding the eigenvalues fixed, and then average
over the eigenvalues.
Write $H=V\operatorname{diag}(E_1,\ldots,E_d)V^\dagger$, where the columns
of $V$ are the eigenvectors and $\bm E:=(E_1,\ldots,E_d)$ is the spectrum.
With $\Lambda(t):=\operatorname{diag}(\e^{-\im E_1t},\ldots,
\e^{-\im E_dt})$, the evolution is $U(t)=V\Lambda(t)V^\dagger$.
For the GUE, $V$ is uniformly distributed over unitary matrices
(Haar measure) and independent of $\bm E$.
Let $\mathbb E_V$ denote this eigenvector average at fixed $\bm E$, and define
\begin{equation}
 q_t^{\rm spec}(\bm E)
 :=\frac{\mathbb E_V
 \|\Tr_A[V\Lambda(t)V^\dagger B
 V\Lambda(t)^\dagger V^\dagger]\|_2^2}{\|\Delta\|_2^2}.
 \label{sm:eq:v2-fixed-spectrum-average-definition}
\end{equation}
The invariance argument above also applies to this average, so the result
is independent of the nonzero memory difference and the pure reset state.
The ratio is unchanged by scaling $\Delta$.  For the calculation below,
we therefore take $\Tr(\Delta^2)=1$, keeping
$B=\rho_{\rm in}\otimes\Delta$.
For a fixed spectrum, write
$s_n(t):=\Tr[\Lambda(t)^n]=\sum_{a=1}^d\e^{-\im nE_at}$ for $n=1,2$.
These are the traces of the evolution at times $t$ and $2t$, before any
average over eigenvalues.

We now express the squared partial-trace norm in a form to which the
standard Haar integral applies.
Let $S_{\bar A}$ swap the two copies of $\bar A$ in two reservoir copies,
leaving both copies of $A$ unchanged.  The partial-trace swap
identity~\cite{sup_mele2024swap} gives
\begin{equation}
 \|\Tr_A[U(t)BU(t)^\dagger]\|_2^2
 =\Tr[(U(t)BU(t)^\dagger)^{\otimes2}S_{\bar A}].
 \label{sm:eq:v2-retention-swap}
\end{equation}
At fixed $\bm E$, we apply the fourth-order Haar integration
formula~\cite{sup_collins2006integration}, using the fixed-spectrum
expression in Appendix~A of
Ref.~\cite{oliviero2021isospectral_twirling}.
The partial trace is handled as in Appendix~F.1 of that reference.
For $B=\rho_{\rm in}\otimes\Delta$, using $\Tr\Delta=0$, the purity of
$\rho_{\rm in}$, and our normalization of $\Delta$ gives $\Tr B=0$, $\Tr(B^2)=1$,
$\Tr_A B=\Delta$, and $\Tr_{\bar A}B=0$.
Substituting these identities into the Haar formula and collecting the
terms involving $s_1(t)$ and $s_2(t)$ gives, for $d>3$,
\begin{align}
 q_t^{\rm spec}(\bm E)
 &=\gamma^2+\frac{d_A-1}{d^2(d^2-1)(d^2-9)}
 \Big\{(d_Ad_{\bar A}^2+3)
 [|s_1(t)|^4+|s_2(t)|^2-4|s_1(t)|^2]\nonumber\\
 &\hspace{35mm}
 -2d_{\bar A}(d_A+3)\operatorname{Re}
 [s_2(t)(s_1(t)^*)^2]\Big\},
 \label{sm:eq:v2-gue-contraction-fixed-spectrum}
\end{align}
where
$\gamma^2:=d_A(d_{\bar A}^2-1)/(d^2-1)$
is the term independent of the spectral phases.
This completes the eigenvector average.

We now average over the GUE eigenvalues, denoting this expectation by
$\mathbb E_{\bm E}$.
The spectral products in Eq.~\eqref{sm:eq:v2-gue-contraction-fixed-spectrum}
give the two-point and four-point spectral form factors
\begin{align}
 \mathcal R_2(t)&:=\mathbb E_{\bm E}|s_1(t)|^2,\qquad
 \mathcal R_4(t):=\mathbb E_{\bm E}|s_1(t)|^4,\nonumber\\
 \mathcal R_{4,1}(t)&:=
 \operatorname{Re}\mathbb E_{\bm E}[s_2(t)(s_1(t)^*)^2],
 \qquad
 \mathcal R_{4,2}(t):=\mathbb E_{\bm E}|s_2(t)|^2=\mathcal R_2(2t).
 \label{sm:eq:v2-gue-form-factor-definitions}
\end{align}
These are ensemble moments of the spectral traces, following the
conventions of Ref.~\cite{cotler2017chaos_designs}.
Taking $\bar q_t=\mathbb E_{\bm E}q_t^{\rm spec}(\bm E)$ therefore
replaces each spectral product by its corresponding form factor:
\begin{align}
    \bar q_t
    &=\gamma^2+
      \frac{d_A-1}{d^2(d^2-1)(d^2-9)}
      \Big\{(d_Ad_{\bar A}^2+3)
      [\mathcal R_4(t)+\mathcal R_{4,2}(t)-4\mathcal R_2(t)]
      \nonumber\\
    &\hspace{42mm}
      -2d_{\bar A}(d_A+3)\mathcal R_{4,1}(t)\Big\}.
    \label{sm:eq:v2-exact-gue-contraction}
\end{align}
We check this exact expression at zero time and in the long-time limit.
At $t=0$, $s_1(0)=s_2(0)=d$, and substitution gives $\bar q_0=1$.
For the long-time check, keep $d$ fixed and take $t\to\infty$.
The GUE eigenvalue distribution has an integrable joint density, so the
ensemble average of a phase term with a nonzero linear combination of
eigenvalues in its exponent vanishes in this limit by the
Riemann--Lebesgue lemma.
For $\mathcal R_2$, the $d$ terms with equal eigenvalue indices remain.
For $\mathcal R_4$, there are two ways to match the two positive and two
negative phase indices.  Each gives $d^2$ terms, and the $d$ terms with
all indices equal are counted twice, leaving $2d^2-d$.
For $\mathcal R_{4,1}$, only the $d$ terms with all three indices equal
remain.  Thus
\begin{equation}
 \mathcal R_2(t),\mathcal R_{4,1}(t),\mathcal R_{4,2}(t)\to d,
 \qquad \mathcal R_4(t)\to2d^2-d
 \qquad (t\to\infty,\ d\ \text{fixed}).
 \label{sm:eq:v2-gue-form-factor-plateau}
\end{equation}
Writing $\bar q_\infty:=\lim_{t\to\infty}\bar q_t$ for the long-time
retention factor, substitution into
Eq.~\eqref{sm:eq:v2-exact-gue-contraction} yields
\begin{equation}
 \bar q_\infty-\gamma^2
 =\frac{2(d_A-1)(d_Ad_{\bar A}^2+d_{\bar A}+2)}
 {d_Ad_{\bar A}(d_Ad_{\bar A}-1)(d_Ad_{\bar A}+1)(d_Ad_{\bar A}+3)}
 =\mathcal O(d^{-2}),
 \label{sm:eq:v2-gue-late-contraction}
\end{equation}
where the last estimate is for $d\to\infty$ at fixed $d_A$.
Since $\gamma^2=1/d_A+\mathcal O(d^{-2})$, the long-time
retention factor also satisfies
$\bar q_\infty=1/d_A+\mathcal O(d^{-2})$.

To prove the theorem, we now bound the finite-time correction uniformly
for $t\geq\sqrt d$.  Since $|s_1(t)|\leq d$ and $|s_2(t)|\leq d$, the
form factors obey
$\mathcal R_4(t)\leq d^2\mathcal R_2(t)$,
$|\mathcal R_{4,1}(t)|\leq d\mathcal R_2(t)$, and
$\mathcal R_{4,2}(t)\leq d^2$.  Substituting these estimates into
Eq.~\eqref{sm:eq:v2-exact-gue-contraction}, and using
$\gamma^2-1/d_A=-(d_A-1/d_A)/(d^2-1)$, gives
\begin{equation}
 \left|\bar q_t-\frac{1}{d_A}\right|
 \leq C_{d_A}\left[\frac{\mathcal R_2(t)}{d^2}+\frac{1}{d^2}\right],
 \label{sm:eq:v2-gue-spectral-error-bound}
\end{equation}
where $C_{d_A}>0$ is a constant depending only on the fixed reset dimension $d_A$, and the
bound holds for all $d>3$ and all $t$.
It therefore suffices to show that $\mathcal R_2(t)/d^2$ tends to zero
uniformly for $t\geq\sqrt d$ as $d_{\bar A}\to\infty$.

We first bound the fluctuations of $\Tr U(t)$ using the standard GUE
eigenvalue correlation formula in Eq.~(2.3) of
Ref.~\cite{soshnikov2000determinantal}.
Let $h_{\ell,d}(E)$, $\ell=0,\ldots,d-1$, be the real orthonormal
Hermite functions for our GUE normalization: each is a polynomial of
degree $\ell$ multiplied by $\e^{-dE^2/4}$, with
$\int_{\mathbb R}h_{\ell,d}(E)h_{m,d}(E)\,\mathrm dE=\delta_{\ell m}$.
Using the one- and two-eigenvalue correlations to evaluate the variance
of $\Tr U(t)=\sum_a\e^{-\im E_at}$ gives
\begin{equation}
 \mathbb E_H\left|\Tr U(t)-\mathbb E_H\Tr U(t)\right|^2
 =d-\sum_{\ell,m=0}^{d-1}
 \left|\int_{\mathbb R}\e^{-\im Et}
 h_{\ell,d}(E)h_{m,d}(E)\,\mathrm dE\right|^2
 \leq d.
 \label{sm:eq:v2-gue-trace-fluctuation}
\end{equation}
Here equal eigenvalue indices contribute $d$, while the correlations
between distinct eigenvalues give the subtracted squares after removing
the squared mean.  This bound holds for every $d$ and $t$.
Using
$\mathcal R_2(t)=|\mathbb E_H\Tr U(t)|^2
+\mathbb E_H|\Tr U(t)-\mathbb E_H\Tr U(t)|^2$
therefore gives
\begin{equation}
 \frac{\mathcal R_2(t)}{d^2}
 \leq\left|\frac{\mathbb E_H\Tr U(t)}{d}\right|^2+\frac{1}{d}.
 \label{sm:eq:v2-gue-form-factor-bound}
\end{equation}

We next bound $d^{-1}\mathbb E_H\Tr U(t)$ through the mean eigenvalue
density.  Let $p_d(E)$ be the probability density obtained by drawing a
GUE Hamiltonian and then choosing one of its $d$ eigenvalues with equal
probability, so that $\int_{\mathbb R}p_d(E)\,\mathrm dE=1$.
It follows that
\begin{equation}
 \frac{\mathbb E_H\Tr U(t)}d
 =\int_{\mathbb R}\e^{-\im Et}p_d(E)\,\mathrm dE.
 \label{sm:eq:v2-gue-mean-trace-density}
\end{equation}
For the remaining estimates, keep $d_A$ fixed and take
$d_{\bar A}\to\infty$, so that $d=d_Ad_{\bar A}\to\infty$.
The bulk density asymptotics derived in Ref.~\cite{sup_garoni2005density}
imply that, for every fixed $|E|<2$, $p_d(E)$ converges to the semicircle density
$p_{\rm sc}(E)=\sqrt{4-E^2}/(2\pi)$ on $[-2,2]$, with
$p_{\rm sc}(E)=0$ outside this interval.
Here the spectral variable $x$ used in that reference is rescaled by
$E=2x$ to match our GUE normalization, and its normalized density is
therefore evaluated at $E/2$ and multiplied by $1/2$.
Both densities are nonnegative and have unit integral.  Consequently,
\begin{equation}
 \int_{\mathbb R}|p_d(E)-p_{\rm sc}(E)|\,\mathrm dE
 =2-2\int_{\mathbb R}\min\{p_d(E),p_{\rm sc}(E)\}\,\mathrm dE
 \xrightarrow[d\to\infty]{}0.
 \label{sm:eq:v2-gue-density-lone}
\end{equation}
Indeed, the integrand in the second integral is bounded by $p_{\rm sc}$
and converges to it inside $(-2,2)$, while it is zero outside $[-2,2]$.
Dominated convergence therefore makes that integral tend to one.

Subtracting and adding the Fourier transform of $p_{\rm sc}$ in
Eq.~\eqref{sm:eq:v2-gue-mean-trace-density}, and using
$|\e^{-\im Et}|=1$, yields
\begin{align}
 \sup_{t\geq\sqrt d}\left|\frac{\mathbb E_H\Tr U(t)}{d}\right|
 &\leq\int_{\mathbb R}|p_d(E)-p_{\rm sc}(E)|\,\mathrm dE
 \nonumber\\
 &\quad+\sup_{t\geq\sqrt d}
 \left|\int_{\mathbb R}\e^{-\im Et}p_{\rm sc}(E)\,\mathrm dE\right|
 \xrightarrow[d\to\infty]{}0.
 \label{sm:eq:v2-gue-uniform-trace-bound}
\end{align}
The first term tends to zero by Eq.~\eqref{sm:eq:v2-gue-density-lone}.
The Fourier transform in the second term tends to zero as $t\to\infty$
by the Riemann--Lebesgue lemma, since $p_{\rm sc}$ is integrable.
As the lower limit $\sqrt d$ tends to infinity, its supremum over the
whole interval $t\geq\sqrt d$ also tends to zero.
Equation~\eqref{sm:eq:v2-gue-form-factor-bound} now implies
$\sup_{t\geq\sqrt d}\mathcal R_2(t)/d^2\to0$ as $d\to\infty$.
Combining this with Eq.~\eqref{sm:eq:v2-gue-spectral-error-bound} gives
$\bar q_t=1/d_A+o(1)$ uniformly for $t\geq\sqrt d$ as
$d_{\bar A}\to\infty$ at fixed $d_A$.
For the single reset qubit, $d_A=2$, so this is
$\bar q_t=1/2+o(1)$, as required.

\textit{2. Propagation of the memory difference.}
The GUE average is independent of the nonzero memory difference.  Define
\begin{equation}
 \alpha_t:=\bar q_t+\varepsilon_{\rm EDA}=\frac12+o(1), 
 \label{sm:eq:v2-finite-time-worst-contraction}
\end{equation}
as $d_{\bar A}\to\infty$. 
Under the EDA, $\alpha_t$ bounds the ratio of the mean squared memory
differences after and before one cycle.  Indeed,
Eq.~\eqref{sm:eq:v2-eda} gives
\begin{equation}
 \mathbb E_{\rm joint}\|\Delta_{j+1}\|_2^2
 \leq(\bar q_t+\varepsilon_{\rm EDA})
      \mathbb E_{\rm joint}\|\Delta_j\|_2^2
 =\alpha_t\mathbb E_{\rm joint}\|\Delta_j\|_2^2.
 \label{sm:eq:v2-eda-one-step}
\end{equation}
If the current mean squared difference is zero, all subsequent differences
also vanish and the inequality is immediate.
By stationarity, $\mathbb E_{\rm joint}\|\Delta_0\|_2^2
=2\overline{\mathcal I}_1$.  Iterating the one-step bound and using
Eq.~\eqref{sm:eq:v2-ensemble-influence} therefore gives
\begin{equation}
 \overline{\mathcal I}_\tau
 \leq\overline{\mathcal I}_1\alpha_t^{\tau-1}.
 \label{sm:eq:v2-eda-influence}
\end{equation}
For $\tau=1$, there are no further memory updates, and $\alpha_t^0=1$.

\textit{3. From input influence to performance.}
Apply Lemma~\ref{sm:thm:v2-universal-forgetting} to each fixed $U$ and average
over the reservoir ensemble.  Since $f_\lambda$ is increasing and concave,
Jensen's inequality and Eq.~\eqref{sm:eq:v2-eda-influence} give
\begin{align}
 \mathbb E_{U\sim\nu_t}P_\lambda(U)
 \leq\mathbb E_{U\sim\nu_t}f_\lambda[\mathcal I_\tau(U)]
 \leq f_\lambda(\overline{\mathcal I}_\tau)
 \leq f_\lambda(\overline{\mathcal I}_1\alpha_t^{\tau-1})
 \leq\min\left\{1,
 \frac{2\overline{\mathcal I}_1\alpha_t^{\tau-1}}{\lambda}\right\}.
 \label{sm:eq:v2-eda-performance}
\end{align}
Substituting $\alpha_t=1/2+o(1)$ proves
Eq.~\eqref{sm:eq:v2-gue-main-performance-bound}.
\hfill$\blacksquare$

The exponential bound in Eq.~\eqref{sm:eq:v2-gue-main-performance-bound}
becomes a nontrivial constraint on performance once $\tau$ is large enough
that its right-hand side is below one.
The numerical comparison of paired runs in Sec.~\ref{sm:sec:v2-numerics}
checks the EDA discrepancy directly.

\section{Free endpoint: quadratic no-go and interaction activation}
\label{sm:sec:v2-free}

\subsection{Quadratic no-go}
\label{sm:sec:v2-free-nogo}

We map the $N$-qubit reservoir to Majorana operators
$m_1,\ldots,m_{2N}$ via the Jordan--Wigner transformation.  We use an
ordering in which the reset qubit $A$ is first.  The Majorana operators
satisfy $m_p^\dagger=m_p$ and $\{m_p,m_q\}=2\delta_{pq}I$.  Quadratic
Hamiltonians and observables lie in the real span of $I$ and
$\im m_pm_q$ with $1\leq p<q\leq2N$.  This quadratic span is closed under Heisenberg evolution
because quadratic unitaries act linearly on Majorana
operators~\cite{sup_bravyi2005fermionic}.
Recall that $z(u)=u-1/2$ and $\phi(u)=2\sqrt{u(1-u)}$.

\begin{lemma}[Quadratic current-input normal form]
\label{sm:lem:v2-current-normal-form}
For arbitrary quadratic dynamics and any quadratic feature at time $k$,
\begin{equation}
    x_{k,i}
    =a_{k,i}^{(0)}+\phi(u_k)a_{k,i}^{(1)}-2\zeta_i z_k,
    \label{sm:eq:v2-current-normal-form}
\end{equation}
where $a_{k,i}^{(0)}$ and $a_{k,i}^{(1)}$ are real functions of
the input history preceding $u_k$, whereas $\zeta_i$ is a real coefficient
determined by $U$ and $O_i$, independent of the inputs.
\end{lemma}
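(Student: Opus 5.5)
We will work in the Heisenberg picture for the last cycle and use quadratic closure. Write $x_{k,i}=\Tr(O_i\rho_k)=\Tr[U^\dagger O_iU\,(\rho_{\rm in}(u_k)\otimes\sigma_{k-1})]$, where $\sigma_{k-1}=\Tr_A\rho_{k-1}$ depends only on inputs preceding $u_k$. Since $U$ is generated by a quadratic Hamiltonian, the evolved observable $O_i(t):=U^\dagger O_iU$ stays in the real quadratic span. We can therefore expand it as
\begin{equation*}
 O_i(t)=\beta_i I+\sum_{1\le p<q\le 2N}\Gamma^{(i)}_{pq}\,\im m_pm_q ,
\end{equation*}
with real coefficients $\beta_i,\Gamma^{(i)}_{pq}$ fixed by $U$ and $O_i$. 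This follows from Bravyi's result that quadratic unitaries act on the Majoranas by an orthogonal matrix $R$. Indeed, $U^\dagger m_pU=\sum_r R_{pr}m_r$, so each bilinear maps to a linear combination of bilinears, and the identity is fixed.

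Next we sort the Majorana bilinears by how they act on $A$, using the ordering of Eq.~\eqref{eq:v2-reset-majoranas}. The term with $(p,q)=(1,2)$ is $\im m_1m_2=-Z_A\otimes I_{\bar A}$. For $p\in\{1,2\}$ and $q\ge3$, the Jordan--Wigner string gives $m_q=Z_A\otimes\tilde m_q$ with $\tilde m_q$ acting on $\bar A$, so $\im m_pm_q$ equals $X_AZ_A$ or $Y_AZ_A$ (i.e.\ $\pm\im Y_A$ or $\mp\im X_A$) tensored with an operator on $\bar A$. The $A$-factor is therefore $X_A$ or $Y_A$. For $3\le p<q$, the two strings cancel, giving $I_A\otimes(\text{bilinear on }\bar A)$.

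We then take the trace against $\rho_{\rm in}(u_k)\otimes\sigma_{k-1}$, using the Bloch form Eq.~\eqref{sm:eq:v2-amplitude-bloch}. The tensor factor on $A$ has the following expectations: $\Tr[\rho_{\rm in}X]=\phi(u_k)$, $\Tr[\rho_{\rm in}Y]=0$, $\Tr[\rho_{\rm in}Z]=-2z_k$, and $\Tr\rho_{\rm in}=1$. The identity and the $\bar A$-only bilinears contribute $\beta_i+\sum_{3\le p<q}\Gamma^{(i)}_{pq}\Tr(\sigma_{k-1}\,\cdot\,)$, which we call $a^{(0)}_{k,i}$. The mixed terms contribute $\phi(u_k)$ times a $\sigma_{k-1}$-expectation of $\bar A$ operators, which gives $a^{(1)}_{k,i}$. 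The $Y_A$ pieces vanish, and only the $X_A$ pieces survive. The single $Z_A$ term gives $\Gamma^{(i)}_{12}\cdot(-1)\cdot(-2z_k)\Tr\sigma_{k-1}$. Since $\Tr\sigma_{k-1}=1$ independently of the history, this is an input-independent coefficient times $z_k$; we set $\zeta_i:=-\Gamma^{(i)}_{12}$, up to the sign convention fixed by Eq.~\eqref{sm:eq:v2-current-normal-form}.

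Reality of $a^{(0)},a^{(1)}$ follows from hermiticity of the bilinears and of $\sigma_{k-1}$. The main obstacle is the careful Jordan--Wigner bookkeeping in the mixed terms. We must verify that every bilinear containing exactly one of $m_1,m_2$ has an $A$-factor that is off-diagonal ($X_A$ or $Y_A$) and never $Z_A$. Only then does the $z_k$ coefficient come solely from $\im m_1m_2$, which acts trivially on $\bar A$.
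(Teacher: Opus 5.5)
Your proposal is correct and follows the paper's proof essentially step for step: quadratic closure of $U^\dagger O_iU$, sorting the bilinears by reset-mode content via the Jordan--Wigner string $m_{r+2}=Z_A\otimes\gamma_r$, and evaluating the $A$-factors in the Bloch form, so that the only $z_k$ term comes from $\im m_1m_2=-Z_A\otimes I_{\bar A}$ and $\zeta_i=-\Gamma^{(i)}_{12}$. (Your parenthetical signs are slightly off: $X_AZ_A=-\im Y_A$ and $Y_AZ_A=\im X_A$, so $\im m_1m_q=Y_A\otimes\tilde m_q$ and $\im m_2m_q=-X_A\otimes\tilde m_q$. This is harmless, since only whether the $A$-factor is $X_A$ or $Y_A$ matters.)
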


\textit{Proof.}
A quadratic Hamiltonian acts linearly on Majoranas, so Heisenberg evolution
preserves the span of $I$ and the bilinears.  Let
$\gamma_1,\ldots,\gamma_{2(N-1)}$ denote the Majorana operators on the memory
subsystem, with indices $r,s=1,\ldots,2(N-1)$.
In the standard Jordan--Wigner convention with the reset mode first,
\begin{equation*}
    m_1=X_A\otimes I_{\bar{A}},\qquad
    m_2=Y_A\otimes I_{\bar{A}},\qquad
    m_{r+2}=Z_A\otimes\gamma_r.
\end{equation*}
The possible bilinears then satisfy
\begin{align*}
    \im m_{r+2}m_{s+2}&=I_A\otimes\im\gamma_r\gamma_s,
    &\im m_1m_2&=-Z_A\otimes I_{\bar{A}},\\
    \im m_1m_{r+2}&=Y_A\otimes\gamma_r,
    &\im m_2m_{r+2}&=-X_A\otimes\gamma_r.
\end{align*}
Since $U^\dagger O_iU$ is a real linear combination of $I$ and these
bilinears, grouping the terms that contain zero, one, or two reset-mode
Majoranas gives
\begin{equation}
    U^\dagger O_iU
    =I_A\otimes Q_{\bar{A},i}+\zeta_i Z_A\otimes I_{\bar{A}}
      +X_A\otimes R_{\bar{A},i}+Y_A\otimes S_{\bar{A},i},
    \label{sm:eq:v2-quadratic-reset-decomposition}
\end{equation}
where, for real coefficients,
\begin{equation}
    Q_{\bar{A},i}=q_{i,0}I_{\bar{A}}+\sum_{r<s}q_{i,rs}\,\im\gamma_r\gamma_s,
    \qquad
    R_{\bar{A},i}=\sum_r r_{i,r}\gamma_r,
    \qquad
    S_{\bar{A},i}=\sum_r s_{i,r}\gamma_r.
    \label{sm:eq:v2-memory-operator-definitions}
\end{equation}
Thus $Q_{\bar{A},i}$ is explicitly a quadratic operator on the memory subsystem,
including a possible identity component, whereas $R_{\bar{A},i}$ and $S_{\bar{A},i}$ are
linear in the memory Majoranas.  In particular, the two-reset sector contains
only $\im m_1m_2$, so the coefficient $\zeta_i$ carries no memory operator.
Taking the expectation in $\rho_{\rm in}(u_k)\otimes\sigma_{k-1}$ and using
\begin{equation}
    \Tr[X\rho_{\rm in}(u_k)]=\phi(u_k),\qquad
    \Tr[Y\rho_{\rm in}(u_k)]=0,\qquad
    \Tr[Z\rho_{\rm in}(u_k)]=-2z_k,
\end{equation}
gives Eq.~\eqref{sm:eq:v2-current-normal-form}, with
$a_{k,i}^{(0)}=\Tr(Q_{\bar{A},i}\sigma_{k-1})$ and
$a_{k,i}^{(1)}=\Tr(R_{\bar{A},i}\sigma_{k-1})$.  The coefficient $\zeta_i$ is
history independent because its memory factor is the identity.
\hfill$\blacksquare$

\begin{theorem}[Cross-time product no-go]
\label{sm:thm:v2-free-nogo}
Under the amplitude encoding in Eq.~\eqref{sm:eq:v2-amplitude-encoding}, let the
inputs be i.i.d. with a distribution symmetric about $1/2$, and assume a
causal stationary reservoir process.  For any quadratic Hamiltonian and any
collection of quadratic observables, let
$h_{k-1}:=h(\ldots,u_{k-2},u_{k-1})$ be a nonzero square-integrable function
of earlier inputs with $\mathbb E_{\rm in} h_{k-1}=0$.  Then
\begin{equation}
    y_k=z_kh_{k-1}
    \quad\Longrightarrow\quad
    P_\lambda=0,
    \qquad
    \operatorname{NMSE}_{\rm pop}^{\rm opt}=1.
    \label{sm:eq:v2-free-nogo}
\end{equation}
Here $\operatorname{NMSE}_{\rm pop}^{\rm opt}$ is the minimum population
NMSE over all affine readouts of the chosen features.
\end{theorem}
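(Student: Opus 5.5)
The plan is to show that the feature--target covariance vector vanishes, $\bm c=0$. Equation~\eqref{sm:eq:v2-exact-ridge-gain} then gives $P_\lambda=0$ for every $\lambda>0$. The claim $\operatorname{NMSE}_{\rm pop}^{\rm opt}=1$ should follow from the same fact. The best affine readout (ordinary least squares, or its $\lambda\to0^+$ limit when $\Sigma$ is singular) satisfies normal equations whose right-hand side is $\bm c$, so the optimal weights are zero and the optimal predictor is the constant $\mathbb E_{\rm in}y_k$. Its NMSE is exactly $1$.

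To compute $c_i=\operatorname{Cov}(x_{k,i},y_k)$, I will invoke Lemma~\ref{sm:lem:v2-current-normal-form}, which writes
\[
x_{k,i}=a_{k,i}^{(0)}+\phi(u_k)a_{k,i}^{(1)}-2\zeta_i z_k .
\]
Here $a^{(0)},a^{(1)}$ are functions of the history before $u_k$, and $\zeta_i$ is deterministic. Because the inputs are i.i.d., $u_k$ is independent of $\mathcal G_{k-1}:=\sigma(u_j:j\leq k-1)$. Since $\mathbb E z_k=0$ by symmetry, we get $\mathbb E_{\rm in} y_k=\mathbb E z_k\,\mathbb E h_{k-1}=0$, so $c_i=\mathbb E_{\rm in}[x_{k,i}z_kh_{k-1}]$. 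I treat the three terms separately, each time conditioning on $\mathcal G_{k-1}$.
\begin{itemize}
\item[(i)] The first term gives $\mathbb E[a^{(0)}_{k,i}h_{k-1}]\,\mathbb E z_k=0$.
\item[(ii)] The second gives $\mathbb E[a^{(1)}_{k,i}h_{k-1}]\,\mathbb E[\phi(u_k)z_k]$. This vanishes because $\phi(1-u)=\phi(u)$ and $z(1-u)=-z(u)$, so $\phi z$ is odd under the symmetry $u\mapsto1-u$ of the input law.
\item[(iii)] The third gives $-2\zeta_i\,\mathbb E[z_k^2]\,\mathbb E h_{k-1}=0$, by centering of $h$. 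This is the term where the structure of Lemma~\ref{sm:lem:v2-current-normal-form} matters: $\zeta_i$ carries no memory operator, so it cannot correlate with $h_{k-1}$.
\end{itemize}
Hence $\bm c=0$.

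The main technical point is integrability, so that these factorizations are legitimate. The features are bounded: $|x_{k,i}|\leq\|O_i\|_\infty$, and $a^{(0)},a^{(1)}$ are bounded by the operator norms of $Q_{\bar A,i},R_{\bar A,i}$. The variables $z_k$ and $\phi$ are bounded, and $h_{k-1}\in L^2\subset L^1$. So every product is integrable and Fubini/conditional independence applies. Measurability of $a^{(\cdot)}_{k,i}$ with respect to $\mathcal G_{k-1}$ comes from the causal stationary process: $\sigma_{k-1}$ is a function of past inputs only.

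Finally, the hypothesis $\sigma_y^2>0$ must hold. Here $\sigma_y^2=\mathbb E z_k^2\,\mathbb E h_{k-1}^2$, which is positive because $h\neq0$ and the input law is not concentrated at $1/2$. The degenerate case where $u\equiv1/2$ almost surely gives $y\equiv0$, which is excluded by the standing assumption $\sigma_y^2\in(0,\infty)$.
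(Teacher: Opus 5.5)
Your proposal is correct and follows the paper's proof essentially step for step. The same three-term orthogonality computation from Lemma~\ref{sm:lem:v2-current-normal-form} gives $\bm c=0$ and hence $P_\lambda=0$, and your integrability and measurability remarks fill in details the paper leaves implicit. For the optimal NMSE, the paper instead uses the identity $\mathbb E_{\rm in}[(y_k-\bar y_k)^2]=\mathbb E_{\rm in}[y_k^2]+\mathbb E_{\rm in}[\bar y_k^2]$ for any affine $\bar y_k$, which is equivalent to your normal-equation argument. When $\Sigma$ is singular, the optimal weights are fixed only up to $\ker\Sigma$, but the minimal MSE is still $\sigma_y^2-\bm w^{\mathsf T}\bm c=\sigma_y^2$.
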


\textit{Proof.}
By Lemma~\ref{sm:lem:v2-current-normal-form}, independence of the current input
from the past and reflection symmetry give, term by term,
\begin{align}
    \mathbb E_{\rm in}[z_kh_{k-1}a_{k,i}^{(0)}]
    &=\mathbb E_u[z(u)]\,\mathbb E_{\rm in}[h_{k-1}a_{k,i}^{(0)}]=0,\nonumber\\
    \mathbb E_{\rm in}[z_kh_{k-1}\phi(u_k)a_{k,i}^{(1)}]
    &=\mathbb E_u[z(u)\phi(u)]\,
      \mathbb E_{\rm in}[h_{k-1}a_{k,i}^{(1)}]=0,\nonumber\\
    \mathbb E_{\rm in}[z_k^2h_{k-1}]
    &=\mathbb E_u[z(u)^2]\,\mathbb E_{\rm in}h_{k-1}=0.
    \label{sm:eq:v2-free-orthogonality}
\end{align}
Thus every quadratic feature, and also the intercept, is orthogonal to the
centered target.  Equation~\eqref{sm:eq:v2-exact-ridge-gain} therefore gives
$P_\lambda=0$.  Any affine prediction
$\bar y_k=b+\bm w^\top\bm x_k$ also obeys
\begin{equation}
    \mathbb E_{\rm in}[(y_k-\bar y_k)^2]
    =\mathbb E_{\rm in}[y_k^2]+\mathbb E_{\rm in}[\bar y_k^2]
    \geq\mathbb E_{\rm in}[y_k^2].
\end{equation}
The constant mean predictor attains the lower bound.  Since
$\sigma_y^2=\mathbb E_u[z(u)^2]\mathbb E_{\rm in}[h_{k-1}^2]>0$,
the optimal population NMSE is one.
\hfill$\blacksquare$

Although the theorem is stated for centered cross-time targets, the same
inaccessible component also limits uncentered targets when it is orthogonal
to the remainder.  The obstruction still depends on the input encoding.  The
delayed product provides a direct example: even perfect recovery of its
additive part cannot remove the inaccessible centered cross-time component.

\begin{smcorollary}[Delayed-product performance bound]
\label{sm:cor:v2-raw-product-floor}
For uniform i.i.d. inputs and a causal stationary process, arbitrary
quadratic dynamics and any collection of quadratic observables obey, for
the target $y_k=u_ku_{k-\tau}$, every $\tau\geq1$, and every $\lambda>0$,
\begin{equation}
 P_\lambda\leq\frac67.
 \label{sm:eq:v2-raw-product-floor}
\end{equation}
\end{smcorollary}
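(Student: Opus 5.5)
The plan is to split the delayed product into an additive part, which a free reservoir might learn, and a centered cross-time component of the form in Theorem~\ref{sm:thm:v2-free-nogo}. Lemma~\ref{sm:lem:v2-current-normal-form} makes the cross-time component invisible to every quadratic feature. That component then sets a floor on the population NMSE, and the floor equals its share of the target variance.

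Write $u=1/2+z$. The target becomes
\begin{equation*}
 y_k=\tfrac14+\tfrac12 z_k+\tfrac12 z_{k-\tau}+g_k,\qquad g_k:=z_kz_{k-\tau}.
\end{equation*}
Define the additive remainder $r_k:=\tfrac12(z_k+z_{k-\tau})$. For uniform i.i.d. inputs, $\mathbb E z=0$, $\mathbb E z^2=1/12$ and $\mathbb E z^3=0$. From these I will verify the following.
\begin{itemize}
\item $\mathbb E g_k=0$.
\item $\mathbb E[g_kr_k]=\tfrac12(\mathbb E z^2\,\mathbb E z+\mathbb E z\,\mathbb E z^2)=0$, using $\tau\geq1$ so that $z_k$ and $z_{k-\tau}$ are independent.
\item $\operatorname{Var}r_k=2\cdot\tfrac14\cdot\tfrac1{12}=\tfrac1{24}$ and $\operatorname{Var}g_k=\tfrac1{144}$.
\end{itemize}
Hence $\sigma_y^2=6/144+1/144=7/144$, and $\operatorname{Var}g_k/\sigma_y^2=1/7$.

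Next I show that $g_k$ is orthogonal to the constant and to every quadratic feature. This is exactly the computation in Eq.~\eqref{sm:eq:v2-free-orthogonality} with $h_{k-1}:=z_{k-\tau}$, which is a centered, nonzero, square-integrable function of inputs strictly before time $k$. The uniform law is symmetric about $1/2$, so Lemma~\ref{sm:lem:v2-current-normal-form} and the three term-by-term factorizations give $\mathbb E_{\rm in}[g_kx_{k,i}]=0$ for all $i$.

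Now take any affine predictor $\bar y_k=b+\bm w^\top\bm x_k$. Write
\begin{equation*}
 y_k-\bar y_k=\bigl(\tfrac14+r_k-\bar y_k\bigr)+g_k .
\end{equation*}
The bracketed term lies in the span of the constant, $r_k$ and the features, and $g_k$ is orthogonal to all of these. Therefore $\mathbb E_{\rm in}[(y_k-\bar y_k)^2]\geq\mathbb E g_k^2=\sigma_y^2/7$. In particular this holds for the ridge predictor $(b_\lambda,\bm w_\lambda)$ of Sec.~\ref{sm:sec:v2-protocol}, whose unpenalized error is what defines $\operatorname{NMSE}_{\rm pop}$. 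So $\operatorname{NMSE}_{\rm pop}\geq1/7$ and $P_\lambda\leq6/7$ for every $\lambda>0$.

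The only step needing care is the orthogonality of $g_k$ to $r_k$ and to the features. It relies on independence across distinct time indices (hence $\tau\geq1$) and on symmetry of the input law. Both are already supplied by the i.i.d. uniform assumption and the proof of Theorem~\ref{sm:thm:v2-free-nogo}, so I expect no real obstacle beyond the variance bookkeeping.
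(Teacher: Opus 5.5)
Your proposal is correct and follows essentially the same route as the paper: both split the centered target into the cross-time part $z_kz_{k-\tau}$ and the additive part $(z_k+z_{k-\tau})/2$, and both use the no-go orthogonality together with $\mathbb E[z_kz_{k-\tau}(z_k+z_{k-\tau})]=0$ to obtain $\operatorname{NMSE}_{\rm pop}\geq(1/144)/(7/144)=1/7$. The only cosmetic difference is that you impose orthogonality to the constant and the uncentered features directly, whereas the paper first optimizes the intercept and works with centered predictions.
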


\textit{Proof.}
Put $s=u_k-1/2$, $r=u_{k-\tau}-1/2$, and center the target:
\begin{equation}
 u_ku_{k-\tau}-\mathbb E_{\rm in}[u_ku_{k-\tau}]
 =y_\times+y_{\rm add},\qquad y_\times=sr,\qquad
 y_{\rm add}=\frac{s+r}{2}.
\end{equation}
For fixed readout weights $\bm w$, the mean-squared error is minimized
over the intercept by
$b^*(\bm w)=\mathbb E_{\rm in}y_k
-\bm w^\top\mathbb E_{\rm in}\bm x_k$.
Write $g_k=\bm w^\top(\bm x_k-\mathbb E_{\rm in}\bm x_k)$ for the
corresponding centered prediction.
Theorem~\ref{sm:thm:v2-free-nogo} gives
$\mathbb E_{\rm in}[y_\times g_k]=0$, while independence and centering
give $\mathbb E_{\rm in}[y_\times y_{\rm add}]=0$.
The minimum population prediction error over all affine readouts is therefore
\begin{align}
 \operatorname{MSE}_{\rm pop}^{\rm opt}
 &:=\min_{b,\bm w}
 \mathbb E_{\rm in}[(y_k-b-\bm w^\top\bm x_k)^2]\nonumber\\
 &=\min_{\bm w}\mathbb E_{\rm in}[(y_\times+y_{\rm add}-g_k)^2]\nonumber\\
 &=\mathbb E_{\rm in}[y_\times^2]
   +\min_{\bm w}\mathbb E_{\rm in}[(y_{\rm add}-g_k)^2]
 \geq\frac1{144},
\end{align}
where $\mathbb E_{\rm in}[y_\times^2]
=\mathbb E_{\rm in}[s^2]\mathbb E_{\rm in}[r^2]=1/144$.
Since $\operatorname{Var}(y_{\rm add})=1/24$ and
$\operatorname{Var}(u_ku_{k-\tau})=1/144+1/24=7/144$,
any affine prediction has normalized error at least $1/7$.  A ridge readout is
affine, so Eq.~\eqref{sm:eq:v2-raw-product-floor} follows.
\hfill$\blacksquare$

\subsection{Interaction activation under the rewinding protocol}
\label{sm:sec:v2-activation}

We next study how quartic interactions lift the free-endpoint obstruction
under the QRC rewinding protocol~\cite{mujal2023measurements}.  For each readout,
the reservoir is prepared in the same fixed initial state and processes a
fixed-length window of inputs ending at $u_k$, repeating reset and unitary
evolution for each input.  The observables are measured after the final
update, and the expectations below are over the inputs in this window.
A pure-quartic operator is a real linear combination of products of four
distinct Majorana operators.

We use the interpolation
\begin{equation}
    H_\theta=\cos\theta\,H_2+\sin\theta\,H_4,
    \label{sm:eq:v2-syk-interpolation}
\end{equation}
where $\theta\in\mathbb R$, $H_2$ is quadratic, and $H_4$ is pure quartic.
Each update uses the same evolution time $t>0$.
Let $\rho_k(\theta)$ be the final state and let $\bm x_k(\theta)$ contain
$x_{k,i}(\theta):=\Tr[O_i\rho_k(\theta)]$ for quadratic observables
$\{O_i\}$, taken to be Hilbert--Schmidt orthonormal.
For the target $y_k=z_kz_{k-\tau}$, define
$\bm c(\theta):=\operatorname{Cov}[\bm x_k(\theta),y_k]$,
and let $P_\lambda(\theta)$ be the corresponding population performance.
A dot denotes differentiation with respect to $\theta$, so
$\dot{\bm c}(0):=\left.\partial_\theta\bm c(\theta)\right|_{\theta=0}$.

\begin{proposition}[Interaction activation under the rewinding protocol]
\label{sm:prop:v2-activation}
Let $N\geq3$ and $1\leq\tau\leq N-1$.  Use amplitude encoding with
i.i.d. inputs whose distribution is symmetric about $1/2$ and has nonzero
variance.  For the target $y_k=z_kz_{k-\tau}$, use a fixed rewinding window
of at least $\tau+1$ updates and a fixed initial state independent of
the inputs and $\theta$.  Take the quadratic readout set $\{O_i\}$ to be
fixed and complete.  Then there exist $H_2$, $H_4$, and $t>0$ for which
\begin{equation}
    \bm v_\tau:=\dot{\bm c}(0)\neq0.
    \label{sm:eq:v2-activation-response}
\end{equation}
Whenever $\bm v_\tau\neq0$, every fixed $\lambda>0$ gives
\begin{equation}
    \bm c(\theta)=\theta\bm v_\tau+O(\theta^2),
    \qquad
    P_\lambda(\theta)=\kappa_\lambda\theta^2+O(\theta^3),
    \quad \kappa_\lambda>0.
    \label{sm:eq:v2-activation-gain}
\end{equation}
\end{proposition}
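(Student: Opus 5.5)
The plan has two parts. First I show that the expansion \eqref{sm:eq:v2-activation-gain} follows from analyticity once $\bm v_\tau\neq0$. Then I exhibit an explicit choice of $H_2$, $H_4$, $t$ with $\bm v_\tau\neq0$.

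\emph{Expansion from analyticity.} Under the rewinding protocol the final state $\rho_k(\theta)$ is a finite composition of maps $\sigma\mapsto \Tr_A\{U_\theta[\rho_{\rm in}(u)\otimes\sigma]U_\theta^\dagger\}$, applied to an initial state that is fixed and independent of $\theta$. Since $U_\theta=\exp(-\im H_\theta t)$ is entire in $\theta$, every feature $x_{k,i}(\theta)$ is real-analytic in $\theta$ for each input window. The inputs lie in $[0,1]$, so these functions are uniformly bounded, and $\bm c(\theta)$ and $\Sigma(\theta)$ are therefore analytic.

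At $\theta=0$ the dynamics are quadratic. Lemma~\ref{sm:lem:v2-current-normal-form} only uses the last update and the memory state $\sigma_{k-1}$, so it holds verbatim in the finite-window setting. The orthogonality relations \eqref{sm:eq:v2-free-orthogonality}, with $h_{k-1}=z_{k-\tau}$, then give $\bm c(0)=0$, hence $\bm c(\theta)=\theta\bm v_\tau+O(\theta^2)$.

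$K_\lambda(\theta)=(\Sigma(\theta)+\lambda I)^{-1}$ is analytic because $\Sigma+\lambda I\succeq\lambda I$. Substituting into \eqref{sm:eq:v2-exact-ridge-gain} gives
\begin{equation*}
P_\lambda(\theta)=\theta^2\,\bm v_\tau^{\mathsf T}\bigl[K_\lambda(0)+\lambda K_\lambda(0)^2\bigr]\bm v_\tau/\sigma_y^2+O(\theta^3).
\end{equation*}
The matrix $K_\lambda(0)+\lambda K_\lambda(0)^2$ is positive definite, so $\kappa_\lambda>0$ whenever $\bm v_\tau\neq0$. The variance $\sigma_y^2=\operatorname{Var}(z)^2$ is positive by the nonzero-variance assumption.

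\emph{Computing the derivative.} Since $\partial_\theta\cos\theta|_0=0$, the Duhamel formula gives
\begin{equation*}
\dot U_0=-\im\int_0^t \e^{-\im H_2(t-s)}H_4\e^{-\im H_2 s}\,\mathrm ds .
\end{equation*}
By the product rule, $\dot\rho_k(0)$ is a sum over the updates in the window, with one $\dot U_0$ insertion per term and all other updates quadratic. By completeness of the readout set, $\bm v_\tau\neq0$ if and only if the projection of $\mathbb E_{\rm in}[z_kz_{k-\tau}\dot\rho_k(0)]$ onto the quadratic span is nonzero.

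\emph{Explicit choice of $H_2$, $H_4$, $t$.} I choose $H_2$ and $t$ so that $\e^{-\im H_2t}$ is a quadratic (Gaussian) unitary acting as a mode-shift register: it moves the Majorana pair of mode $j$ to mode $j+1$ (up to signs) and moves the last mode back to mode $1$. Such a cyclic permutation of modes is a proper rotation of the Majoranas, possibly with a sign fix, so it is generated by a quadratic Hamiltonian. After $\tau$ updates, the occupation $-2z_{k-\tau}$ written into $\im m_1m_2$ sits in mode $\tau+1\le N$ as the expectation of $\im m_{2\tau+1}m_{2\tau+2}$. Here $\tau\le N-1$ guarantees that it has not been overwritten.

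I then take $H_4=m_1m_2m_{2\tau+1}m_{2\tau+2}$, a density--density interaction between the reset mode and the storage mode. Consider the insertion at the final update and the quadratic observable $O=\im m_1m_{2\tau+1}$, suitably conjugated by the final shift. Its first-order Heisenberg correction is proportional to $[H_4,O]$. This commutator is a product of $\im m_1m_2$ and a memory bilinear, or of the storage-mode occupation and a reset bilinear. Because the reset state at step $k$ is a product state with the memory, its expectation factorizes and contains a term proportional to $z_k\,z_{k-\tau}$. Its covariance with the target is then a nonzero multiple of $\operatorname{Var}(z)^2$ (e.g.\ times $\sin$ of an integrated phase, which is nonzero for suitable $t$).

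The remaining contributions must then be shown not to cancel this term. These are insertions at earlier updates and the terms in the expectation involving $\phi(u)$ or single-time factors. I will argue that they are orthogonal to $z_kz_{k-\tau}$ by the same independence and reflection-symmetry arguments as in \eqref{sm:eq:v2-free-orthogonality}. Alternatively, if needed, I will choose the readout component so that only the final-step insertion contributes.

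\emph{Expected obstacle and fallback.} The main obstacle is this bookkeeping, including the Jordan--Wigner signs. If the explicit cancellation analysis becomes unwieldy, the fallback is to use the fact that $\bm v_\tau$ is real-analytic in the coefficients of $H_2$, $H_4$ and in $t$. It then suffices to certify $\bm v_\tau\neq0$ at a single explicit small instance ($N=3$, $\tau=1,2$), computed exactly from the formula above; nonvanishing then also holds for generic choices near that instance.
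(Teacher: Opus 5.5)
\textbf{What is fine.} Your first part (analyticity of $\bm c(\theta)$ and $\Sigma(\theta)$ over the fixed window, $\bm c(0)=0$ from the quadratic normal form, and $\kappa_\lambda>0$ from positive definiteness of $K_\lambda(0)+\lambda K_\lambda(0)^2$) matches the paper. Your claim that insertions at earlier updates are harmless also holds. They enter as $\Tr[O_0(\rho_{\rm in}(u_k)\otimes\dot\sigma_{k-1})]$, and the only $z_k$-dependent piece multiplies $\Tr\dot\sigma_{k-1}=0$.

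\textbf{The gap.} The gap is the existence of a witness, which is the real content of the proposition. Two things go wrong in your explicit choice.
\begin{enumerate}
\item The final-update response is $\im[\mathcal J_{H_2}(H_4),O_0]$ with $\mathcal J_{H_2}(H_4)=\int_0^t\e^{\im sH_2}H_4\e^{-\im sH_2}\,\mathrm ds$; it is not proportional to $[H_4,O_0]$. Your shift-register generator does not commute with $H_4$: along its flow it rotates Majoranas among all modes. So $\mathcal J_{H_2}(H_4)$ is a quartic operator spread over the whole system, not ``$H_4$ times an integrated phase''.
\item Even your naive commutator vanishes. $H_4=m_1m_2m_{2\tau+1}m_{2\tau+2}$ and $O_0=\im m_1m_{2\tau+1}$ share two Majoranas, so they commute. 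More generally, the density--density term $m_{a,1}m_{a,2}m_{b,1}m_{b,2}=-Z_aZ_b$ commutes with every bilinear supported on modes $a,b$. Its commutator with any other bilinear contains only three of those four Majoranas, so it has no component along $Z_aZ_b$. Yet $Z_aZ_b$ is the operator whose expectation $-2z_k\langle Z_b\rangle$ couples the current input to the stored one; at the register point the delayed input sits in $\langle Z_b\rangle$.
\end{enumerate}
So the assertion that the commutator ``contains a term proportional to $z_kz_{k-\tau}$'' is unsupported, and for your stated mechanism it is false. Note also that your construction never uses $N\geq3$.

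\textbf{How the paper closes it.} The paper uses two ideas that your proposal lacks.
\begin{enumerate}
\item It takes a Majorana $m_r$ from a third mode and sets $Q_0=\im m_{b,2}m_r$, supported on $\bar A$, and $G_\tau=-\tfrac12 m_{a,1}m_{a,2}m_{b,1}m_r$. These share exactly one Majorana, and $\im[G_\tau,Q_0]=Z_aZ_b$ exactly.
\item Rather than guessing $H_4$, it inverts the Duhamel map, $H_4=\mathcal J_{H_2}^{-1}(G_\tau)$. This is legitimate because $\det\mathcal J_{H_2}$ is real analytic in the quadratic couplings, with $\mathcal J_0=t\,\mathrm{id}$. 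Hence $H_2$ can be moved slightly off the register point to make $\mathcal J_{H_2}$ invertible, while $\Gamma_\tau(H_2)\neq0$ persists by continuity.
\end{enumerate}
With the fixed readout $Q=U_0Q_0U_0^\dagger$ this gives $\partial_\theta x_Q|_{\theta=0}=-2z_k\langle Z_b\rangle_0+\Tr_{\bar A}(Q_0\dot\sigma_{k-1})$. The covariance derivative is therefore $-2\mathbb E_u[z(u)^2]\,\Gamma_\tau(H_2)\neq0$.

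\textbf{The fallback.} Your fallback does not replace this argument. You propose certifying one instance at $N=3$, $\tau\in\{1,2\}$, but give no computation. Analyticity would only spread nonvanishing to nearby parameters at that fixed $(N,\tau)$. It says nothing about general $N\geq3$ and $1\leq\tau\leq N-1$, which the proposition covers.
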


This nonzero response also holds almost surely for independent full-support
Gaussian quadratic and quartic couplings at every fixed evolution time
outside a discrete exceptional set.  The statement remains valid after
separately normalizing $H_2$ and $H_4$ by
$d^{-1}\Tr(H_p^2)=1$ for $p=2,4$.
We prove these claims together below.

\textit{Proof.}
Label the reset mode by $a=1$ and fix $t>0$.  Let $\sigma_{k-1}(\theta)$ be
the memory state in this run immediately before processing $u_k$ under
$H_\theta$.  Choose a fermionic Gaussian unitary
that cyclically permutes the $N$ modes, and choose its static quadratic
generator $H_2^{\rm reg}$ so that $\e^{-\im tH_2^{\rm reg}}$ implements this
permutation.  Because $\tau\leq N-1$, it stores the input from step $k-\tau$
in a memory mode $b=b_\tau$ immediately before the final evolution.  Let
$m_{j,1}:=m_{2j-1}$ and $m_{j,2}:=m_{2j}$ denote the two Majoranas
of mode $j=1,\ldots,N$, and set $Z_j:=-\im m_{j,1}m_{j,2}$.
This is the Pauli-$Z$ operator on mode $j$ in the convention used above.  Define
$\Gamma_\tau(H_2):=\operatorname{Cov}(\langle Z_b\rangle_0,z_{k-\tau})$, where
$\langle Z_b\rangle_0:=\Tr_{\bar{A}}(Z_b\sigma_{k-1}(0))$.
The subscript $0$ indicates $\theta=0$.  At the exact-register
point, $\langle Z_b\rangle_0=-2z_{k-\tau}$ and hence
\begin{equation}
    \Gamma_\tau(H_2^{\rm reg})
    =-2\mathbb E_u[z(u)^2]\neq0,
    \label{sm:eq:v2-register-memory}
\end{equation}
and continuity gives $\Gamma_\tau(H_2)\neq0$ throughout a neighborhood of
$H_2^{\rm reg}$.

Having retained information about $z_{k-\tau}$, we now construct a
quartic interaction that couples it to the current input.
We want the first-order change of a quadratic readout to contain $Z_aZ_b$:
the reset factor $Z_a$ supplies $z_k$, and $Z_b$ carries information about
the delayed input.  To choose a static $H_4$ that produces this term,
we must account for its action throughout the evolution interval.
For a candidate Hermitian pure-quartic interaction $D_4$, define
\begin{equation}
    \mathcal J_{H_2}(D_4)
    :=\int_0^t\e^{\im sH_2}D_4\e^{-\im sH_2}\,\mathrm ds.
    \label{sm:eq:v2-duhamel-map}
\end{equation}
The integrand is the interaction in the interaction picture of $H_2$,
and the integral gives its accumulated first-order effect over time $t$.
To see how this enters the readout, write
$U_\theta:=\e^{-\im tH_\theta}$ and $U_0:=\e^{-\im tH_2}$.
Since $\partial_\theta H_\theta|_{\theta=0}=H_4$,
Duhamel's formula~\cite{sup_almohy2009frechet} gives
$U_\theta=U_0[I-\im\theta\mathcal J_{H_2}(H_4)]+O(\theta^2)$.
For a quadratic operator $Q_0$ acting only on $\bar A$, choose the quadratic
readout $Q=U_0Q_0U_0^\dagger$ and keep it fixed as $\theta$ varies.
Its unperturbed Heisenberg evolution is $U_0^\dagger Q U_0=Q_0$, and
\begin{equation}
 \left.\partial_\theta(U_\theta^\dagger Q U_\theta)\right|_{\theta=0}
 =\im[\mathcal J_{H_2}(H_4),Q_0].
 \label{sm:eq:v2-duhamel-readout-response}
\end{equation}
Thus we need to choose $Q_0$ and $H_4$ so that this commutator equals
$Z_aZ_b$.  We will find $H_4$ by inverting $\mathcal J_{H_2}$.

Quadratic conjugation preserves the finite-dimensional space of
pure-quartic operators, so $\mathcal J_{H_2}$ is a real-linear map from
this space to itself.
Its determinant is real analytic in the quadratic couplings and is not
identically zero because $\mathcal J_0(D_4)=tD_4$.
The set where the determinant is nonzero is therefore dense.
We may consequently choose $H_2$ within the neighborhood above so that
both $\Gamma_\tau(H_2)\neq0$ and $\mathcal J_{H_2}$ is invertible.
Choose a Majorana $m_r$ in a third mode, distinct from $a$ and $b$, and
set the quadratic memory operator $Q_0=\im m_{b,2}m_r$.
The Hermitian pure-quartic operator
\begin{equation}
    G_\tau=-\frac12m_{a,1}m_{a,2}m_{b,1}m_r
\end{equation}
satisfies, by the Clifford relations,
\begin{equation}
    \im[G_\tau,Q_0]=Z_aZ_b.
\end{equation}
Taking $H_4=\mathcal J_{H_2}^{-1}(G_\tau)$ now realizes the required
integrated interaction with a static pure-quartic Hamiltonian.
For the fixed readout $Q=U_0Q_0U_0^\dagger$, define
$x_Q(\theta):=\Tr[Q\rho_k(\theta)]$.
Equation~\eqref{sm:eq:v2-duhamel-readout-response} then gives
\begin{equation}
    \left.\partial_\theta(U_\theta^\dagger Q U_\theta)
    \right|_{\theta=0}=Z_aZ_b.
    \label{sm:eq:v2-activation-source}
\end{equation}

The earlier updates also change the memory state at first order.
We now show that their contribution cannot cancel the target response
created in the final update.
Writing $\dot\sigma_{k-1}:=\partial_\theta\sigma_{k-1}(\theta)|_{\theta=0}$ and
differentiating the final-step expectation gives
\begin{equation}
    \left.\partial_\theta x_Q(\theta)\right|_{\theta=0}
    =-2z_k\langle Z_b\rangle_0+\Tr_{\bar{A}}(Q_0\dot\sigma_{k-1}).
    \label{sm:eq:v2-activation-feature-derivative}
\end{equation}
The first term couples the current input to the information stored in
mode $b$.  The second term contains the effect of all earlier updates.
Because $Q_0$ acts only on the memory subsystem, this term depends only on
inputs preceding $u_k$.  Its covariance with $z_kz_{k-\tau}$ is therefore
zero, since $z_k$ is centered and independent of the past.
Because the target also has zero mean, Eqs.~\eqref{sm:eq:v2-register-memory}
and \eqref{sm:eq:v2-activation-feature-derivative} give
\begin{equation}
    \left.\partial_\theta
      \operatorname{Cov}(x_Q(\theta),z_kz_{k-\tau})\right|_{\theta=0}
    =-2\mathbb E_u[z(u)^2]
      \Gamma_\tau(H_2)\neq0.
\end{equation}

The observable $Q$ is a linear combination of the measured quadratic
observables.  Its nonzero covariance response therefore implies
$\bm v_\tau\neq0$.  Regard
$F_\tau:=\|\bm v_\tau\|^2$ as a function of $H_2$, $H_4$, and $t$.
For a fixed rewinding window, $F_\tau$ is real analytic in these variables, and the
witness above shows that it is not identically zero.  Holding the witness
couplings fixed, its zeros as a function of $t$ are discrete.  At any other
fixed time, $F_\tau$ is a nonzero real-analytic function of the couplings, so
its zero set has Lebesgue measure zero.  This proves the almost-sure claim for
full-support Gaussian couplings~\cite{sup_mityagin2020zero}.

For separate normalization, write
$\mathcal N_2=[d^{-1}\Tr(H_2^2)]^{1/2}$ and
$\mathcal N_4=[d^{-1}\Tr(H_4^2)]^{1/2}$.  Replacing
$(H_2,H_4,t)$ by $(H_2/\mathcal N_2,H_4/\mathcal N_4,\mathcal N_2t)$
leaves every unperturbed update unchanged and multiplies its full first-order
response by $\mathcal N_2/\mathcal N_4\neq0$.  Hence there is
also a nonzero witness on the product of normalized coupling spheres.  The
same analytic-zero-set argument, in local sphere coordinates, gives surface
measure zero there; a normalized full-support Gaussian has an absolutely
continuous directional law.  Enlarging the discrete exceptional set if
necessary proves the normalized claim.

Reflection symmetry gives $\mathbb E_u[z(u)]=\mathbb E_u[z(u)\phi(u)]=0$.  The termwise
orthogonality argument of Theorem~\ref{sm:thm:v2-free-nogo} uses only
independence of the current input from the past and therefore also gives
$\bm c(0)=0$ for the rewinding protocol with its input-independent
initial state.
Let $\Sigma(\theta):=
\operatorname{Cov}[\bm x_k(\theta),\bm x_k(\theta)]$ and
$K_\lambda(\theta):=[\Sigma(\theta)+\lambda I]^{-1}$.
The fixed rewinding window makes both covariance functions real analytic, and $\lambda>0$
ensures that $K_\lambda(\theta)$ is analytic near zero.  Hence
$\bm c(\theta)=\theta\bm v_\tau+O(\theta^2)$ and
$K_\lambda(\theta)=K_\lambda(0)+O(\theta)$.  Substitution into
Eq.~\eqref{sm:eq:v2-exact-ridge-gain} gives Eq.~\eqref{sm:eq:v2-activation-gain}, with
\begin{equation}
    \kappa_\lambda=\frac{\bm v_\tau^\top
      [K_\lambda(0)+\lambda K_\lambda(0)^2]\bm v_\tau}
      {\sigma_y^2}.
\end{equation}
The bracketed matrix is positive definite, so $\kappa_\lambda>0$ because
$\bm v_\tau\neq0$.
\hfill$\blacksquare$

\section{Exact stationary population performance across the crossover}
\label{sm:sec:v2-two-copy}

To compute the stationary population performance in main-text Fig.~2, we
need the feature--target covariance vector and the feature covariance
$\Sigma$.  Lemma~\ref{sm:lem:v2-centered-product-covariance} gives the
former, and Theorem~\ref{sm:thm:v2-two-copy} gives the latter.  We then
substitute both into the population performance formula
\eqref{sm:eq:v2-exact-ridge-gain}.

For uniform i.i.d. inputs, consider
\begin{equation}
    y_k=z_kz_{k-\tau},
    \qquad z(u)=u-\frac12,
    \qquad \tau\geq1.
    \label{sm:eq:v2-product-task}
\end{equation}
For any operator $Q$ on $A$ and operator $D$ on $\bar A$, define
\begin{equation}
    \Phi_{Q,U}(D):=\Tr_A[U(Q\otimes D)U^\dagger].
    \label{sm:eq:v2-component-map}
\end{equation}
This linear map describes how the component $Q\otimes D$ contributes to
the memory state after one evolution and partial trace over $A$.  In
particular, $\Phi_{\rho_{\rm in}(u),U}=\Phi_{u,U}$ is the memory update
for input $u$.  The mean input state, the memory map averaged over the input,
and the stationary mean memory state are, respectively,
\begin{equation}
    \overline\rho_{\rm in}:=\mathbb E_u\rho_{\rm in}(u)
      =\frac I2+\frac{\pi}{8}X,
    \qquad
    \overline\Phi_U:=\mathbb E_u\Phi_{u,U}=\Phi_{\overline\rho_{\rm in},U},
    \qquad
    \overline\sigma:=\mathbb E_{\rm in}\sigma_k
      =\overline\Phi_U(\overline\sigma).
    \label{sm:eq:v2-mean-channel}
\end{equation}
To track the memory component correlated with the old input $u_{k-\tau}$,
define
\begin{equation}
    D_{\tau,Z}
    :=\overline\Phi_U^{\,\tau-1}
      [\Phi_{Z,U}(\overline\sigma)].
    \label{sm:eq:v2-D-response}
\end{equation}
The $Z$ component carries the centered input into the memory, and
$\overline\Phi_U^{\,\tau-1}$ propagates it through the next $\tau-1$
cycles after averaging over their inputs.  For $\tau=1$, no further
memory update is applied.  As shown below,
$D_{\tau,Z}=-12\,\mathbb E_{\rm in}[z_{k-\tau}\sigma_{k-1}]$.
For this target, write
$\bm c_\tau(U):=\operatorname{Cov}(\bm x_k,y_k)$, with components
$c_{\tau,i}(U):=\operatorname{Cov}(x_{k,i},y_k)$, keeping $U$ fixed.
This is the vector $\bm c$ in Sec.~\ref{sm:sec:v2-protocol}.

\begin{lemma}[Exact centered-product covariance]
\label{sm:lem:v2-centered-product-covariance}
For the stationary causal process above, the fixed-$U$ feature--target
covariance for Eq.~\eqref{sm:eq:v2-product-task} is
\begin{equation}
    c_{\tau,i}(U)
    =\frac1{144}
      \Tr[O_iU(Z\otimes D_{\tau,Z})U^\dagger].
    \label{sm:eq:v2-exact-covariance-direction}
\end{equation}
\end{lemma}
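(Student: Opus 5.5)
The plan is to expand the feature $x_{k,i}=\Tr[O_iU(\rho_{\rm in}(u_k)\otimes\sigma_{k-1})U^\dagger]$ using the Bloch form in Eq.~\eqref{sm:eq:v2-amplitude-bloch}, and then isolate the only term that correlates with $z_kz_{k-\tau}$. Write $\rho_{\rm in}(u_k)=\tfrac12 I+\tfrac12\phi(u_k)X-z_kZ$. Linearity gives
\begin{equation*}
x_{k,i}=\Tr\{O_iU[(\tfrac12 I+\tfrac12\phi(u_k)X)\otimes\sigma_{k-1}]U^\dagger\}-z_k\Tr[O_iU(Z\otimes\sigma_{k-1})U^\dagger].
\end{equation*}
Since $y_k$ has mean zero, $c_{\tau,i}=\mathbb E_{\rm in}[x_{k,i}z_kz_{k-\tau}]$. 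We then use that $u_k$ is independent of $(\sigma_{k-1},u_{k-\tau})$. In the first group, $\mathbb E_u[z]=0$ and $\mathbb E_u[z\phi]=0$ by reflection symmetry, so those terms vanish. The second group gives $-\mathbb E_u[z^2]\,\Tr[O_iU(Z\otimes\mathbb E_{\rm in}[z_{k-\tau}\sigma_{k-1}])U^\dagger]$, where $\mathbb E_u[z^2]=1/12$ for uniform inputs.

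The remaining task is to show $\mathbb E_{\rm in}[z_{k-\tau}\sigma_{k-1}]=-\tfrac1{12}D_{\tau,Z}$. Combined with the factor $-\tfrac1{12}$ this gives the prefactor $\tfrac1{144}$ and the claimed formula. First write $\sigma_{k-1}=\Phi_{u_{k-1},U}\circ\cdots\circ\Phi_{u_{k-\tau+1},U}(\sigma_{k-\tau})$, with $\tau-1$ maps. The inputs $u_{k-\tau+1},\dots,u_{k-1}$ are independent of $u_{k-\tau}$ and of $\sigma_{k-\tau}$, and each map is linear. Conditioning therefore turns each into the averaged map $\overline\Phi_U$, giving $\mathbb E_{\rm in}[z_{k-\tau}\sigma_{k-1}]=\overline\Phi_U^{\,\tau-1}(\mathbb E_{\rm in}[z_{k-\tau}\sigma_{k-\tau}])$. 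For $\tau=1$ this is trivial.

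Next, $\sigma_{k-\tau}=\Phi_{u_{k-\tau},U}(\sigma_{k-\tau-1})$, and $\sigma_{k-\tau-1}$ depends only on earlier inputs. By stationarity its mean is $\overline\sigma$, the fixed point in Eq.~\eqref{sm:eq:v2-mean-channel}. Expanding $\rho_{\rm in}(u_{k-\tau})$ once more and using linearity of $Q\mapsto\Phi_{Q,U}$ from Eq.~\eqref{sm:eq:v2-component-map}, we get
\begin{equation*}
\mathbb E_{\rm in}[z_{k-\tau}\sigma_{k-\tau}]=\mathbb E_u[z\,\rho_{\rm in}(u)]\text{-weighted }\Phi=\Phi_{\mathbb E_u[z\rho_{\rm in}(u)],U}(\overline\sigma).
\end{equation*}
The expectation factorizes because $u_{k-\tau}$ is independent of $\sigma_{k-\tau-1}$, so $\sigma_{k-\tau-1}$ may be replaced by its mean. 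Reflection symmetry gives $\mathbb E_u[z\rho_{\rm in}(u)]=-\mathbb E_u[z^2]Z=-\tfrac1{12}Z$, hence $\mathbb E_{\rm in}[z_{k-\tau}\sigma_{k-\tau}]=-\tfrac1{12}\Phi_{Z,U}(\overline\sigma)$. Applying $\overline\Phi_U^{\,\tau-1}$ yields $-\tfrac1{12}D_{\tau,Z}$, as required.

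No step is deep; the only point needing care is justifying the factorizations. These rely on causality and the echo-state property, which make $\sigma_{j}$ a measurable function of $(u_i)_{i\le j}$ alone. Together with i.i.d. inputs, this gives the independence used at each conditioning step. Integrability holds because all states are bounded. Stationarity then identifies $\mathbb E_{\rm in}\sigma_{k-\tau-1}$ with $\overline\sigma$.
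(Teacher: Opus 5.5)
Your proposal is correct and follows essentially the same route as the paper. Both arguments obtain $\mathbb E_{\rm in}[z_{k-\tau}\sigma_{k-1}]=-\tfrac1{12}D_{\tau,Z}$ by factoring out the independent input at step $k-\tau$, which gives $-\tfrac1{12}\Phi_{Z,U}(\overline\sigma)$, and by averaging the next $\tau-1$ updates into $\overline\Phi_U^{\,\tau-1}$; the independent current input then supplies the second factor $-\tfrac1{12}Z$. The only cosmetic difference is that the paper first assembles the covariance operator $\mathbb E_{\rm in}[y_k\rho_k]=\tfrac1{144}U(Z\otimes D_{\tau,Z})U^\dagger$ and contracts it with $O_i$, whereas you expand each feature $x_{k,i}$ directly.
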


\textit{Proof.}
The required input moments are
\begin{equation}
    \mathbb E_u[z(u)]=0,\qquad
    \mathbb E_u[z(u)^2]=\frac1{12},\qquad
    \mathbb E_u[z(u)\rho_{\rm in}(u)]=-\frac Z{12}.
    \label{sm:eq:v2-z-moments}
\end{equation}
Let $j=k-\tau$.  Since $u_j$ is independent of the preceding memory state,
\begin{equation}
    \mathbb E_{\rm in}[z_j\sigma_j]
    =-\frac1{12}\Phi_{Z,U}(\overline\sigma).
\end{equation}
Averaging the next $\tau-1$ independent updates therefore gives
\begin{equation}
    \mathbb E_{\rm in}[z_{k-\tau}\sigma_{k-1}]
    =-\frac1{12}D_{\tau,Z}.
    \label{sm:eq:v2-lag-score-propagation}
\end{equation}
At the final reset, the independent current input contributes a second factor
$-Z/12$.  Because $\mathbb E_{\rm in}y_k=0$, the target--state covariance
operator is
\begin{equation}
    C_\tau(U):=\mathbb E_{\rm in}[y_k\rho_k]
    =\frac1{144}U(Z\otimes D_{\tau,Z})U^\dagger.
    \label{sm:eq:v2-exact-covariance-operator}
\end{equation}
Contracting with $O_i$ proves
Eq.~\eqref{sm:eq:v2-exact-covariance-direction}.
\hfill$\blacksquare$

The feature covariance is contained in the stationary second moment
\begin{equation}
    \Omega_U^{(2)}:=\mathbb E_{\rm in}[\sigma_k\otimes\sigma_k].
    \label{sm:eq:v2-two-copy-moment}
\end{equation}
The superscript $(2)$ denotes two copies, so $\Omega_U^{(2)}$ acts on
$\mathcal H_{\bar A}\otimes\mathcal H_{\bar A}$.
Both factors represent the same random memory state generated by the same
input history.  Their product allows us to calculate products of feature
values from that history.  For a memory operator $D$, define the full-state update
\begin{equation}
    \mathcal E_{u,U}(D):=U[\rho_{\rm in}(u)\otimes D]U^\dagger.
    \label{sm:eq:v2-full-state-update}
\end{equation}
For a memory state $D$, $\mathcal E_{u,U}(D)$ is the full reservoir
state used for readout, whereas $\Phi_{u,U}(D)$ is the memory state
obtained by tracing out $A$.
Write $\mu_{x,i}:=\mathbb E_{\rm in}x_{k,i}$ for the mean of feature $i$.

\begin{theorem}[Exact two-copy feature covariance]
\label{sm:thm:v2-two-copy}
Assume i.i.d. inputs and a stationary causal process
$\sigma_k=\Phi_{u_k,U}(\sigma_{k-1})$.  Then
\begin{equation}
    \Omega_U^{(2)}
    =\mathbb E_u[(\Phi_{u,U}\otimes\Phi_{u,U})(\Omega_U^{(2)})].
    \label{sm:eq:v2-two-copy-fixed-point}
\end{equation}
With $\overline\sigma=\mathbb E_{\rm in}\sigma_k$, the stationary feature mean and
covariance are
\begin{align}
    \mu_{x,i}
    &=\mathbb E_u\Tr[O_i\mathcal E_{u,U}(\overline\sigma)],\nonumber\\
    (\Sigma)_{ij}
    &=\mathbb E_u\Tr\!\left[
      (O_i\otimes O_j)
      (\mathcal E_{u,U}\otimes\mathcal E_{u,U})(\Omega_U^{(2)})
      \right]-\mu_{x,i}\mu_{x,j}.
    \label{sm:eq:v2-two-copy-feature-covariance}
\end{align}
Thus $\Sigma$ is the exact fixed-$U$ stationary population covariance.
\end{theorem}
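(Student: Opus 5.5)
The approach is to exploit stationarity of the joint process $(u_k,\sigma_k)$ together with the independence of the fresh input $u_k$ from the preceding memory state $\sigma_{k-1}$. For the fixed-point identity, I will write $\sigma_k\otimes\sigma_k=(\Phi_{u_k,U}\otimes\Phi_{u_k,U})(\sigma_{k-1}\otimes\sigma_{k-1})$. The key observation is that both tensor factors are updated by the \emph{same} input $u_k$, because they represent one history rather than two independent copies. Since $\Phi_{u,U}$ is linear, $\Phi_{u,U}\otimes\Phi_{u,U}$ is a linear map on operators on $\mathcal H_{\bar A}\otimes\mathcal H_{\bar A}$ depending only on $u$.

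Conditioning on the past sigma-algebra $\sigma(u_j:j\le k-1)$, with respect to which $\sigma_{k-1}$ is measurable and $u_k$ is independent, I get
\[
\mathbb E_{\rm in}[\sigma_k\otimes\sigma_k]=\mathbb E_u\bigl[(\Phi_{u,U}\otimes\Phi_{u,U})(\mathbb E_{\rm in}[\sigma_{k-1}\otimes\sigma_{k-1}])\bigr].
\]
Exchanging the expectation with the linear map is justified because the states are bounded operators in finite dimension. Stationarity then gives $\mathbb E_{\rm in}[\sigma_{k-1}\otimes\sigma_{k-1}]=\Omega_U^{(2)}$, which yields Eq.~\eqref{sm:eq:v2-two-copy-fixed-point}.

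For the feature moments, I will use $\rho_k=\mathcal E_{u_k,U}(\sigma_{k-1})$ from Eq.~\eqref{sm:eq:v2-reset-full}. Then $x_{k,i}=\Tr[O_i\mathcal E_{u_k,U}(\sigma_{k-1})]$. Conditioning on the past again and using linearity of $D\mapsto\mathcal E_{u,U}(D)$ gives $\mu_{x,i}=\mathbb E_u\Tr[O_i\mathcal E_{u,U}(\overline\sigma)]$, where stationarity supplies $\mathbb E_{\rm in}\sigma_{k-1}=\overline\sigma$. For the second moment, I write
\[
x_{k,i}x_{k,j}=\Tr\bigl[(O_i\otimes O_j)(\mathcal E_{u_k,U}\otimes\mathcal E_{u_k,U})(\sigma_{k-1}\otimes\sigma_{k-1})\bigr],
\]
using $\Tr(A)\Tr(B)=\Tr(A\otimes B)$ and $(\mathcal E\otimes\mathcal E)(D\otimes D)=\mathcal E(D)\otimes\mathcal E(D)$. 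Averaging $u_k$ independently of the past, and using stationarity of $\Omega_U^{(2)}$, gives the stated formula for $\mathbb E_{\rm in}[x_{k,i}x_{k,j}]$. Subtracting $\mu_{x,i}\mu_{x,j}$ then gives $\Sigma$.

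The only subtle point is making sure that the stationarity and independence used are legitimate. Under the ESP with i.i.d.\ inputs, $\sigma_{k-1}$ is a measurable function of $(u_j)_{j\le k-1}$ and hence independent of $u_k$. Stationarity means the law of $\sigma_{k-1}\otimes\sigma_{k-1}$ does not depend on $k$, so its mean is the same $\Omega_U^{(2)}$. I expect no real obstacle: the remaining care is just in not averaging the two copies over independent inputs, which would incorrectly factor $\Omega_U^{(2)}$ into $\overline\sigma\otimes\overline\sigma$ and lose the covariance.
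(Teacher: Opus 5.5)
Your proposal is correct and follows essentially the same argument as the paper. Both use independence of the shared current input $u_k$ from $\sigma_{k-1}$, linearity of $\Phi_{u,U}\otimes\Phi_{u,U}$ and $\mathcal E_{u,U}\otimes\mathcal E_{u,U}$, and stationarity to identify $\mathbb E_{\rm in}[\sigma_{k-1}\otimes\sigma_{k-1}]$ with $\Omega_U^{(2)}$; your explicit identity $\Tr(A)\Tr(B)=\Tr(A\otimes B)$ just spells out the step the paper summarizes as applying the same current input to both copies.
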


\textit{Proof.}
Independence of the current input and $\sigma_{k-1}$, together with
stationarity, gives
\begin{equation}
    \mathbb E_{\rm in}[\sigma_k\otimes\sigma_k]
    =\mathbb E_u\!\left[
      (\Phi_{u,U}\otimes\Phi_{u,U})
      \mathbb E_{\rm in}(\sigma_{k-1}\otimes\sigma_{k-1})\right],
\end{equation}
which proves Eq.~\eqref{sm:eq:v2-two-copy-fixed-point}.  Applying the same current
input to both copies of $\sigma_{k-1}$ gives
\begin{equation*}
 \mathbb E_{\rm in}[x_{k,i}x_{k,j}]
 =\mathbb E_u\Tr\!\left[(O_i\otimes O_j)
 (\mathcal E_{u,U}\otimes\mathcal E_{u,U})(\Omega_U^{(2)})\right].
\end{equation*}
The corresponding one-copy average is $\mu_{x,i}$ in
Eq.~\eqref{sm:eq:v2-two-copy-feature-covariance}; subtracting
$\mu_{x,i}\mu_{x,j}$ proves the covariance formula.
\hfill$\blacksquare$

Accordingly, the two-copy channel is
$\mathbb E_u(\Phi_{u,U}\otimes\Phi_{u,U})$;
$\overline\Phi_U\otimes\overline\Phi_U$ instead describes independently
driven copies and omits their input-induced correlations.  In finite dimension,
Eq.~\eqref{sm:eq:v2-two-copy-fixed-point} is linear, and the stationary causal
process determines its physical solution.  Numerically, we solve it in traceless
Hermitian coordinates, removing the fixed trace coordinate without changing
the population quantity.

For the uniform amplitude encoding,
\begin{equation}
    \rho_{\rm in}(u)=\overline\rho_{\rm in}
      +\frac{\phi(u)-\pi/4}{2}X-z(u)Z,
\end{equation}
where
\begin{equation}
    \mathbb E_u[(\phi(u)-\pi/4)^2]=\frac23-\frac{\pi^2}{16},
    \qquad
    \mathbb E_u[z(u)^2]=\frac1{12},
    \qquad
    \mathbb E_u[(\phi(u)-\pi/4)z(u)]=0.
\end{equation}
Linearity reduces the two-copy mean map to
\begin{align}
    \mathbb E_u[\Phi_{u,U}\otimes\Phi_{u,U}]
    &=\overline\Phi_U\otimes\overline\Phi_U
      +\left(\frac16-\frac{\pi^2}{64}\right)
       \Phi_{X,U}\otimes\Phi_{X,U}
      +\frac1{12}\Phi_{Z,U}\otimes\Phi_{Z,U}.
    \label{sm:eq:v2-uniform-two-copy}
\end{align}

Independence of the current and delayed inputs gives
$\sigma_y^2=(\mathbb E_u[z(u)^2])^2=1/144$.  Combining
Eqs.~\eqref{sm:eq:v2-exact-covariance-direction} and
\eqref{sm:eq:v2-two-copy-feature-covariance} with
Eq.~\eqref{sm:eq:v2-exact-ridge-gain} therefore gives the fixed-$U$ stationary
population performance
\begin{equation}
    P_\lambda(U)
    =144\,\bm c_\tau(U)^\top
      (K_\lambda+\lambda K_\lambda^2)\bm c_\tau(U),
    \qquad
    K_\lambda=(\Sigma+\lambda I)^{-1}.
    \label{sm:eq:v2-two-copy-population-gain}
\end{equation}
Equation~\eqref{sm:eq:v2-two-copy-population-gain} is evaluated by solving the
stationary equations for each fixed $U$; the reported ensemble intervals
quantify finite-realization sampling uncertainty.

\section{Additional numerical results}
\label{sm:sec:v2-numerics}

The preceding analytic results are supplemented by numerical finite-size tests
of the model-specific EDA signature and influence recursion, the free no-go
and its linear-memory control, and stationary population performance across the
quadratic--quartic crossover.
All calculations use i.i.d. uniform inputs on $[0,1]$, the amplitude encoding
of Eq.~\eqref{sm:eq:v2-amplitude-encoding}, and ideal expectation values without
measurement shot noise.
In this section, numerical estimates use the same symbols as their theoretical
counterparts; the sampling procedures and uncertainty estimates are specified
below.  For finite-sample tests, $\mathrm{NMSE}_{\rm test}$ is the test
mean-squared error divided by the empirical target variance on that test block.
We estimate $P_\lambda$ by $1-\mathrm{NMSE}_{\rm test}$.  These finite-sample
estimates can be negative.  The crossover calculation instead evaluates the
stationary population performance from the fixed-point moments in
Sec.~\ref{sm:sec:v2-two-copy}.

\subsection{GUE forgetting}

The GUE Hamiltonians follow Eq.~\eqref{sm:eq:v2-normalized-gue} without
realization-wise rescaling.  The evolution times for the finite-size test
in main-text Fig.~1(a) are listed in Table~\ref{sm:tab:v2-eda-scaling} and
satisfy $t\geq\sqrt d$ for every system size.

At these times, we use uniform i.i.d. amplitude inputs, 64 independently
drawn Hamiltonians at each size, and 16 input histories per Hamiltonian.  A
500-step washout prepares the common memory state $\sigma$.  The two runs
then receive inputs $v$ and $v'$, drawn independently from the same input
distribution, followed by identical inputs:
\begin{equation}
 \Delta_0=\Phi_{v,U(t)}(\sigma)-\Phi_{v',U(t)}(\sigma),\qquad
 \Delta_{\tau-1}=\Phi_{u_{\tau-1},U(t)}\circ\cdots\circ
 \Phi_{u_1,U(t)}(\Delta_0).
\end{equation}
Here $u_1,\ldots,u_{\tau-1}$ are the shared inputs following $v$ and $v'$,
and the same $U(t)$ is used throughout both runs.
For $\tau=1$, the composition is empty and $\Delta_{\tau-1}=\Delta_0$.
Because the terminal
reset state is pure and the final evolution is unitary,
$\frac12\|\Delta_{\tau-1}\|_2^2$ equals the full-state lag-influence sample at the
observation time.

For every nonzero reachable $\Delta_{j-1}$ we compute the next-input integral in closed form,
\begin{equation}
 q_{U(t)}(\Delta_{j-1})=\int_0^1
 \frac{\|\Phi_{u,U(t)}(\Delta_{j-1})\|_2^2}{\|\Delta_{j-1}\|_2^2}\,\mathrm du,
\end{equation}
and form the influence-weighted coefficient
using the joint average over Hamiltonian realizations and paired input
histories,
\begin{equation}
 \begin{aligned}
 R_j&:=\frac{\mathbb E_{\rm joint}\|\Delta_j\|_2^2}
 {\mathbb E_{\rm joint}\|\Delta_{j-1}\|_2^2}\\
 &=\frac{\mathbb E_{\rm joint}[\|\Delta_{j-1}\|_2^2q_{U(t)}(\Delta_{j-1})]}
 {\mathbb E_{\rm joint}\|\Delta_{j-1}\|_2^2}
 =\frac{\overline{\mathcal I}_{j+1}}
 {\overline{\mathcal I}_j}.
 \end{aligned}
 \label{sm:eq:v2-numeric-eda-statistic}
\end{equation}
The equalities follow from Eq.~\eqref{sm:eq:v2-exact-influence-recursion}.
We estimate $R_j$ from the middle expression by averaging its numerator
and denominator separately over the sampled histories and Hamiltonians,
then taking their ratio.  Denote this estimate for $N$ qubits by
$R_{j,N}$.
For the GUE and every nonzero $\Delta$, $\bar q_t(\Delta)=\bar q_t$.
Averaging the fixed-spectrum expression in
Eq.~\eqref{sm:eq:v2-gue-contraction-fixed-spectrum} over the same 64 sampled
spectra gives $\bar q_{t,N}$, an estimate of $\bar q_t$ for
$N$ qubits.  The EDA discrepancy estimated over the first seven common
updates is therefore
\begin{equation}
 \varepsilon_{{\rm EDA},N}^{(7)}
 :=\max_{1\leq j\leq7}
 \left|R_{j,N}-\bar q_{t,N}\right|.
 \label{sm:eq:v2-numeric-eda-estimator}
\end{equation}
For each of $10^4$ bootstrap replicates~\cite{sup_efron1979bootstrap}, we
resample the Hamiltonian realizations and use the same indices for every
$R_{j,N}$ and for $\bar q_{t,N}$.
The main-panel bands in main-text Fig.~1(a) are pointwise $95\%$ percentile
intervals.  For the inset, a centered maximum-norm interval accounts
simultaneously for the maximization over the seven updates at each size.

\begin{table}[h!]
\centering
\caption{Finite-size EDA results.  The evolution times $t$ satisfy
$t\geq\sqrt d$.  The last column gives
$\alpha_{t,N}=\bar q_{t,N}+
\varepsilon_{{\rm EDA},N}^{(7)}$, the corresponding estimate of
the influence contraction factor $\alpha_t$.}
\label{sm:tab:v2-eda-scaling}
\begin{ruledtabular}
\begin{tabular}{ccccc}
$N$ & $t$ & $\bar q_{t,N}$
& $10^3\varepsilon_{{\rm EDA},N}^{(7)}$
& $\alpha_{t,N}$ \\
\hline
5 & 10.6066 & 0.498574 & 4.441 & 0.503014 \\
6 & 19.8000 & 0.499640 & 1.764 & 0.501404 \\
7 & 25.0316 & 0.499909 & 0.879 & 0.500788 \\
8 & 52.8000 & 0.499977 & 0.392 & 0.500370 \\
9 & 69.5793 & 0.499994 & 0.308 & 0.500302 \\
\end{tabular}
\end{ruledtabular}
\end{table}

The estimated EDA discrepancy decreases monotonically with $N$, by a
factor of about 14 between $N=5$ and $N=9$.  This trend provides finite-size
evidence consistent with the $o(1)$ requirement in the EDA.

We extend the EDA test to $j=100$ for $N=6,7,8$, using 256 independently
drawn Hamiltonians at each size, with 16 histories per Hamiltonian and
2000 Hamiltonian-bootstrap replicates.  The evolution times are
$t=20.1000,29.2742,52.8000$ for $N=6,7,8$, respectively, all satisfying
$t\geq\sqrt d$.  We propagate a normalized representation of $\Delta_j$
to maintain numerical precision.  Figure~\ref{sm:fig:v2-long-eda} shows the
retention ratios.

For all three sizes, $R_j$ remains close to $1/2$ throughout the 100 updates.
For $N=8$, the maximum estimated discrepancy from $\bar q_t$ over this
window is $2.52\times10^{-4}$.
A separate $N=5$ test at $t=10.6066$ shows late-time deviations and weight
concentration on a few slowly decaying Hamiltonians, consistent with a
finite-size effect.

\begin{figure}[t]
\centering
\includegraphics[width=0.65\textwidth]{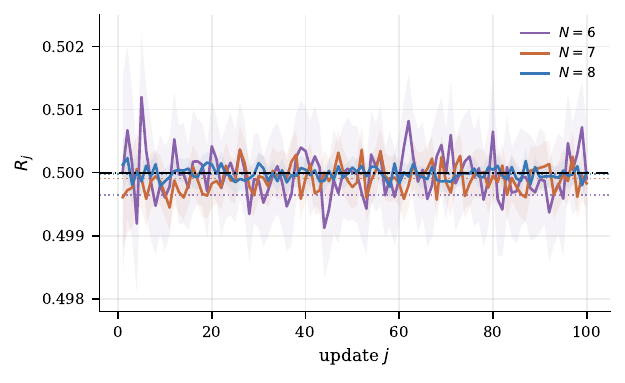}
\caption{\textbf{EDA tests over 100 updates.}
Retention ratios $R_j$ for $N=6,7,8$, using 256 Hamiltonians at each
size (solid curves).  Shading shows pointwise $95\%$ Hamiltonian-bootstrap
intervals.  Colored dotted lines are form-factor estimates of
$\bar q_t$, and the black dash-dotted line marks $1/2$.}
\label{sm:fig:v2-long-eda}
\end{figure}

Direct exponential fits of $\overline{\mathcal I}_\tau$ over
$1\leq\tau\leq12$ give decay factors close to $1/2$, confirming the
influence decay predicted by the EDA recursion.  Together, the finite-size
scaling and extended propagation tests provide quantitative numerical
validation of the EDA.

The linear-memory task $y_k=u_{k-\tau}$ in main-text Fig.~1(b) uses
$N=5,6,7$ at $t=10.6066,20.1000,29.2742$, respectively, with
$24,24,16$ Hamiltonians and two histories
per Hamiltonian.  Disjoint training, validation, and test blocks each contain
4000 inputs and are each preceded by a 500-step washout.
The readout uses the $N-1$ memory local-$Z$ features and the fixed raw-feature
penalty $\lambda_{\rm raw}=10^{-4}$.  We average the two histories at fixed
$H$, then resample Hamiltonians jointly across delays using 2000 bootstrap
replicates to obtain pointwise $95\%$ percentile intervals.
For each $N$, we fit the mean test performance to
$A_N\beta_{\rm perf,N}^{\tau-1}$ over $1\leq\tau\leq7$.
Here $\beta_{\rm perf,N}$ is the fitted factor by which the mean test
performance changes when the delay increases by one, and $A_N>0$ is the
fitted amplitude.  We average $P_\lambda$ over input histories
and Hamiltonian realizations before taking its logarithm, then fit a
straight line with a free intercept and slope.  Uncertainty is obtained
by resampling Hamiltonian realizations.  The largest fitted factor is
$\beta_{\rm perf,N}=0.447$, and all corresponding intervals lie below $1/2$.
This empirical performance decay factor differs from
$\alpha_t=\bar q_t+\varepsilon_{\rm EDA}$, which controls the influence bound.
The two factors need not coincide because the performance bound need not
be saturated.

\subsection{Quadratic no-go and linear-memory control}

The free-reservoir calculation propagates the full complex
$2^N\times2^N$ density matrix under the reset protocol.  It uses quadratic
Majorana Hamiltonians including pairing.  Their real antisymmetric coupling
matrices have independent Gaussian upper-triangular entries, and each
Hamiltonian is normalized to $2^{-N}\Tr(H_2^2)=1$.  The readout contains all Majorana
bilinears.

At $t=2$, $\tau=1$, and $\lambda_{\rm raw}=10^{-6}$, disjoint training and test
sequences of lengths $T=500,1000,2000,5000,10000$ are each preceded by a
separate 500-step washout.  The displayed $N=5$ data use 15 Hamiltonians and
two histories per Hamiltonian.  We first average the two histories at fixed
$H$; points and error bars in main-text Fig.~1(c) then show the mean and one
sample standard deviation across these 15 Hamiltonian-level values.
The estimated performance $P_\lambda$ approaches zero from below for the generic quadratic ensemble, as shown in
main-text Fig.~1(c).
To distinguish this obstruction from a loss of memory,
Fig.~\ref{sm:fig:v2-free-controls} applies the same quadratic-reservoir setting
to the linear-memory task $y_k=u_{k-\tau}$.  The small test errors show that past
inputs remain accessible to the readout.  The centered-product failure therefore
reflects the inability of quadratic dynamics and observables to generate the
required nonlinear cross-time dependence, rather than an absence of memory.

\begin{figure}[t]
\centering
\includegraphics[width=0.78\textwidth]{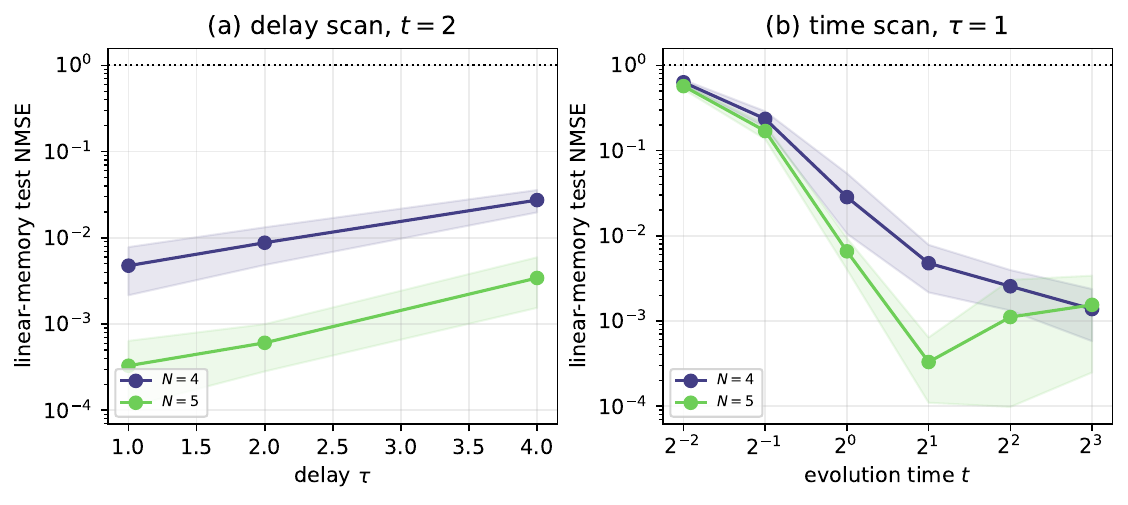}
\caption{\textbf{Linear-memory control for the free-reservoir obstruction.}
Past inputs remain predictable in the same quadratic-reservoir setting.
Both panels show $\mathrm{NMSE}_{\rm test}$ on a logarithmic scale.
Left: error versus delay at $t=2$.  Right: error versus evolution time at
$\tau=1$.  Both panels use generic quadratic Hamiltonians including pairing,
the complete Majorana-bilinear readout, $T=10000$, $N=4,5$, and
$\lambda_{\rm raw}=10^{-6}$.  Shading denotes $95\%$ bootstrap intervals
over Hamiltonian realizations.}
\label{sm:fig:v2-free-controls}
\end{figure}

\subsection{Population performance across the quadratic--quartic crossover}

The crossover calculation uses 30 independently drawn $N=6$
Sachdev--Ye--Kitaev (SYK) $\mathrm{SYK}_2$--$\mathrm{SYK}_4$ Hamiltonian
pairs.  For Majoranas $m_1,\ldots,m_{2N}$, the unnormalized Hamiltonians are
\begin{align}
 H_2^{(0)}&=\frac{\im}{2}\sum_{1\leq p<q\leq2N}
 A_{pq}m_pm_q,\nonumber\\
 H_4^{(0)}&=\sum_{1\leq p<q<r<s\leq2N}
 J_{pqrs}m_pm_qm_rm_s,
 \label{sm:eq:v2-numerical-syk-hamiltonians}
\end{align}
where all $A_{pq}$ and $J_{pqrs}$ are mutually independent real standard
Gaussian variables.  Equivalently, $A_{qp}=-A_{pq}$ defines a real
antisymmetric quadratic-coupling matrix.  We normalize the two sectors
separately as
\begin{equation}
 H_a:=
 \frac{H_a^{(0)}}{[2^{-N}\Tr\{(H_a^{(0)})^2\}]^{1/2}},
 \qquad a=2,4,
 \label{sm:eq:v2-numerical-syk-normalization}
\end{equation}
and use the interpolation $H_\theta$ defined in
Eq.~\eqref{sm:eq:v2-syk-interpolation}, with $g:=\tan\theta$.
The quadratic and pure-quartic Majorana sectors are Hilbert--Schmidt
orthogonal, so $2^{-N}\Tr(H_\theta^2)=1$ and $g$ gives the strength of the
quartic term relative to the quadratic term.  This is a generic Majorana
$\mathrm{SYK}_2+\mathrm{SYK}_4$ ensemble~\cite{garciagarcia2018chaotic,haque2019eigenstate}.  Particle-number conservation is not imposed: the
quadratic sector includes pairing, and the quartic sector includes
number-changing terms, although both sectors preserve fermion parity.  The
number-conserving complex-fermion interpolation
used in Ref.~\cite{kobayashi2026q_eoc} is a closely related,
symmetry-restricted variant.

Each Hamiltonian pair is reused across all couplings and lags.  The input is
uniform and i.i.d. under the amplitude encoding of
Eq.~\eqref{sm:eq:v2-amplitude-encoding}.  The readout is the complete set of 66
Majorana bilinears $\im m_pm_q$ with $p<q$, rescaled to the
Hilbert--Schmidt-orthonormal frame used in the analytic formulas.  We evaluate
Eq.~\eqref{sm:eq:v2-syk-interpolation} on
\begin{equation}
    g\in\{0,0.02,0.05,0.1,0.2,0.5,1,2,5,10,\infty\},
    \qquad t=2,\quad \tau\in\{4,5,6,7,8\},
    \quad \lambda_{\rm raw}=10^{-4},\quad
    \lambda=\lambda_{\rm raw}/64.
    \label{sm:eq:v2-crossover-design}
\end{equation}
Thus $g=0$ is the quadratic free-fermion endpoint, while $g=\infty$ is the
pure-quartic model in the chaotic regime, evaluated directly at $\theta=\pi/2$ rather than
approximated by a large finite $g$.
The horizontal axis in main-text Fig.~2 is linear in $\theta=\arctan g$,
with ticks labeled by $g$.
The inset in main-text Fig.~2 replots the $g=\infty$ case in the chaotic regime over the same
lag range on a logarithmic scale.  For every fixed Hamiltonian and coupling,
the stationary feature--target covariance is computed from
Lemma~\ref{sm:lem:v2-centered-product-covariance}, while the feature covariance
follows from the shared-input fixed point
Eq.~\eqref{sm:eq:v2-two-copy-fixed-point}.  Their combination in
Eq.~\eqref{sm:eq:v2-two-copy-population-gain} gives the stationary population
performance directly from the fixed-point moments.

Main-text Fig.~2 reports sample means; shading and error bars give pointwise
two-sided $95\%$ Student-$t$ intervals for the ensemble mean, using 29 degrees
of freedom and the standard error $\mathrm{SD}/\sqrt{30}$, where $\mathrm{SD}$
is the sample standard deviation across Hamiltonian realizations.  For every
displayed lag, $P_\lambda=0$ at the quadratic point and performance is enhanced at an
intermediate sampled coupling relative to the free endpoint and the sampled
chaotic regime.  In the chaotic regime at $g\gtrsim1$, it decreases
approximately exponentially over the lag window shown at the sampled
couplings $g=1,2,5,10,\infty$, consistent with rapid memory loss.
At $g=\infty$ in this regime, the task-dependent bound
$f_\lambda(\|\bm c_\tau\|^2/\sigma_y^2)$ from
Lemma~\ref{sm:lem:v2-sharp-envelope} is reasonably tight:
its sample mean is approximately $1.33$, $1.40$, and $1.45$ times
the mean performance at $\tau=4,5,6$, respectively.  Related
$\mathrm{SYK}_2$--$\mathrm{SYK}_4$ interpolations have been shown spectrally
to cross from free to chaotic dynamics and to exhibit QRC enhancement near
that boundary~\cite{kobayashi2026q_eoc}.  The sampled chaotic regime of the
$\mathrm{SYK}_2$--$\mathrm{SYK}_4$ interpolation supplies evidence for rapid
forgetting in a chaotic few-body ensemble, while the main-text GUE forgetting
theorem applies to GUE dynamics.  The two results provide complementary
evidence for rapid forgetting in the chaotic regime.  The crossover supports
intermediate-coupling enhancement for this
task family on the finite parameter grid, but does not establish a
task-independent or continuous-parameter optimum.  A sensitivity scan at
$\lambda_{\rm raw}\in\{10^{-5},10^{-4},10^{-3},10^{-2}\}$ retains a finite, nonzero sampled
maximum for every displayed lag, although the performance and the contrast between the free endpoint and
chaotic regime depend on the penalty.

\end{document}